\setlist{itemsep=1pt}
\newtheorem{lemma}{Lemma}
\newtheorem{prop}{Proposition}
\newtheorem{coro}{Corollary}
\newtheorem{theorem}{Theorem}
\theoremstyle{remark}
\newtheorem{example}{Example}
\def\Me{\mathcal M}
\def\Fe{\mathcal F}
\def\Ne{\mathcal N}
\def\Pe{\mathcal P}
\def\Se{\mathcal S}
\def\Tr{\mathrm{Tr}\,}
\def\<{\langle\,}
\def\>{\,\rangle}
\def\eexp{E_{\exp}}
\def\llog{L_{\log}}
\title{The exponential Orlicz space in quantum information geometry}
\author{Anna Jen\v cov\' a\\
Mathematical Institute, Slovak Academy of Sciences}
\begin{document}
\maketitle

\abstract{We review the construction of a quantum version of the exponential statistical
manifold over the set of all faithful normal positive functionals on a von Neumann
algebra. The construction is based on the relative entropy approach to state perturbation.
We construct a quantum version of the exponential Orlicz space  and discuss
the properties of this space and its dual with respect to  Kosaki $L_p$-spaces. We show
that the constructed manifold admits a canonical divergence satisfying a Pythagorean
relation. We also prove that the manifold structure is invariant under  sufficient
channels.}

\maketitle

\section{Introduction} One of the fundamental achievements of Information geometry is
the rigorous extension from parametric statistical models to the nonparametric case by
Pistone and Sempi, \cite{pistone1995aninfinite}, who constructed a Banach manifold structure on the set of 
probability measures equivalent to a given probability measure. The manifold  structure is
based on an Orlicz space associated to an exponential Young function
$\Phi(x)=\cosh(x)-1$. This theory has been subsequently developed in a number of works,
see e.g. 
\cite{pistone1999theexponential, cena2007exponential}. 
In this construction,  the properties of the moment generating function and its conjugate,  the Kullback Leibler divergence
(relative entropy), play a central role.

To obtain a full quantum version of the Pistone-Sempi construction
would mean to introduce an analogous  Banach manifold structure on the set of faithful normal states of a
general ($\sigma$-finite) von Neumann algebra. The problem is that known versions
of quantum Orlicz spaces are either restricted to the semifinite case
(e.g. \cite{kunze1990noncommutative, alrashed2007noncommutative}) or are
technically quite involved (\cite{labuschagne2013acrossed}) and it
is unclear how to introduce an  exponential structure on the set of states, based on these
spaces. 

Another approach using perturbation of states on the algebra $B(\mathcal H)$ of bounded
operators on a Hilbert space $\mathcal H$, was taken in \cite{grasselli2000thequantum, streater2004quantum}. Here
the manifold is modelled on the space of certain (unbounded) perturbation operators, which
is given the structure of a Banach space. In \cite{streater2004quantum}, the Banach space  is constructed  from the free energy functional, which can be seen as the
counterpart of the classical cumulant generating function. 

This idea inspired the works \cite{jencova2006aconstruction,jencova2010onquantum}, where a
definition of the exponential Orlicz space over a von Neumann algebra $\Me$ with respect
to a faithful normal state $\rho$ is proposed, using the relative entropy approach
to state perturbation. This definition has the advantage that it is based directly on the
relative entropy $S(\cdot\|\cdot)$ and its properties. In particular, the set $\Pe_\rho$ of all normal positive linear
functionals such that $S(\omega\|\rho)<\infty$ is identified with a generating cone in the
dual of the constructed exponential Orlicz space, so that a manifold structure on normal states
of $\Me$, respecting the
relative entropy,  can be introduced by immersion into a Banach space. Moreover, an
exponential 
manifold structure is obtained using perturbations of the state $\rho$ by elements of the
exponential Orlicz space and the connected components of this manifold are contained in
$\Pe_\rho$.

In the present paper, we review the construction of the exponential Orlicz space and its
dual, as defined in \cite{jencova2006aconstruction, jencova2010onquantum}.  We present the proofs in a more streamlined and precise form. 
The dual space is found 
explicitly as an Orlicz space, using the conjugate Young function. We show the relation 
of the constructed spaces to the Kosaki $L_p$-spaces. The manifold structure is 
introduced over the
positive cone of faithful positive linear functionals, rather than states,  similarly 
to the approach in \cite{ay2017information}. We  define a canonical divergence on the manifold,
satisfying a generalized Pythagorean relation. Finally, we prove the invariance of our
structures under sufficient channels, which is the counterpart of the important invariance property of 
the classical information geometry.

\section{The exponential Orlicz space}

In this section, we review the definition of the exponential Orlicz space from
\cite{jencova2006aconstruction}, construct its dual as an Orlicz space and study some of
the properties of these spaces.

\subsection{A general construction of an Orlicz space}\label{sec:orlica}

Let $X$ be a real vector space. A function $\Phi:X\to [0,\infty]$ is called a Young function if
it satisfies:
\begin{enumerate}
\item[(i)] $\Phi$ is convex,
\item[(ii)] $\Phi(x)=\Phi(-x)$ for all $x\in X$ and $\Phi(0)=0$,
\item[(iii)] if $x\ne 0$ then $\lim_{t\to \infty} \Phi(tx)=\infty$.

\end{enumerate}
For a Young function $\Phi$, put  $C_\Phi:=\{x\in X, \Phi(x)\le 1\}$ and $V_\Phi:=\{x\in \Phi(x),\ \exists s>0, \
\Phi(sx)<\infty\}$.
The set  $C_\Phi$ is absolutely convex and $V_\Phi=\bigcup_n nC_\Phi$ is the linear span of the
effective domain $\mathrm{Dom}(\Phi)=\{x\in X,\ \Phi(x)<\infty\}$. We
can define a norm in $V_\Phi$ as the Minkowski functional of $C_\Phi$:
\[
\|x\|_\Phi:=\inf\{\lambda>0,\ \Phi(\frac{x}{\lambda})\le 1\},\qquad x\in V_\Phi.
\]
The completion of $V_\Phi$ with respect to this norm will be denoted by $B_\Phi$. 

Let  $(\Omega,\Sigma,\mu)$ be a measure space and  let $X$  be the vector space 
 of measurable functions $f:\Omega\to \mathbb R$. Let
$\varphi$ be a Young function on $\mathbb R$ and put
\[
\Phi_\varphi(f):=\int_\Omega\varphi(\lvert f\rvert)d\mu.
\]
Then $B_{\Phi_\varphi}$ is the classical Orlicz space $L^\varphi(\Omega,\Sigma,\mu)$ and
$\|\cdot\|_{\Phi_\varphi}$ is the Luxemburg-Nakano norm, \cite{rao1991theory}.   As
another example, let $\Me$ be a  semifinite von Neumann algebra
 with a  faithful normal semifinite trace $\tau$. Let  $X$ be the vector
space of self-adjoint $\tau$-measurable operators and
$\Phi_\varphi(x)=\tau(\varphi(\lvert x \rvert))$, then we obtain the Orlicz space
$L^\varphi(\Me,\tau)$, \cite{kunze1990noncommutative}. In
\cite{alrashed2007noncommutative}, a version of the Orlicz space with
respect to a faithful normal state $\rho$ was also defined  by a construction of this
type. See also
\cite{labuschagne2013acrossed} in the case of a general von Neumann algebra with a faithful normal
weight.

It was shown in \cite[Prop. 2]{jencova2006aconstruction} that if $X$ is a Banach space and  $\Phi$
is continuous, then $X$ is continuously embedded in $B_\Phi$. The conjugate function
$\Phi^*$ is again a Young function such that
 $V_{\Phi^*}=\mathrm{Dom}(\Phi^*)=B_{\Phi^*}$ and we have a continuous embedding
 $B_{\Phi^*}\sqsubseteq X^*$. Moreover,  $B_{\Phi^*}=B_{\Phi}^*$, with equivalent norms.

\subsection{Basic setting and notations}

We briefly describe the setting of von Neumann algebras and noncommutative $L_p$-spaces.
For a quick overview of these topics, see \cite{hiai2021lectures}.

Let  $\Me$ be a ($\sigma$-finite) von Neumann algebra. We will denote by $\Me^*$ the dual
space of $\Me$ and by $\Me_*$ the predual, consisting of normal functionals in $\Me^*$.
The positive cones in these spaces will be denoted by $\Me^+$, $(\Me^*)^+$ and $\Me_*^+$.
An element $\rho\in \Me_*^+$ is faithful if $\rho(a)=0$ implies $a=0$, for any $a\in \Me^+$.

For $1\le p\le \infty$, we
denote the Haagerup $L_p$-space over $\Me$  by $L_p(\Me)$ and its norm  by  
$\|\cdot\|_p$. 
We will use the identification of $L_\infty(\Me)$ with $\Me$ and  $L_1(\Me)$ with $\Me_*$.
Let $h_\psi\in L_1(\Me)$ be the element corresponding to  $\psi\in \Me_*$, then we can
define the trace in $L_1(\Me)$ by  $\Tr[ h_\psi]=\psi(1)$. 

For $p,q,r\ge 1$ such that $1/p+1/q=1/r$ and $h\in L_p(\Me)$, $k\in L_q(\Me)$, we have
$hk\in L_r(\Me)$ and the H\"older inequality holds:
\[
\|hk\|_r\le \|h\|_p\|k\|_q.
\]
For $1\le p<\infty$ and $1/p+1/q=1$, the space $L_q(\Me)$ can be identified with the dual
space $L_p(\Me)^*$, with duality given by
\[
\<h,k\>=\Tr[hk],\qquad h\in L_p(\Me),\ k\in L_q(\Me).
\]
The space $L_2(\Me)$ is a Hilbert space with inner product 
\[
(h,k)= \Tr [h^*k],\qquad h,k\in L_2(\Me).
\]
We will use the representation of $\Me$ on $L_2(\Me)$ by the  left action $\lambda(a): h\mapsto ah$ for $a\in \Me$ and $h\in
L_2(\Me)$. The quadruple $(\lambda(\Me), L_2(\Me), L_2(\Me)^+,J)$, where $L_2(\Me)^+$ is
the cone of positive operators in $L_2(\Me)$ and $J$ is defined by $Jh=h^*$,  is a standard
form of $\Me$ (\cite[Thm.
3.6]{terp1981lpspaces}, \cite[Thm. 9.29]{hiai2021lectures}).   For more on the standard form see
\cite{takesaki2003theory2} or \cite[Sec. 3]{hiai2021lectures}. Any positive normal functional $\varphi\in \Me_*^+$ has a
unique vector representative $h_\varphi^{1/2}$ in the cone $L_2(\Me)^+$, that is,
\[
\varphi(a)=(h_\varphi^{1/2},ah_\varphi^{1/2}),\qquad a\in \Me.
\]

Let us now fix a faithful positive normal functional $\rho\in \Me_*^+$. The (symmetric) Kosaki
$L_p$-space with respect to $\rho$ \cite{kosaki1984applications} is defined via complex
interpolation, using the continuous embedding
\begin{equation}\label{eq:infty_embedding}
i_{\infty,\rho}: \Me\to L_1(\Me),\qquad a\mapsto h_\rho^{1/2}a h_\rho^{1/2}.
\end{equation}
Let us denote the image $i_{\infty,\rho}(\Me)$ by $L_\infty(\Me,\rho)$, with the norm
$\|i_{\infty,\rho}(a)\|_{\infty,\rho}=\|a\|$. 
The  interpolation space $C_{1/p}(L_\infty(\Me,\rho),L_1(\Me))$
\cite{bergh1976interpolation} will be denoted by
$L_p(\Me,\rho)$ and the norm by $\|\cdot\|_{p,\rho}$. The map
\begin{equation}\label{eq:pembedding}
i_{p,\rho}: L_p(\Me)\to L_1(\Me),\qquad k\mapsto h_\rho^{1/2q}kh_\rho^{1/2q}
\end{equation}
with $1/p+1/q=1$ is an isometric isomorphism of $L_p(\Me)$ onto $L_p(\Me, \rho)$ for
$1\le p\le \infty$. From the properties of complex interpolation spaces, we have for
$1\le p'\le p\le\infty$ the 
continuous embeddings $L_p(\Me,\rho)\sqsubseteq L_{p'}(\Me,\rho)\sqsubseteq L_1(\Me)$. We have $L_q(\Me,\rho)\simeq L_p(\Me,\rho)^*$ for $1\le p<\infty$
and  $1/p+1/q=1$,  with duality given by
\[
\<i_{p,\rho}(k),i_{q,\rho}(l)\>=\Tr[kl],\qquad k\in L_p(\Me),\ l\in L_q(\Me).
\]
Note also that the Kosaki $L_p$-spaces can be constructed as in Section \ref{sec:orlica},
where  $X=\Me^s:=\{a=a^*\in \Me\}$  and $\Phi(a)=\|h_\rho^{1/2p}ah_\rho^{1/2p}\|_p$,
\cite{zolotarev1982lpspaces}.

Let $\Ne$ be another  von Neumann algebra and let $T:L_1(\Me)\to L_1(\Ne)$ be a positive linear
map that preserves trace. Such a map will be called a channel. The adjoint of $T$ is a
positive unital  normal map $T^*:\Ne \to \Me$. 

Let $\rho$ be a faithful element in $\Me^+_*$. One can see
(\cite[Sec. 3.3]{jencova2018renyi}) that if $e=s(T(\rho))$ is the support projection  of
$T(\rho)$, then we have $T(\omega)=eT(\omega)e$ for all $\omega\in L_1(\Me)$, hence we may suppose that $T(\rho)$ is
faithful by replacing $\Ne$ by $e\Ne e$.

\begin{prop}\label{prop:lp_contraction}\cite{jencova2018renyi}
The restriction of a channel $T$ to $L_p(\Me,\rho)$ is  contraction $L_p(\Me,\rho)\to
L_p(\Ne, T(\rho))$, for any
$1\le p\le \infty$.

\end{prop}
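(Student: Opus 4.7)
The plan is to reduce to the two endpoints $p=1$ and $p=\infty$ and conclude by complex interpolation, since by construction $L_p(\Me,\rho)=C_{1/p}(L_\infty(\Me,\rho),L_1(\Me))$ and analogously for $\Ne$, so the extended map $T:L_1(\Me)\to L_1(\Ne)$ will restrict to a contraction on every interpolation space provided it is a contraction on both endpoints.

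The endpoint $p=1$ is essentially tautological: $L_1(\Me,\rho)$ coincides isometrically with $L_1(\Me)$ since $i_{1,\rho}$ corresponds to $1/(2q)=0$ and is the identity. A positive trace-preserving map is a contraction on $L_1$ by Jordan-decomposing any $h\in L_1(\Me)$ into its positive and negative parts, using $\|\psi\|_1=\psi(1)$ for $\psi\in\Me_*^+$, and the trace-preservation of $T$.

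The endpoint $p=\infty$ is the heart of the argument. I need to show that $T$ maps $i_{\infty,\rho}(\Me)=h_\rho^{1/2}\Me h_\rho^{1/2}$ into $i_{\infty,T(\rho)}(\Ne)$ with norm bound one. For $a\in\Me^+$ with $a\le\|a\|\cdot 1$ one has $h_\rho^{1/2}ah_\rho^{1/2}\le\|a\|\,h_\rho$, and positivity and trace-preservation of $T$ give
\[
0\le T(h_\rho^{1/2}ah_\rho^{1/2})\le\|a\|\,h_{T(\rho)}.
\]
A noncommutative Radon--Nikodym argument (of Sakai type for $L_1(\Ne)$, exploiting the faithfulness of $T(\rho)$) then produces a unique $b\in\Ne^+$ with $0\le b\le\|a\|\cdot 1$ and
\[
T(h_\rho^{1/2}ah_\rho^{1/2})=h_{T(\rho)}^{1/2}\,b\,h_{T(\rho)}^{1/2}.
\]
Defining $T^{*,\rho}(a):=b$ and extending linearly through Jordan decomposition on $\Me^s$ and then to all of $\Me$ yields a positive linear map $T^{*,\rho}:\Me\to\Ne$ satisfying the intertwining relation $T\circ i_{\infty,\rho}=i_{\infty,T(\rho)}\circ T^{*,\rho}$. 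Taking $a=1$ in the intertwining relation together with $T(h_\rho)=h_{T(\rho)}$ shows $T^{*,\rho}$ is unital, so by the Russo--Dye theorem it is a contraction $\Me\to\Ne$. Transporting this via the isometries $i_{\infty,\rho}$, $i_{\infty,T(\rho)}$ delivers the $p=\infty$ endpoint.

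With both endpoints in place, Calder\'on's theorem applied to the compatible couples $(L_\infty(\Me,\rho),L_1(\Me))$ and $(L_\infty(\Ne,T(\rho)),L_1(\Ne))$ with the common restriction of $T$ concludes the proof. The main obstacle is the $p=\infty$ endpoint, specifically the Radon--Nikodym step that extracts the bounded element $b\in\Ne$ from the operator inequality in $L_1(\Ne)$; this is essentially the construction of the Petz dual channel in the present Haagerup--$L_1$ formulation, and is where faithfulness of $T(\rho)$ (arranged in the setup preceding the proposition) is indispensable.
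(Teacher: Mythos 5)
Your proposal is correct and follows essentially the same route as the paper's source \cite{jencova2018renyi}: contraction at the endpoint $p=1$ (preadjoint of the positive unital map $T^*$), contraction at $p=\infty$ via exactly the Radon--Nikodym construction of the Petz dual $T^*_\rho$ that the paper recalls immediately after the proposition (using faithfulness of $T(\rho)$ and Russo--Dye), and then complex interpolation for the couples $(L_\infty(\Me,\rho),L_1(\Me))$ and $(L_\infty(\Ne,T(\rho)),L_1(\Ne))$. The only cosmetic remark is that for non-self-adjoint $h\in L_1(\Me)$ the Jordan decomposition alone gives the bound $2$, so the $p=1$ endpoint should be phrased, as you implicitly do at the other endpoint, via Russo--Dye applied to the positive unital adjoint $T^*:\Ne\to\Me$.
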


In the case $p=\infty$,  there is a positive linear map $T^*_\rho:
\Me\to \Ne$, defined by
\[
T(h_\rho^{1/2}ah_\rho^{1/2})=T(\rho)^{1/2}T^*_\rho(a) T(\rho)^{1/2},\qquad a\in \Me.
\]
 The map $T^*_\rho$ was introduced in \cite{petz1988sufficiency}
and is called the Petz dual of $T$ (with
respect to $\rho$). It was also proved that   $T_\rho^*$ is unital and normal, moreover, it is $n$-positive if and only
if $T$ is $n$-positive, for any $n$. Let $T_\rho: L_1(\Ne)\to L_1(\Me)$ be the preadjoint
of $T^*_\rho$.  Then  $T_\rho\circ T(\rho)=\rho$ and the Petz
dual of $T_\rho$ is $T^*$.

\subsection{Relative entropy and related functionals}\label{sec:relent}

The Araki relative entropy for $\omega,\rho\in \Me_*^+$ \cite{araki1976relative,ohya1993quantum} is defined using the
relative modular operator $\Delta_{\rho,\omega}(=\Delta_{\rho,h_\omega^{1/2}})$ as
\[
S(\omega\|\rho)= \begin{dcases} -\<\log(\Delta_{\rho,\omega})h_\omega^{1/2},h^{1/2}_\omega\> &
\text{if } s(\omega)\le s(\rho)\\
\infty& \text{otherwise}.
\end{dcases}
\]
Here  $s(\rho)$ denotes the support projection of $\rho$. Alternatively, we have the 
following variational formula due to Kosaki \cite{kosaki1986relative}:
\begin{equation}\label{eq:kosaki}
S(\omega\|\rho)=\sup_n \sup \left\{\omega(1)\log
n-\int_{1/n}^\infty(\omega(y(t)^*y(t))+t^{-1}\rho(x(t)x(t)^*)\frac{dt}{t}\right\}
\end{equation}
here the second supremum is taken over all step functions $x:(1/n,\infty)\to L$ with
finite range, $y(t)=1-x(t)$ and $L$ is a subspace in $\Me$ containing 1 which is dense in
the strong*-operator topology. 

The relative entropy $S$ is a jointly convex function 
$S:\Me_*^+\times \Me_*^+\to \mathbb R\cup\{\infty\}$,  lower semicontinuous with respect to the
$\sigma(\Me_*,\Me)$-topology. Moreover, $S$ is strictly convex in the first variable,
which can be inferred from the identity \cite[Prop. 5.22]{ohya1993quantum} 
\begin{equation}\label{eq:donald}
S(\omega\|\rho)+\sum_i S(\omega_i\|\omega)=\sum_i S(\omega_i\|\rho),\quad
\omega=\sum_{i=1}^k\omega_i,\ \omega\in \Me_*^+.
\end{equation}
Note also that since $\omega_i\le \omega$ in \eqref{eq:donald}, we have
$S(\omega_i\|\omega)\le S(\omega_i\|\omega_i)<\infty$ for all $i$.
See  \cite[Sec. 5]{ohya1993quantum} for details and a list of further important properties of $S$. 
The next statement shows the relation to the Kosaki $L_p$-space.

\begin{prop}\cite{jencova2018renyi, hiai2021quantum}\label{prop:relent_lp} Let $\omega,\rho\in \Me_*^+$ be such that $h_\omega\in L_p(\Me,\rho)$ for
some $p>1$. Then the function $f:(1,p]\to \mathbb R$, defined as 
\[
f(\alpha):=
\frac1{\alpha-1}\log\frac{\|h_\omega\|_{\alpha,\rho}^\alpha}{\omega(1)}
\]
is increasing and $\lim_{\alpha\downarrow
1}f(\alpha)=\frac1{\omega(1)}S(\omega\|\rho)$.
\end{prop}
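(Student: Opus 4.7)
The plan is to treat the two claims separately: monotonicity of $f$ will follow from a direct Riesz--Thorin interpolation argument along the Kosaki scale, and then I would identify $\lim_{\alpha\downarrow 1}f(\alpha)$ with $S(\omega\|\rho)/\omega(1)$ by producing matching upper and lower bounds from the Araki modular calculus and Kosaki's variational formula.

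For monotonicity, I would exploit that $L_\alpha(\Me,\rho)=C_{1/\alpha}(L_\infty(\Me,\rho),L_1(\Me))$ and $\|h_\omega\|_1 = \omega(1)$. Interpolating between the endpoints $L_\beta(\Me,\rho)$ and $L_1(\Me)$ at the parameter $\theta = (\beta-\alpha)/(\alpha(\beta-1))$ (determined by $1/\alpha = (1-\theta)/\beta + \theta$) yields
\[
\|h_\omega\|_{\alpha,\rho}^\alpha \le \|h_\omega\|_{\beta,\rho}^{\beta(\alpha-1)/(\beta-1)}\, \omega(1)^{(\beta-\alpha)/(\beta-1)}, \qquad 1 < \alpha \le \beta \le p.
\]
Dividing by $\omega(1)$, taking logarithms, and dividing by $\alpha-1$ rearranges this inequality to exactly $f(\alpha) \le f(\beta)$. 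In particular $\lim_{\alpha\downarrow 1} f(\alpha)$ exists; writing $g(\alpha) := \log(\|h_\omega\|_{\alpha,\rho}^\alpha/\omega(1))$, we have $g(1) = 0$ and the limit equals the right derivative $g'_+(1)$.

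For the upper bound $g'_+(1) \le S(\omega\|\rho)/\omega(1)$, I would express $\|h_\omega\|_{\alpha,\rho}^\alpha$ via the relative modular operator in Araki--Masuda form and, after a spectral cut-off $\chi_{[1/n,n]}(\Delta_{\rho,\omega})$ that places the analysis on a corner where $\log\Delta_{\rho,\omega}$ is bounded, expand $\Delta_{\rho,\omega}^{(1-\alpha)/\alpha} = 1 - \tfrac{\alpha-1}{\alpha}\log\Delta_{\rho,\omega} + O((\alpha-1)^2)$; this produces, for the truncated quantity, an expansion $\omega(1) + (\alpha-1)\,S_n + o(\alpha-1)$ with $S_n\uparrow S(\omega\|\rho)$ as $n\to\infty$ by lower semi-continuity of $S$.

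The matching lower bound is the main obstacle. Here the plan is to invoke Kosaki's variational formula \eqref{eq:kosaki}: each admissible pair $(n,x(\cdot))$ supplies a test expression lower-bounding $S(\omega\|\rho)$, while for fixed $\alpha > 1$ the same test function $x(\cdot)$, combined with the $L_\alpha$--$L_{\alpha/(\alpha-1)}$ duality and Young's inequality, produces an analogous lower bound for $\log(\|h_\omega\|_{\alpha,\rho}^\alpha/\omega(1))$ that reproduces the Kosaki test expression after dividing by $\alpha-1$ and letting $\alpha\downarrow 1$. Taking suprema then gives $\liminf_{\alpha\downarrow 1} f(\alpha) \ge S(\omega\|\rho)/\omega(1)$, completing the identification of the limit; the delicate exchange of limits in this step is handled as in \cite{jencova2018renyi, hiai2021quantum}.
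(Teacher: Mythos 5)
First, a caveat: the paper does not prove Proposition~\ref{prop:relent_lp} at all --- it imports the statement from \cite{jencova2018renyi, hiai2021quantum} --- so your attempt can only be compared with the proofs in those sources. Your monotonicity half is correct and is essentially the argument used there: since $L_\alpha(\Me,\rho)=C_\theta\bigl(L_\beta(\Me,\rho),L_1(\Me)\bigr)$ with $1/\alpha=(1-\theta)/\beta+\theta$ (this tacitly uses the reiteration theorem for the Kosaki scale, \cite{kosaki1984applications}, which you should cite explicitly), the interpolation inequality $\|h_\omega\|_{\alpha,\rho}\le\|h_\omega\|_{\beta,\rho}^{1-\theta}\,\|h_\omega\|_1^{\theta}$ together with $\|h_\omega\|_1=\omega(1)$ rearranges, exactly as you computed, to $f(\alpha)\le f(\beta)$; the existence of $\lim_{\alpha\downarrow 1}f(\alpha)$ and its identification with the right derivative $g'_+(1)$ then follow.

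The limit, however, is where the genuine gaps lie, and the first is a wrong identification. The Kosaki norm $\|h_\omega\|_{\alpha,\rho}^{\alpha}$ (the \emph{sandwiched} R\'enyi quantity) is not a matrix element of a power of $\Delta_{\rho,\omega}$: in the Araki--Masuda representation of this norm for $p=2\alpha>2$ one has a \emph{supremum} over auxiliary states $\varphi$ of expressions involving $\Delta_{\varphi,\rho}$, so the operator $\Delta_{\rho,\omega}^{(1-\alpha)/\alpha}$ that you propose to Taylor-expand simply does not occur in the object being estimated. What your cutoff-and-expand argument actually computes is the $\alpha\downarrow1$ limit of the \emph{standard} (Araki--Petz) quantity $Q_\alpha=\langle \Delta_{\rho,\omega}^{1-\alpha}h_\omega^{1/2},h_\omega^{1/2}\rangle$, a classical spectral-calculus fact (cf.\ \cite{ohya1993quantum}). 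To convert that into the desired upper bound on $f$ you need the Araki--Lieb--Thirring-type comparison $\|h_\omega\|_{\alpha,\rho}^{\alpha}\le Q_\alpha(\omega\|\rho)$ for $\alpha\in(1,2]$, which in the von Neumann setting is a separate, nontrivial interpolation theorem proved in \cite{jencova2018renyi, hiai2021quantum}; it is entirely absent from your sketch, and without it the expansion bounds the wrong divergence.

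The second gap is the lower bound, which you yourself call the main obstacle and then resolve by declaring that the delicate exchange of limits ``is handled as in \cite{jencova2018renyi, hiai2021quantum}'' --- but these are precisely the proofs you are supposed to be reconstructing, so the step is circular and the heart of the proposition remains unproved. To make your idea rigorous one must either exhibit, for each fixed Kosaki test datum in \eqref{eq:kosaki}, a lower bound on $\log\bigl(\|h_\omega\|_{\alpha,\rho}^{\alpha}/\omega(1)\bigr)$ with an error that is uniform in $\alpha$ near $1$ (none is given), or follow the route of the cited sources: the data-processing inequality for the Kosaki norms (the content of Proposition~\ref{prop:lp_contraction}) applied to channels onto commutative algebras, the elementary classical limit $\alpha\downarrow 1$, and the attainability of $S(\omega\|\rho)$ as a supremum of measured relative entropies --- which is where Kosaki's variational formula genuinely does its work. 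As it stands, your proposal establishes monotonicity of $f$ but not the identification of the limit.
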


Using the Kosaki variational formula \eqref{eq:kosaki}, the relative entropy can be readily
extended to  a function $S:(\Me^*)^+\times
(\Me^*)^+\to \mathbb R\cup\{\infty\}$. 

\begin{prop}\label{prop:relentropy_extension} Let $\rho\in \Me_*^+$ and let $\omega\in
(\Me^*)^+$. If $\omega\notin \Me_*^+$, then $S(\omega\|\rho)=\infty$.

\end{prop}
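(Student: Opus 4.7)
The plan is to exploit the Kosaki variational formula \eqref{eq:kosaki} by testing with step functions adapted to the singular part of $\omega$. By Takesaki's theorem, I would decompose $\omega=\omega_n+\omega_s$ with $\omega_n\in\Me_*^+$ normal and $\omega_s\in(\Me^*)^+$ singular. The hypothesis $\omega\notin\Me_*^+$ means $\omega_s\neq 0$, so $c:=\omega_s(1)>0$.

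The key input is the standard characterization of singular positive functionals: there exists an increasing net of projections $(e_\alpha)$ in $\Me$ with $e_\alpha\uparrow 1$ strongly and $\omega_s(e_\alpha)=0$ for every $\alpha$. Setting $q_\alpha:=1-e_\alpha$, we get $\omega_s(q_\alpha)=c$, while by normality of $\rho$, $\rho(q_\alpha)=\rho(1)-\rho(e_\alpha)\to 0$. Consequently, for each $n\in\mathbb N$ one can select a projection $p_n\in\Me$ with $\omega_s(p_n)=c$ and $\rho(p_n)<1/n^2$, which in particular forces $\omega(p_n)\geq \omega_s(p_n)=c$.

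With these projections in hand, take $L=\Me$ in \eqref{eq:kosaki} and plug in the step function $x_n:(1/n,\infty)\to\Me$ defined by $x_n(t)=p_n$ for $t\in(1/n,1]$ and $x_n(t)=1$ for $t>1$, with $y_n(t)=1-x_n(t)$. Using $p_n^2=p_n$, a direct computation gives
\begin{align*}
\int_{1/n}^\infty \omega(y_n(t)^*y_n(t))\,\frac{dt}{t} &= \omega(1-p_n)\log n,\\
\int_{1/n}^\infty t^{-1}\rho(x_n(t)x_n(t)^*)\,\frac{dt}{t} &= \rho(p_n)(n-1)+\rho(1),
\end{align*}
so the expression inside the supremum in \eqref{eq:kosaki} equals
\[
E_n = \omega(p_n)\log n - \rho(p_n)(n-1) - \rho(1) \;\geq\; c\log n - \frac{1}{n} - \rho(1).
\]
Letting $n\to\infty$ yields $S(\omega\|\rho)=\infty$.

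The only genuine obstacle is the second step: producing projections carrying the full mass of $\omega_s$ while $\rho$-null to arbitrary tolerance. This is a standard consequence of the mutual singularity of normal and singular positive functionals (\cite{takesaki2003theory2}); once that is in hand, the rest is a transparent choice of test function, with $t=1$ as the cutoff chosen so that the diverging $\log n$ term survives while both $\rho$-contributions remain bounded.
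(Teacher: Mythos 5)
Your proof is correct, but it takes a genuinely different route from the paper. The paper argues structurally: it passes to the universal enveloping von Neumann algebra $\bar\Me\cong\Me^{**}$, observes via the variational formula \eqref{eq:kosaki} (applied with $L=\pi_u(\Me)$ resp.\ $L=\Me$) that $S$ on $(\Me^*)^+$ agrees with the Araki entropy $S_{\bar\Me}$ of the canonical normal extensions, and then uses the central projection $z_0$ with $\Me_*=\Me^*z_0$: non-normality of $\omega$ gives $\bar\omega(1-z_0)>0$ while $\bar\rho(1-z_0)=0$, so $s(\bar\omega)\not\le s(\bar\rho)$ and the entropy is infinite by the support condition in Araki's definition. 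You instead work entirely inside $\Me$ with the variational formula, which in this paper is the \emph{definition} of $S(\omega\|\rho)$ for $\omega\in(\Me^*)^+$, so your computation directly evaluates the defining supremum. Your two ingredients check out: the Takesaki decomposition $\omega=\omega_n+\omega_s$ together with the standard characterization of singular positive functionals (for every nonzero projection $e$ there is a nonzero subprojection $f\le e$ with $\omega_s(f)=0$; a maximal orthogonal family of such projections sums to $1$, and its finite partial sums give exactly your net $e_\alpha\uparrow 1$ with $\omega_s(e_\alpha)=0$), and normality of $\rho$ (faithfulness is not needed, and indeed not assumed in the statement) giving $\rho(1-e_\alpha)\to 0$. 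Your test function $x_n$ is an admissible finite-range step function whose tail value $1$ keeps both integrals convergent, the computations are right ($(1-p_n)^2=1-p_n$, $\int_{1/n}^1 t^{-2}\,dt=n-1$, $\int_1^\infty t^{-2}\,dt=1$), and $E_n\ge c\log n-1/n-\rho(1)\to\infty$ forces $S(\omega\|\rho)=\infty$. What each approach buys: the paper's argument is shorter and reuses bidual machinery it needs anyway --- it is the same normal/singular dichotomy, packaged into the central projection $z_0$ --- while yours is self-contained at the level of the variational formula, avoids the identification $\bar S=S$, and is quantitative, exhibiting the rate $c\log n$ at which the supremum diverges, with $c=\omega_s(1)$ the mass of the singular part.
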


\begin{proof} \cite{ohya1993quantum} There is another way to define the relative entropy
for elements in $(\Me^*)^+$. Let $(\pi_u,\mathcal H_u)$ be the universal representation of $\Me$ and let
$\bar \Me=\pi_u(\Me)''\cong \Me^{**}$ be the universal enveloping von Neumann algebra of
$\Me$
\cite{takesaki2002theory1}. Then each element of the  dual space $\omega\in \Me^*$ has a
unique extension to a normal functional 
$\bar\omega$ on $\bar \Me$ and $\Me^*$  is the predual of $\bar \Me$. Moreover, there is a central
projection $z_0\in \bar \Me$ such that $\Me_*=\Me^*z_0$. We can define for
$\omega,\rho\in (\Me^*)^+$ the relative entropy $\bar S: (\Me^*)^+\times (\Me^*)^+\to
\mathbb R$ as
\[
\bar S(\omega\|\rho):=S_{\bar \Me}(\bar\omega\|\bar\rho)
\]
(here $S_{\bar \Me}$ is computed with respect to the von Neumann algebra $\bar \Me$).
Now note that we may use $L=\pi_u(\Me)$ in the variational formula  for
$S_{\bar\Me}$ and $L=\Me$ for $S$, which implies that $\bar S=S$. Let $\rho\in \Me_*^+$,
$\omega\in (\Me^*)^+$ and 
assume that  $\omega$ is not normal. Then we must have $\bar \omega(1-z_0)>0$ but 
$\bar \rho(1-z_0)=0$, so that $s(\bar \omega)\not\le s(\bar \rho)$. By definition of the relative entropy, this
implies that $\bar S(\omega\|\rho)=\infty$.  

\end{proof}

From now on, let us fix a faithful normal functional  $\rho\in \Me_*^+$. Let $\Me^s$ denote the real vector subspace of self-adjoint elements of $\Me$. Then
$\Me^s$ is closed in $\Me$ and its Banach space dual is the space $(\Me^*)^s$ of all linear
functionals  $\varphi\in \Me^*$ satisfying $\varphi(a^*)=\overline{\varphi(a)}$, $a\in
\Me$. Note that we have $(\Me^*)^s=(\Me^*)^+-(\Me^*)^+$. Similarly, $\Me_*^s=(\Me^*)^s\cap \Me_*$ is the predual of $\Me^s$
 and $\Me_*^s=\Me_*^+-\Me_*^+$.
Let us define
the function $F_\rho:(\Me^*)^s\to \mathbb R$ by
\[
F_\rho(\omega):=\begin{dcases} S(\omega\|\rho)-\omega(1) &\text{if } \omega\in (\Me^*)^+\\
\infty & \text{otherwise}.
\end{dcases}
\]
We also define the sets
\[
\Se_C:=\{\omega\in (\Me^*)^s,\ F_\rho(\omega)\le C\},\ C\in \mathbb R, \qquad \mathcal
P_\rho:=\{\omega\in (\Me^*)^s,\ F_\rho(\omega)<\infty\}.
\]
In other words,  $\Pe_\rho$ is the effective domain of $F_\rho$.  Note that we  have $\Se_C\subseteq \Pe_\rho\subseteq \Me_*^+$, by Proposition
\ref{prop:relentropy_extension}. The next proposition lists some important properties of
the function 
$F_\rho$ and these sets.

\begin{prop}\label{prop:Frho_properties}
\begin{enumerate}
\item[(i)] $F_\rho:(\Me^*)^s\to \mathbb R$ is strictly convex and lower semicontinuous in
the $\sigma((\Me^*)^s,\Me^s)$ topology.
\item[(ii)] We have the inequalities
\[
F_\rho(\omega)\ge \omega(1)(\log\frac{\omega(1)}{\rho(1)}-1)\ge -\rho(1).
\]
The first inequality becomes an equality if and only if $\omega=\lambda\rho$ for some
$\lambda\ge 0$. In particular, $F_\rho(\omega)=-\rho(1)$ if and only if $\omega=\rho$.

\item[(iii)] For any $C\in \mathbb R$, $\mathcal S_C$
is  convex and  compact in both the $\sigma((\Me^*)^s,\Me^s)$  and the
$\sigma(\Me^s_*,\Me^s)$-topology.
\item[(iv)] The set $\mathcal P_\rho$ 
is a face of the cone $\Me_*^+$, containing $L_p(\Me,\rho)^+$ for any $1<p\le \infty$.

\end{enumerate}

\end{prop}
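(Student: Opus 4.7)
For (i), strict convexity of $F_\rho$ reduces to the already-noted strict convexity of $S(\cdot\|\rho)$ on $\Me_*^+$, since $\omega\mapsto\omega(1)$ is linear and the value $+\infty$ outside the convex cone $(\Me^*)^+$ preserves any nontrivial convex combination inequality. For lower semicontinuity, I would first observe that $(\Me^*)^+=\bigcap_{a\in\Me^+}\{\omega:\omega(a)\ge 0\}$ is $\sigma((\Me^*)^s,\Me^s)$-closed, so it suffices to prove lsc of $S(\cdot\|\rho)$ on $(\Me^*)^+$. For that I would apply the Kosaki variational formula \eqref{eq:kosaki}: for each fixed $n$ and step function $x$ with finite range in $\Me$, the bracketed expression equals $\omega(1)\log n-\omega(A)-\rho(B)$ for some finite sums $A,B\in\Me^+$. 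This is continuous affine in $\omega$ for $\sigma((\Me^*)^s,\Me^s)$, so the supremum is lsc. Subtracting the continuous $\omega(1)$ preserves this.

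For (ii), I would invoke the Araki bound $S(\omega\|\rho)\ge\omega(1)\log(\omega(1)/\rho(1))$ with equality iff $\omega=\lambda\rho$ for some $\lambda\ge 0$, and subtract $\omega(1)$ to get the first inequality. Minimizing the one-variable function $t\mapsto t(\log(t/\rho(1))-1)$ on $[0,\infty)$ yields minimum $-\rho(1)$ at $t=\rho(1)$, giving the second inequality. Combining the two equality conditions forces $\omega=\rho$ as the unique minimizer of $F_\rho$.

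For (iii), convexity of $\Se_C$ comes from (i). By (ii), the constraint $F_\rho(\omega)\le C$ gives $\omega(1)(\log(\omega(1)/\rho(1))-1)\le C$, which bounds $\omega(1)$ by some $M_C<\infty$ since the scalar function diverges at infinity; as the norm of a positive functional on $\Me$ is its value at $1$, this makes $\Se_C$ norm-bounded in $(\Me^*)^s=(\Me^s)^*$. Together with the $\sigma((\Me^*)^s,\Me^s)$-closedness from (i), Banach--Alaoglu yields compactness in that topology. Since $\Se_C\subseteq\Pe_\rho\subseteq\Me_*^+$ by the inclusions noted after the definition, and the subspace topology inherited on $\Me_*^s$ from $\sigma((\Me^*)^s,\Me^s)$ coincides with $\sigma(\Me_*^s,\Me^s)$, compactness in the second topology follows.

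For (iv), to verify that $\Pe_\rho$ is a face, take $\omega\in\Pe_\rho$ with $\omega=\omega_1+\omega_2$, $\omega_i\in\Me_*^+$, and apply \eqref{eq:donald}:
\[
S(\omega_1\|\rho)+S(\omega_2\|\rho)=S(\omega\|\rho)+S(\omega_1\|\omega)+S(\omega_2\|\omega).
\]
The right-hand side is finite thanks to the remark following \eqref{eq:donald} (which gives $S(\omega_i\|\omega)\le S(\omega_i\|\omega_i)<\infty$), and each summand on the left is bounded below by $-\rho(1)$ via (ii), so each is finite; closure of $\Pe_\rho$ under positive scaling follows from the scaling relation $S(\lambda\omega\|\rho)=\lambda S(\omega\|\rho)+\lambda\omega(1)\log\lambda$. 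For the inclusion $L_p(\Me,\rho)^+\subseteq\Pe_\rho$ when $1<p\le\infty$, I apply Proposition~\ref{prop:relent_lp}: if $\omega(1)>0$ and $h_\omega\in L_p(\Me,\rho)$, then $S(\omega\|\rho)/\omega(1)=\lim_{\alpha\downarrow 1}f(\alpha)\le f(p)<\infty$, and $\omega=0$ is trivial. The main technical obstacle is the lower semicontinuity in (i): one must ensure the inner functionals in the Kosaki formula remain continuous for the full weak-$*$ topology $\sigma((\Me^*)^s,\Me^s)$, not merely for a coarser predual topology, which is exactly why the restriction to step functions with finite range matters.
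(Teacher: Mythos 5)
Your proof is correct and follows essentially the same route as the paper: the paper compresses (i)--(ii) into ``the variational formula and properties of $S$'', and your expansions --- weak*-lower semicontinuity as a supremum of weak*-continuous affine functionals via Kosaki's formula \eqref{eq:kosaki} over the weak*-closed cone $(\Me^*)^+$, together with the normalized Araki bound for (ii) --- are precisely the intended details, while your (iii) and (iv) (boundedness of $\omega(1)=\|\omega\|$ plus Banach--Alaoglu and passage to the restricted topology on $\Me_*^s$, then the identity \eqref{eq:donald} with $S(\omega_i\|\omega)<\infty$ and Proposition \ref{prop:relent_lp}) match the paper's argument step for step. Your additional touches (closure of $\Pe_\rho$ under positive scaling, and the explicit use of the bound $F_\rho\ge-\rho(1)$ to conclude that each term on the left of \eqref{eq:donald} is finite) are harmless refinements that the paper leaves implicit.
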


\begin{proof} The proof of (i)-(ii) follows from the variational formula and properties
of $S$.
For the proof of (iii), let $\omega\in \Se_C$, then by (ii),
\[
\omega(1)(\log\frac{\omega(1)}{\rho(1)}-1)\le F_\rho(\omega)\le C.
\]
This implies that $\omega(1)=\|\omega\|$ must be bounded over $\Se_C$.
Since $\sigma((\Me^*)^s,\Me^s)$ is the weak*-topology on $(\Me^*)^s$ and $\Se_C$ is closed by (i), this implies  that
$\Se_C$ is compact. But   $\Se_C\subseteq \Me^s_*$, so that  it is also compact in the
$\sigma(\Me^s_*,\Me^s)$-topology. 

To prove the last statement (iv), let $\omega=\sum_i\omega_i$ for some $\omega_i\in
\Me_*^+$. Then by \eqref{eq:donald}
\[
F_\rho(\omega)+\sum_i S(\omega_i\|\omega)=\sum_i F_\rho(\omega_i).
\]
Since $S(\omega_i\|\omega)<\infty$, $\omega\in \mathcal P_\rho$ if and only if all
$\omega_i\in \mathcal P_\rho$, so that $\Pe_\rho$ is a face of $\Me_*^+$. The fact that
 $L_p(\Me,\rho)^+\subseteq \mathcal P_\rho$ for $1<p\le \infty$ follows from Proposition
 \ref{prop:relent_lp}.

\end{proof}

We also have the following important monotonicity property.

\begin{prop}\label{prop:monotone_F} Let $T:L_1(\Me)\to L_1(\Ne)$ be a channel. Then
\[
F_{T(\rho)}(T(\omega))\le F_\rho(\omega),\qquad \omega \in (\Me^*)^s. 
\]

\end{prop}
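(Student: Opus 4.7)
The plan is to reduce the inequality to the classical monotonicity of the Araki relative entropy under channels, combined with the trace-preserving property of $T$.

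First I would dispose of the cases in which the right-hand side is already $+\infty$. If $\omega\notin(\Me^*)^+$, then by definition $F_\rho(\omega)=\infty$ and the inequality is vacuous (in particular, we need not even interpret $T(\omega)$ here). If $\omega\in(\Me^*)^+\setminus\Me_*^+$, then Proposition \ref{prop:relentropy_extension} forces $S(\omega\|\rho)=\infty$, so again $F_\rho(\omega)=\infty$. Both subcases are therefore automatic.

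The substantive case is $\omega\in\Me_*^+$, for which $T(\omega)\in\Ne_*^+$ by positivity of $T$. (If $T(\rho)$ fails to be faithful, I would first pass to $s(T(\rho))\Ne s(T(\rho))$, as indicated just before Proposition \ref{prop:lp_contraction}, so that $F_{T(\rho)}$ is well defined.) The key ingredient is the monotonicity
\[
S(T(\omega)\|T(\rho))\le S(\omega\|\rho),
\]
which is the classical Uhlmann--Araki property of the relative entropy. In the present framework it can be read off directly from the Kosaki variational formula \eqref{eq:kosaki}: for any strong*-dense unital subspace $L'\subseteq\Ne$ and step function $x:(1/n,\infty)\to L'$, one pulls back via the unital normal map $T^*:\Ne\to\Me$ to obtain a step function with values in the strong*-dense unital subspace $T^*(L')\cdot\mathbb{C}\, 1\subseteq\Me$; unitality preserves the relation $y(t)=1-x(t)$, and the $2$-positivity of $T^*$ (automatic for a channel) provides the Schwarz inequalities needed to dominate the two quadratic terms of the functional in \eqref{eq:kosaki}. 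Combining the monotonicity with trace preservation,
\[
T(\omega)(1)=\Tr[T(h_\omega)]=\Tr[h_\omega]=\omega(1),
\]
and subtracting gives
\[
F_{T(\rho)}(T(\omega))=S(T(\omega)\|T(\rho))-T(\omega)(1)\le S(\omega\|\rho)-\omega(1)=F_\rho(\omega),
\]
which is the claim.

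The only nontrivial point is the monotonicity of $S$ itself; everything else is unpacking definitions. Since this monotonicity is standard and is the natural counterpart of the monotonicity of the classical Kullback--Leibler divergence under Markov kernels, I would not anticipate any genuine obstacle — the only bookkeeping I would be careful about is the initial case distinction so that $T(\omega)$ need not be interpreted outside $\Me_*^+$.
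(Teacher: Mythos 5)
Your reduction of the statement to the monotonicity $S(T(\omega)\|T(\rho))\le S(\omega\|\rho)$, together with trace preservation $T(\omega)(1)=\omega(1)$ and the case analysis via Proposition~\ref{prop:relentropy_extension}, is exactly the right skeleton. But your justification of the monotonicity itself contains a genuine gap: you invoke ``the $2$-positivity of $T^*$ (automatic for a channel)'', and in this paper that is false. A channel is defined here as a merely \emph{positive} trace-preserving map $T:L_1(\Me)\to L_1(\Ne)$, so $T^*$ is positive and unital but need not be $2$-positive or even a Schwarz map (the transpose map is the standard counterexample). The paper is careful about this distinction: where $2$-positivity is actually needed it is hypothesized explicitly (Theorems~\ref{thm:suff} and~\ref{thm:sufficient}, Corollary~\ref{coro:suf}). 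Your Kosaki-formula argument needs the Schwarz inequalities $T^*(y)^*T^*(y)\le T^*(y^*y)$ and $T^*(x)T^*(x)^*\le T^*(xx^*)$ to dominate the two quadratic terms, and these fail for general positive unital maps. So your argument proves the proposition only for Schwarz (e.g.\ $2$-positive) channels, which is strictly weaker than the statement.

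This is precisely why the paper does not argue via the variational formula but instead cites \cite[Sec.~3.1 and Thm.~4.1]{jencova2021renyi}: the data processing inequality for the relative entropy under \emph{positive} trace-preserving maps is a nontrivial, relatively recent result (the von Neumann algebraic analogue of the M\"uller-Hermes--Reeb theorem), obtained there through the monotonicity of sandwiched R\'enyi divergences proved by complex interpolation in Kosaki $L_p$-spaces, followed by the limit $\alpha\downarrow 1$ in the spirit of Proposition~\ref{prop:relent_lp}. A minor additional slip, harmless in substance: your claim that $T^*(L')\cdot\mathbb{C}\,1$ is strong*-dense in $\Me$ is unjustified in general, but also unnecessary --- since Kosaki's formula \eqref{eq:kosaki} is valid for any admissible dense subspace, one may take $L=\Me$ on the domain side and simply observe that the pulled-back step functions $t\mapsto T^*(x(t))$ are $\Me$-valued, so each term is dominated by the supremum defining $S(\omega\|\rho)$. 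The fatal issue remains the unjustified Schwarz inequality, not the density.
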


\begin{proof} The statement follows from \cite[Sec. 3.1 and Thm. 4.1]{jencova2021renyi}.

\end{proof}

We  next study the Legendre-Fenchel conjugate of $F_\rho$
with respect to the dual pair $((\Me^*)^s,\Me^s)$, see e.g. \cite{ekeland1999convex} or
\cite{zalinescu2002convex} for more information on Legendre-Fenchel duality of convex
functions. Namely, we define the function $C_\rho$ on $\Me^s$ as
\begin{equation}\label{eq:frho_conjugate}
C_\rho(a):= F_\rho^*(a)=\sup_{\omega\in (\Me^*)^s} \omega(a)-F_\rho(\omega), \qquad a\in
\Me^s.
\end{equation}
The proof of the following result can be obtained from \cite[Sec.
12]{ohya1993quantum}. See also \cite{petz1988variational, donald1990relative}. We collect the
arguments for convenience of the reader.

\begin{theorem}\label{thm:conjugate}  The supremum in \eqref{eq:frho_conjugate} is attained
at a unique functional  $\rho^a\in \Me_*^+$. The element $\rho^a$ is faithful  and  $C_\rho(a)=\rho^a(1)$.
Moreover, we have the equality
\begin{equation}\label{eq:perturbed_entropy}
\omega(a)+S(\omega\|\rho^a)=S(\omega\|\rho),\qquad \omega\in \Me_*^+
\end{equation}
and the chain rule 
\begin{equation}\label{eq:chin_rule}
\rho^{a+b}=(\rho^a)^b,\quad C_\rho(a+b)=C_{\rho^a}(b),\qquad a,b\in \Me^s.
\end{equation}

\end{theorem}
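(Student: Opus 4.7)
The plan is to first establish existence and uniqueness of the maximizer, then recognize it as Araki's perturbed state, from which the remaining assertions follow almost formally from the key identity \eqref{eq:perturbed_entropy}.

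For existence, the function $\omega \mapsto \omega(a) - F_\rho(\omega)$ is upper semicontinuous on $(\Me^*)^s$ in the $\sigma((\Me^*)^s,\Me^s)$-topology by Proposition~\ref{prop:Frho_properties}(i). The lower bound in Proposition~\ref{prop:Frho_properties}(ii), together with $|\omega(a)| \le \|a\|\,\omega(1)$ for $\omega \in \Me_*^+$, shows that $F_\rho(\omega) - \omega(a) \to +\infty$ as $\omega(1) \to \infty$. Hence any maximizing sequence is eventually contained in some sublevel set $\Se_C$, which is $\sigma((\Me^*)^s,\Me^s)$-compact by Proposition~\ref{prop:Frho_properties}(iii); upper semicontinuity then yields a maximizer, and strict convexity of $F_\rho$ forces it to be unique. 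Since $F_\rho$ is finite only on $\Pe_\rho \subseteq \Me_*^+$, this maximizer, which I denote $\rho^a$, is a normal positive functional.

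Next I would invoke Araki's theory of perturbed states, as developed in \cite[Ch.~12]{ohya1993quantum} (see also \cite{petz1988variational, donald1990relative}), to establish the identity \eqref{eq:perturbed_entropy}: $\omega(a) + S(\omega\|\rho^a) = S(\omega\|\rho)$ for all $\omega \in \Me_*^+$. Granted this, rewriting
\[
\omega(a) - F_\rho(\omega) = \omega(1) - S(\omega\|\rho^a) = -F_{\rho^a}(\omega)
\]
and applying Proposition~\ref{prop:Frho_properties}(ii) with $\rho^a$ in place of $\rho$ yields $C_\rho(a) = \rho^a(1)$, with the supremum attained exactly at $\omega = \rho^a$. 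Faithfulness of $\rho^a$ follows by substituting $\omega = \rho$ in \eqref{eq:perturbed_entropy}: this gives $S(\rho\|\rho^a) = -\rho(a) < \infty$, forcing $s(\rho^a) \ge s(\rho) = 1$. The chain rule then follows by applying \eqref{eq:perturbed_entropy} twice and subtracting: for every $\omega \in \Me_*^+$,
\[
\omega(b) + S(\omega\|\rho^{a+b}) = S(\omega\|\rho^a),
\]
which is the defining identity of $(\rho^a)^b$, so $\rho^{a+b} = (\rho^a)^b$ by uniqueness and hence $C_\rho(a+b) = \rho^{a+b}(1) = (\rho^a)^b(1) = C_{\rho^a}(b)$.

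The main technical obstacle is \eqref{eq:perturbed_entropy} itself. A direct variational argument would use first-order optimality along convex combinations $(1-t)\rho^a + t\omega$, together with the additivity \eqref{eq:donald}, but rigorously differentiating $S(\cdot\|\rho)$ on a general ($\sigma$-finite) von Neumann algebra is delicate and essentially amounts to Araki's relative modular construction. The cleanest route, which I would adopt, is to define $\rho^a$ independently via the Araki perturbation $h_\rho^{1/2} \mapsto (\Delta_\rho + a)\text{-}\mathrm{flow}$ and then verify both that it satisfies the variational formula and that it coincides with the maximizer found above, referring to \cite[Sec.~12]{ohya1993quantum} for the analytic details.
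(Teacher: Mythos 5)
Your proposal is correct, and its crux is the same as the paper's: the whole theorem is delegated to Araki's perturbation theory, which supplies the identity \eqref{eq:perturbed_entropy}. The paper simply exhibits the maximizer from the start, defining $\rho^a$ as the vector state of Araki's perturbed vector $\xi(a)\in L_2(\Me)^+$ (given by the expansional series) and citing Araki's Theorem~3.10 for \eqref{eq:perturbed_entropy}; faithfulness is then part of that construction, and existence, uniqueness, $C_\rho(a)=\rho^a(1)$, and the chain rule all drop out of the identity together with Proposition~\ref{prop:Frho_properties}(ii), exactly as in your second paragraph. Where you genuinely differ is your preliminary direct-method existence argument (coercivity from Proposition~\ref{prop:Frho_properties}(ii), weak*-compactness of the sublevel sets $\Se_C$ from (iii), upper semicontinuity from (i), strict convexity for uniqueness): the paper has no need for it in this theorem, since the maximizer is explicit, but the argument is sound, and it is in essence the technique the paper deploys later in Theorem~\ref{thm:extended_conjugate} for perturbations $h\in \eexp(\Me,\rho)$, where no explicit construction exists. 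What it buys you here is a priori existence and uniqueness of the maximizer in $\Me_*^+$ without perturbation theory; what it cannot buy is the theorem itself, since \eqref{eq:perturbed_entropy} and faithfulness still require Araki's machinery, as you acknowledge. Three small points. First, your chain rule by subtracting two instances of \eqref{eq:perturbed_entropy} is equivalent to the paper's substitution of $\rho^b$ for $\rho$, but you should note explicitly that the subtraction is also legitimate when $S(\omega\|\rho)=\infty$, because then \eqref{eq:perturbed_entropy} forces $S(\omega\|\rho^a)=\infty$ and $S(\omega\|\rho^{a+b})=\infty$ as well, so the identity $\omega(b)+S(\omega\|\rho^{a+b})=S(\omega\|\rho^a)$ holds trivially. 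Second, you invoke Proposition~\ref{prop:Frho_properties}(ii) with $\rho^a$ in place of $\rho$ before establishing that $\rho^a$ is faithful; this is harmless (faithfulness comes with the construction, since $\xi(a)$ is cyclic and separating, or can be moved earlier via your own observation that $S(\rho\|\rho^a)=-\rho(a)<\infty$ forces $s(\rho^a)\ge s(\rho)=1$), but the order should be fixed. Third, your description of the perturbed vector as a ``$(\Delta_\rho+a)$-flow'' is loose; the correct object is the expansional series for $\xi(a)$ (perturbation of $\log\Delta_\rho$ by $a$, reducing to $\rho^a=\exp(\log\rho+a)$ on $B(\mathcal H)$), as written in the paper.
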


\begin{proof} Let $a\in \Me^s$ and let $\xi(a)$ denote the perturbed vector
\cite{araki1973relative}
\[
\xi(a)=\sum_{n=0}^\infty
\int_0^{1/2}dt_1\int_0^{t_1}dt_2\dots\int_0^{t_n}dt_n\Delta_\rho^{t_n}a\Delta_\rho^{t_{n-1}-t_n}a\dots\Delta_\rho^{t_1-t_2}ah_\rho^{1/2}.
\]
Then $\xi(a)\in L_2(\Me)^+$ and the functional $\rho^a\in \Me_*^+$ given by
$(\xi(a),\cdot\xi(a))$ is faithful.  By \cite[Thm. 3.10]{araki1977relative}, $\rho^a$
satisfies \eqref{eq:perturbed_entropy}. It follows that for $\omega\in \Pe_\rho$,
\[
\omega(a)-F_\rho(\omega)=-F_{\rho^a}(\omega)\le \rho^a(1),
\]
with equality if and only if $\omega=\rho^a$ (Proposition \ref{prop:Frho_properties}
(iii)). By replacing $\rho$ by $\rho^b$ in
\eqref{eq:perturbed_entropy}, we obtain
\[
\omega(a+b)+S(\omega\|(\rho^b)^a)=\omega(b)+S(\omega\|\rho^b)=S(\omega\|\rho),
\]
which implies the chain rule \eqref{eq:chin_rule}.

\end{proof}

\begin{example} Let $\Me=B(\mathcal H)$, the algebra of bounded operators on a Hilbert space
$\mathcal H$. The functional $\rho$ is represented as a density operator $\rho\in
B(\mathcal H)^+$ with finite trace, such that $\rho(a)=\mathrm{tr}[\rho a]$, ($\mathrm{tr}$ being the
usual trace on $B(\mathcal H)$). One can see that in this case,
\[
\rho^a= \exp(\log \rho +a).
\]

\end{example}

The following result is obtained from \cite[Prop. 5.3 and 5.4]{ekeland1999convex}.

\begin{lemma}\label{lemma:crho_gateaux} The function $C_\rho$ is Gateaux differentiable,
with Gateaux derivative at $b\in \Me^s$ given by  $C'_\rho(b)=\rho^b$. For $a,b\in \Me^s$, we have
\[
C_\rho(a)-C_\rho(b)\ge \rho^b(a-b).
\]

\end{lemma}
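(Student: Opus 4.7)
Both statements will follow from Theorem~\ref{thm:conjugate} combined with the standard Legendre-Fenchel machinery for the dual pair $(\Me^s, (\Me^*)^s)$. The inequality is nothing but the subgradient inequality for the convex function $C_\rho$ at $b$, and it falls out directly from the attainment of the supremum in \eqref{eq:frho_conjugate}. Namely, Theorem~\ref{thm:conjugate} gives the equality $C_\rho(b) = \rho^b(b) - F_\rho(\rho^b)$, while by the definition of the conjugate $C_\rho(a) \ge \rho^b(a) - F_\rho(\rho^b)$ for every $a \in \Me^s$. Subtracting these two yields $C_\rho(a) - C_\rho(b) \ge \rho^b(a-b)$.

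The inequality just proved says $\rho^b \in \partial C_\rho(b)$. For Gateaux differentiability, the plan is first to show that the subdifferential is in fact the singleton $\{\rho^b\}$, and then to invoke the convex-analytic fact from \cite[Prop.~5.3, 5.4]{ekeland1999convex} that a continuous convex function with singleton subdifferential at $b$ is Gateaux differentiable there with derivative equal to that subgradient. For the first step, if $\omega \in \partial C_\rho(b)$, then $\omega(a) - C_\rho(a) \le \omega(b) - C_\rho(b)$ for all $a \in \Me^s$; since $F_\rho$ is convex and $\sigma((\Me^*)^s, \Me^s)$-lower semicontinuous by Proposition~\ref{prop:Frho_properties}(i), the Fenchel-Moreau theorem applies and gives $F_\rho(\omega) = \sup_a[\omega(a) - C_\rho(a)] \le \omega(b) - C_\rho(b)$. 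Combined with the Young inequality $\omega(b) \le F_\rho(\omega) + C_\rho(b)$, this forces equality, so $\omega$ attains the supremum defining $C_\rho(b)$, and by the uniqueness part of Theorem~\ref{thm:conjugate} we conclude $\omega = \rho^b$.

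The main obstacle, modest as it is, is the continuity hypothesis needed to pass from a singleton subdifferential to Gateaux differentiability. This is however automatic in our setting: Theorem~\ref{thm:conjugate} supplies the perturbed functional $\rho^a \in \Me_*^+$ for every $a \in \Me^s$, with $C_\rho(a) = \rho^a(1) < \infty$, so $C_\rho$ is a convex function that is finite on the entire Banach space $\Me^s$ and is therefore continuous (in fact locally Lipschitz). With continuity in hand, the cited convex-analytic results yield $C'_\rho(b) = \rho^b$.
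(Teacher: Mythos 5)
Your overall route is the same as the paper's: the paper proves this lemma precisely by an appeal to \cite[Prop. 5.3 and 5.4]{ekeland1999convex}, and your two main steps are carried out correctly. The subgradient inequality does follow from attainment of the supremum in \eqref{eq:frho_conjugate}, and your identification $\partial C_\rho(b)=\{\rho^b\}$ is the intended use of Prop. 5.4: the Fenchel--Moreau step is legitimate here because Proposition \ref{prop:Frho_properties}(i) supplies exactly the $\sigma((\Me^*)^s,\Me^s)$-lower semicontinuity needed for biconjugation in this dual pair, so $C_\rho^*=F_\rho^{**}=F_\rho$, and the uniqueness part of Theorem \ref{thm:conjugate} then pins down the subgradient.

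There is, however, a genuine flaw in your continuity step. On an infinite-dimensional Banach space, a convex function that is finite everywhere need \emph{not} be continuous: any discontinuous linear functional on $\Me^s$ (which exists by a Hamel-basis construction) is convex, finite everywhere, and nowhere continuous. So the inference ``finite on the entire Banach space, therefore continuous (in fact locally Lipschitz)'' is invalid as stated, and \cite[Prop. 5.3]{ekeland1999convex} genuinely requires continuity at $b$. The gap is easily closed in either of two ways. First, $C_\rho=F_\rho^*$ is a pointwise supremum of the norm-continuous affine functions $a\mapsto \omega(a)-F_\rho(\omega)$, hence lower semicontinuous, and a lower semicontinuous convex function finite on all of a Banach space is continuous (indeed locally Lipschitz) by \cite[Cor. 2.5]{ekeland1999convex}, via a Baire category argument. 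Second, and more directly, Proposition \ref{prop:Frho_properties}(ii) gives $\omega(a)-F_\rho(\omega)\le \|a\|\,\omega(1)-\omega(1)(\log\tfrac{\omega(1)}{\rho(1)}-1)$ for $\omega\in(\Me^*)^+$, and maximizing the right-hand side over $t=\omega(1)\ge 0$ yields $C_\rho(a)\le \rho(1)e^{\|a\|}$; thus $C_\rho$ is bounded above on norm balls, which gives local Lipschitz continuity by \cite[Cor. 2.4]{ekeland1999convex}. With the continuity argument repaired, the rest of your proof stands and coincides with the paper's.
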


\subsection{The exponential Young function and its dual}\label{sec:young}

We now introduce  a conjugate pair of Young functions on the Banach spaces $\Me^s$ and
$(\Me^*)^s$.   Define
\begin{align*}
\Phi_\rho(a)&:=\frac12(C_\rho(a)+C_\rho(-a))-\rho(1),\qquad a\in \Me^s.\\
\Psi_\rho(\psi)&:=\frac12 \inf_{\substack{\omega_\pm\in (\Me^*)^+ \\ 2\psi=\omega_+-\omega_-}}
\left[F_\rho(\omega_+)+F_\rho(\omega_-)\right]+\rho(1),\qquad \psi\in
(\Me^*)^s.
\end{align*}

It was proved in \cite{jencova2006aconstruction} that $\Phi_\rho$ is a strictly convex and
continuous Young function $\Me^s\to \mathbb R$. We now look at the properties of
$\Psi_\rho$.

\begin{lemma}\label{lemma:dual_young} $\Psi_\rho$ is a strictly convex and weak*-lower
semicontinuous function on $(\Me^*)^s$, with effective domain
\[
\mathrm{Dom}(\Psi_\rho)=\{\psi\in (\Me^*)^s,\
\Psi_\rho(\psi)<\infty\}=\Pe_\rho-\Pe_\rho\subseteq \Me_*^s.
\]
\end{lemma}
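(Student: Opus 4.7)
The plan is to identify $\Psi_\rho$ as the Legendre--Fenchel conjugate $\Phi_\rho^*$ with respect to the dual pair $(\Me^s,(\Me^*)^s)$. Since $\Phi_\rho(a)+\rho(1)=\tfrac12 C_\rho(a)+\tfrac12 C_\rho(-a)$ and $C_\rho=F_\rho^*$ is finite and continuous on the Banach space $\Me^s$ (being a finite convex function there), the Fenchel--Rockafellar formula expressing the conjugate of a sum as the infimal convolution of conjugates applies. Combined with the elementary scaling identity, this yields
\[
\Phi_\rho^*(\psi)=\inf_{\psi=\psi_++\psi_-}\bigl[\tfrac12 F_\rho(2\psi_+)+\tfrac12 F_\rho(-2\psi_-)\bigr]+\rho(1).
\]
Substituting $\omega_\pm=\pm 2\psi_\pm$, and noting that the finiteness of $F_\rho$ forces $\omega_\pm\in(\Me^*)^+$ by Proposition~\ref{prop:relentropy_extension}, recovers exactly the formula defining $\Psi_\rho$. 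Hence $\Psi_\rho=\Phi_\rho^*$, and both convexity and $\sigma((\Me^*)^s,\Me^s)$-lower semicontinuity are automatic, since $\Psi_\rho$ is then a pointwise supremum of weak*-continuous affine functionals.

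For the effective domain I would argue directly from the defining infimum. If $\Psi_\rho(\psi)<\infty$, some decomposition satisfies $F_\rho(\omega_\pm)<\infty$, hence $\omega_\pm\in\Pe_\rho$ and $\psi=\tfrac12(\omega_+-\omega_-)\in\Pe_\rho-\Pe_\rho$. Conversely, since $\Pe_\rho$ is a face (hence a sub-cone, closed under positive scalar multiplication) of $\Me_*^+$ by Proposition~\ref{prop:Frho_properties}(iv), any $\psi=\alpha-\beta$ with $\alpha,\beta\in\Pe_\rho$ admits the feasible decomposition $\omega_+=2\alpha$, $\omega_-=2\beta$, giving $\Psi_\rho(\psi)<\infty$. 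The inclusion $\Pe_\rho-\Pe_\rho\subseteq\Me_*^s$ is immediate from $\Pe_\rho\subseteq\Me_*^+$.

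Strict convexity is the main technical point and the expected obstacle. The plan is to first show that the infimum in the definition of $\Psi_\rho(\psi)$ is attained for every $\psi\in\mathrm{Dom}(\Psi_\rho)$: a minimizing sequence $(\omega_+^n,\omega_-^n)$ stays in a joint sublevel set of $(\omega_+,\omega_-)\mapsto F_\rho(\omega_+)+F_\rho(\omega_-)$, which is weak*-compact by Proposition~\ref{prop:Frho_properties}(iii); the affine constraint $\omega_+-\omega_-=2\psi$ is weak*-closed; and $F_\rho$ is weak*-lsc, so any cluster point is optimal. Given $\psi_1\ne\psi_2$ in $\mathrm{Dom}(\Psi_\rho)$ with attained optima $(\omega_\pm^1)$ and $(\omega_\pm^2)$, one cannot have both $\omega_+^1=\omega_+^2$ and $\omega_-^1=\omega_-^2$, so the midpoint decomposition $\tfrac12(\omega_\pm^1+\omega_\pm^2)$ for $\tfrac12(\psi_1+\psi_2)$ yields a strict inequality in at least one coordinate by strict convexity of $F_\rho$ (Proposition~\ref{prop:Frho_properties}(i)). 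This gives $\Psi_\rho(\tfrac12(\psi_1+\psi_2))<\tfrac12(\Psi_\rho(\psi_1)+\Psi_\rho(\psi_2))$, and combined with convexity produces strict convexity on the effective domain.
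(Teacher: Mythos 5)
Your proposal is correct, but it runs the paper's logic in reverse, and the difference is worth spelling out. The paper proves weak*-lower semicontinuity of $\Psi_\rho$ \emph{directly}: for a net $\psi_i\to\psi$ with $\Psi_\rho(\psi_i)\le c$ it takes $\varepsilon$-optimal decompositions, traps them in the weak*-compact set $\Se_{K_\varepsilon}$ via Proposition \ref{prop:Frho_properties} (ii)--(iii), and passes along a subnet to a limit decomposition of $2\psi$; only afterwards, in Proposition \ref{prop:dual_young}, is $\Psi_\rho$ identified with $\Phi_\rho^*$, and there the lower semicontinuity just proved is an essential input, because the paper invokes the \emph{unqualified} sum rule of \cite[Cor. 2.3.5]{zalinescu2002convex}, which yields $\Phi_\rho^*$ only as the weak*-lsc hull of the infimal convolution. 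You instead invoke the \emph{qualified} (Fenchel--Rockafellar) exact sum rule, legitimate here since $\tfrac12 C_\rho(\cdot)$ and $\tfrac12 C_\rho(-\cdot)$ are finite and continuous on all of $\Me^s$, together with $C_\rho^*=F_\rho^{**}=F_\rho$ (which needs Proposition \ref{prop:Frho_properties} (i)); this gives $\Psi_\rho=\Phi_\rho^*$ first and lower semicontinuity for free, so in effect you prove the Lemma and Proposition \ref{prop:dual_young} in one stroke, and the exactness of the qualified sum rule even hands you the attainment of the infimum that you later re-derive by compactness. Two small repairs: continuity of $C_\rho$ does not follow from finiteness and convexity alone in infinite dimensions --- use that $C_\rho=F_\rho^*$ is norm-lsc as a supremum of continuous affine functionals, and a finite lsc convex function on a Banach space is continuous (or cite the continuity of $\Phi_\rho$, which the paper takes from \cite{jencova2006aconstruction}); and it is the definition of $F_\rho$ itself, not Proposition \ref{prop:relentropy_extension}, that forces $\omega_\pm\in(\Me^*)^+$ --- Proposition \ref{prop:relentropy_extension} is what upgrades these to normal functionals and yields $\mathrm{Dom}(\Psi_\rho)\subseteq\Me_*^s$. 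On strict convexity your account is actually more complete than the paper's one-line assertion: an infimal convolution of strictly convex functions need not be strictly convex unless the infima are attained, and your attainment argument (each marginal of a minimizing sequence lies in some $\Se_C$ by the lower bound of Proposition \ref{prop:Frho_properties} (ii), so the compactness of (iii) applies, with the affine constraint and lower semicontinuity passing to the limit) supplies exactly the detail the paper leaves implicit --- using, as you note, the same compactness machinery the paper deploys for lower semicontinuity. Likewise your use of positive homogeneity of $\Pe_\rho$ (available from the face property of Proposition \ref{prop:Frho_properties} (iv) together with the scaling identity $S(\lambda\omega\|\rho)=\lambda S(\omega\|\rho)+\lambda\omega(1)\log\lambda$) to absorb the factors of $2$ in the domain computation is sound and matches what the paper treats as "quite clear".
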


\begin{proof} It is quite clear that  $\Psi_\rho(\psi)$ is finite if and only
if $\psi=\omega_+-\omega_-$ for some $\omega_\pm\in \Pe_\rho$. Further,  strict convexity
of $F_\rho$ implies 
that $\Psi_\rho$ is strictly convex as well. For the last statement we have to show
that  for
any $c>0$, the set $\{\psi\in (\Me^*)^s,\ \Psi_\rho(\psi)\le c\}$ is weak*-closed. So assume that
$(\psi_i)$ is a net in $(\Me^*)^s$  such that $\Psi_\rho(\psi_i)\le c$ and let $\psi_i\to
\psi$ in the weak*-topology. 
For each $\varepsilon>0$ and for all $i$ there are some 
functionals $\psi_{i,\pm}^\varepsilon\in \Me_*^+$ such that
$2\psi_i=\psi_{i,+}^\varepsilon-\psi_{i,-}^\varepsilon$ and 
\[
\frac12[F_\rho(\psi_{i,+}^\varepsilon)+
F_\rho(\psi_{i,-}^\varepsilon)]+\rho(1)\le c+\varepsilon.
\]
Using Proposition \ref{prop:Frho_properties} (ii), we obtain that
$\psi^\varepsilon_{i,\pm}\in \mathcal S_{K_\epsilon}$ with
$K_\epsilon=2(c+\varepsilon)-\rho(1)$.
 By Proposition 
\ref{prop:Frho_properties} (iii), $\Se_{K_\varepsilon}$ is weak*-compact, so that there
is a subnet $(\psi_j)$  and some $\psi^\varepsilon_\pm\in \mathcal M_*^+$ such that
$\psi^\varepsilon_{j,\pm}\to \psi^\varepsilon_\pm$. We therefore have
$\psi^\varepsilon_+-\psi^\varepsilon_-=\lim
\psi^\varepsilon_{j,+}-\psi^\varepsilon_{j,-}=2\psi$ and by weak*-lower
semicontinuity of $F_\rho$,
\[
\Psi_\rho(\psi)\le
\frac12[F_\rho(\psi^\varepsilon_+)+F_\rho(\psi^\varepsilon_-)]+\rho(1)\le \lim\inf_j
\frac12[F_\rho(\psi^\varepsilon_{j,+})+F_\rho(\psi^\varepsilon_{j,-})]+\rho(1)\le c+\epsilon.
\]
Since this holds for all $\varepsilon>0$, we have $\Psi_\rho(\psi)\le c$.

\end{proof}

\begin{prop}\label{prop:dual_young}
$\Psi_\rho$ is the Legendre-Fenchel conjugate of $\Phi_\rho$, with respect to the
dual pair $(\Me^s,(\Me^*)^s)$. In particular, $\Psi_\rho$ is a Young function on
$(\Me^s)^*$.
\end{prop}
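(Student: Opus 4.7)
The plan is to compute the Legendre-Fenchel conjugate $\Phi_\rho^*$ directly from the definition of $\Phi_\rho$ and check that it equals $\Psi_\rho$; once this is established, the remaining Young function axioms for $\Psi_\rho$ follow almost automatically from conjugate duality.

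The first step is to decompose $\Phi_\rho = g_+ + g_-$, where $g_\pm(a) := \tfrac12 C_\rho(\pm a) - \tfrac12 \rho(1)$. Both summands are convex and continuous on $\Me^s$ since $C_\rho$ is (it is even Gateaux differentiable by Lemma \ref{lemma:crho_gateaux}, and $\Phi_\rho$ itself is continuous by \cite{jencova2006aconstruction}). Using the elementary rules $(\alpha f)^*(\psi) = \alpha f^*(\psi/\alpha)$ for $\alpha > 0$ and $(f(-\cdot))^*(\psi) = f^*(-\psi)$ together with the identity $C_\rho = F_\rho^*$ from Theorem \ref{thm:conjugate}, a direct computation yields
\[
g_\pm^*(\psi) = \tfrac12 F_\rho(\pm 2\psi) + \tfrac12 \rho(1), \qquad \psi \in (\Me^*)^s.
\]

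The second step is to invoke the Fenchel--Rockafellar formula for the conjugate of a sum,
\[
\Phi_\rho^*(\psi) = \inf_{\psi_+ + \psi_- = \psi}\bigl[g_+^*(\psi_+) + g_-^*(\psi_-)\bigr],
\]
which is valid (with infimum rather than its lower semicontinuous hull) because $g_+$ is everywhere finite and continuous on $\Me^s$, so the usual qualification condition is met at every point of $\mathrm{dom}(g_-) = \Me^s$. The change of variables $\omega_+ := 2\psi_+$, $\omega_- := -2\psi_-$ transforms the constraint into $\omega_+ - \omega_- = 2\psi$, and since $F_\rho(\omega) = \infty$ for $\omega \notin (\Me^*)^+$ (Proposition \ref{prop:relentropy_extension}), the effective minimization is over $\omega_\pm \in (\Me^*)^+$. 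The resulting expression is precisely $\Psi_\rho(\psi)$, giving $\Phi_\rho^* = \Psi_\rho$.

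Finally, the Young function properties of $\Psi_\rho$ on the dual space follow readily. Convexity and symmetry are already in Lemma \ref{lemma:dual_young}, and $\Psi_\rho(0) = 0$ follows from $\Psi_\rho(0) = -\inf_a \Phi_\rho(a) = -\Phi_\rho(0) = 0$ (using $\Phi_\rho \ge 0$, which follows from $\tfrac12(C_\rho(a) + C_\rho(-a)) \ge C_\rho(0) = \rho(1)$ by convexity of $C_\rho$). The growth condition $\lim_{t\to\infty}\Psi_\rho(t\psi) = \infty$ for $\psi \ne 0$ is obtained from convexity and $\Psi_\rho(0) = 0$ as soon as one exhibits some $t_0 > 0$ with $\Psi_\rho(t_0\psi) > 0$; existence of such $t_0$ is forced by biconjugation $\Phi_\rho^{**} = \Phi_\rho$, for otherwise $\Phi_\rho(x) \ge \sup_{t>0} t\psi(x)$ would contradict the finiteness of $\Phi_\rho$.

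The main obstacle is the Fenchel--Rockafellar step: the formula must hold with the infimum realized (not merely its lsc hull), and this hinges on having honest continuity of one summand at a point of the other's effective domain. This is where the already-known continuity of $\Phi_\rho$ on all of $\Me^s$ is essential — without it, one would only obtain $\Psi_\rho$ as the $\sigma((\Me^*)^s,\Me^s)$-lower semicontinuous regularization of the infimum, and equality with $\Phi_\rho^*$ would have to be argued separately.
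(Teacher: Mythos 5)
Your proof is correct, and it establishes the key identity $\Phi_\rho^*=\Psi_\rho$ by a genuinely different mechanism than the paper, even though the skeleton is shared: both decompose $\Phi_\rho$ into the two halves of $C_\rho(\pm\,\cdot)$ (the paper's $D_\rho(a)=C_\rho(-a)$ is your $g_-$ up to constants), conjugate each piece via $C_\rho^*=F_\rho$, and recognize $\Psi_\rho$ as the resulting infimal convolution. The paper invokes \cite[Cor.~2.3.5]{zalinescu2002convex}, which a priori identifies $\Phi_\rho^*$ only with the weak*-lower semicontinuous hull of that infimal convolution, and then removes the hull by appealing to the weak*-lower semicontinuity of $\Psi_\rho$ established in Lemma~\ref{lemma:dual_young} (whose proof rests on the compactness of the sublevel sets $\Se_C$). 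You instead use the Fenchel--Rockafellar exactness theorem under the continuity qualification --- one summand finite and continuous on all of $\Me^s$ --- which gives the infimal convolution exactly, with the infimum attained, and makes Lemma~\ref{lemma:dual_young} dispensable for this proposition; your closing paragraph correctly identifies that without this qualification one would be forced back onto precisely the paper's route. Two smaller differences: the paper obtains the Young-function axioms by citing \cite[Lemma 3.4]{jencova2006aconstruction}, whereas you verify them directly (soundly; in the growth argument biconjugation is not even needed, since the Fenchel--Young inequality $\Phi_\rho(a)\ge t\psi(a)-\Psi_\rho(t\psi)$ already yields the contradiction), and note that symmetry of $\Psi_\rho$, which you attribute to Lemma~\ref{lemma:dual_young}, is not stated there --- it is immediate from the definition by swapping $\omega_+$ and $\omega_-$, or from evenness of $\Phi_\rho$.

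Two justifications should be tightened. First, the computation $g_\pm^*(\psi)=\tfrac12 F_\rho(\pm 2\psi)+\tfrac12\rho(1)$ uses $C_\rho^*=F_\rho$, not merely $C_\rho=F_\rho^*$ from Theorem~\ref{thm:conjugate}; the biconjugation $F_\rho^{**}=F_\rho$ requires properness, convexity and $\sigma((\Me^*)^s,\Me^s)$-lower semicontinuity of $F_\rho$, i.e.\ Proposition~\ref{prop:Frho_properties}(i) --- this is exactly the first sentence of the paper's proof and should be stated. Second, your parenthetical justification of the continuity of $C_\rho$ via Gateaux differentiability (Lemma~\ref{lemma:crho_gateaux}) does not work: in infinite dimensions a finite convex function can be Gateaux differentiable at a point, with continuous derivative, yet fail to be norm continuous there (take $\ell^2$ for a discontinuous linear functional $\ell$). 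The conclusion is nevertheless available: $C_\rho$ is a conjugate function, hence norm-lower semicontinuous, finite-valued and convex on the Banach space $\Me^s$, so it is continuous by \cite[Cor.~2.5]{ekeland1999convex}; alternatively, combine the continuity of $\Phi_\rho$ quoted from \cite{jencova2006aconstruction} with the affine lower bound $C_\rho(b)\ge \rho(b)+\rho(1)$ to get local upper boundedness of $C_\rho$. With these two repairs your argument is complete.
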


\begin{proof} Since $F_\rho$ is weak*-lower semicontinuous, we see that
$C^*_\rho=F^{**}_\rho=F_\rho$. Let $D_\rho$ be given by  $D_\rho(a)=C_\rho(-a)$ for $a\in
\Me^s$, then $D_\rho^*(\psi)=C_\rho^*(-\psi)$ for $\psi\in (\Me^s)^*$. 
By  \cite[Cor. 2.3.5]{zalinescu2002convex}  and the fact that
$\Psi_\rho$ is weak*-lower semicontinuous, we obtain  $\Psi_\rho=\Phi_\rho^*$, so that
$\Psi_\rho$  is a Young  function on $(\Me^*)^s$ by  \cite[Lemma
3.4]{jencova2006aconstruction}.

\end{proof}

\subsection{The spaces $\eexp(\Me,\rho)$ and $\llog(\Me,\rho)$}

Using the Young functions ${\Phi_\rho}$ and ${\Psi_\rho}$, we construct the
corresponding  Banach spaces $B_{\Phi_\rho}$ and
$B_{\Psi_\rho}$ as in  Section \ref{sec:orlica}. 
The following is a consequence of the results of Section \ref{sec:young} and  \cite[Prop. 2]{jencova2006aconstruction}.

%

\begin{prop}\label{prop:embeddings1} We have $V_{\Phi_\rho}=\Me^s$ and
$B_{\Psi_\rho}=V_{\Psi_\rho}=\Pe_\rho-\Pe_\rho$. Moreover, $B_{\Psi_\rho}= B_{\Phi_\rho}^*$
(with equivalent norms) and we have the continuous embeddings
\[
\Me^s\sqsubseteq B_{\Phi_\rho},\qquad B_{\Psi_\rho}\sqsubseteq \Me_*^s.
\]

\end{prop}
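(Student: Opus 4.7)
The plan is to reduce the statement to the general Orlicz-space result cited in \cite[Prop.~2]{jencova2006aconstruction}, whose hypotheses I must verify for the pair $(\Phi_\rho,\Psi_\rho)$ on the Banach space $\Me^s$. First I would check that $V_{\Phi_\rho}=\Me^s$ and that $\Phi_\rho$ is continuous. By Theorem~\ref{thm:conjugate}, the perturbed functional $\rho^a$ is a genuine (faithful) element of $\Me_*^+$ for every $a\in\Me^s$, so $C_\rho(a)=\rho^a(1)$ is finite. Therefore $\Phi_\rho(a)=\tfrac12(C_\rho(a)+C_\rho(-a))-\rho(1)$ is finite on all of $\Me^s$, which gives $\mathrm{Dom}(\Phi_\rho)=V_{\Phi_\rho}=\Me^s$. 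Moreover $\Phi_\rho$ is convex, and a convex function that is finite everywhere on a Banach space is automatically continuous (standard local-boundedness argument on the interior of the effective domain), so $\Phi_\rho$ is continuous on $\Me^s$.

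With these hypotheses in place, I would invoke \cite[Prop.~2]{jencova2006aconstruction} together with the identification $\Phi_\rho^*=\Psi_\rho$ established in Proposition~\ref{prop:dual_young}. This immediately delivers the continuous embedding $\Me^s\sqsubseteq B_{\Phi_\rho}$, the chain of equalities $V_{\Psi_\rho}=\mathrm{Dom}(\Psi_\rho)=B_{\Psi_\rho}$, the norm-equivalent identification $B_{\Psi_\rho}=B_{\Phi_\rho}^*$, and a continuous embedding $B_{\Psi_\rho}\sqsubseteq (\Me^s)^*=(\Me^*)^s$.

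What remains is to describe $\mathrm{Dom}(\Psi_\rho)$ explicitly and to refine the target of the last embedding from $(\Me^*)^s$ to $\Me_*^s$. Both are handled by Lemma~\ref{lemma:dual_young}, which states $\mathrm{Dom}(\Psi_\rho)=\Pe_\rho-\Pe_\rho\subseteq\Me_*^s$. Combined with the previous step, this yields $B_{\Psi_\rho}=\Pe_\rho-\Pe_\rho$, and since the image of the embedding $B_{\Psi_\rho}\sqsubseteq(\Me^*)^s$ lies in the norm-closed subspace $\Me_*^s$, the embedding factors through and refines to $B_{\Psi_\rho}\sqsubseteq\Me_*^s$ with the inherited norm.

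The only genuinely non-routine step is the continuity of $\Phi_\rho$, which reduces to standard convex analysis once finiteness of $C_\rho$ on $\Me^s$ is in hand via Theorem~\ref{thm:conjugate}; every other assertion is bookkeeping built on top of the cited proposition. If \cite[Prop.~2]{jencova2006aconstruction} were not available, the hard step would be the duality $B_{\Phi_\rho}^*=B_{\Psi_\rho}$, which typically requires a Hahn--Banach separation argument applied to the absolutely convex unit ball $C_{\Phi_\rho}$ in order to characterize its polar in terms of the conjugate Young function $\Psi_\rho$.
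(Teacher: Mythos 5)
Your overall route is exactly the paper's: the proposition is presented there as a direct consequence of \cite[Prop.~2]{jencova2006aconstruction} applied to the continuous Young function $\Phi_\rho$ on the Banach space $X=\Me^s$, combined with Proposition~\ref{prop:dual_young} (giving $\Psi_\rho=\Phi_\rho^*$) and Lemma~\ref{lemma:dual_young} (giving $\mathrm{Dom}(\Psi_\rho)=\Pe_\rho-\Pe_\rho\subseteq\Me_*^s$). Your bookkeeping reproducing this, including the refinement of the target of the embedding $B_{\Psi_\rho}\sqsubseteq(\Me^*)^s$ to the norm-closed subspace $\Me_*^s$, is correct.

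There is, however, a genuine flaw in the one hypothesis you set out to verify yourself: your claim that ``a convex function that is finite everywhere on a Banach space is automatically continuous'' is false in infinite dimensions. Any discontinuous linear functional is convex and everywhere finite; the standard local-boundedness argument you allude to requires boundedness above on some ball as an \emph{input}, which finiteness alone does not supply. The statement becomes true once you add lower semicontinuity: a lower semicontinuous convex function finite on all of a Banach space is continuous, by \cite[Cor.~2.5]{ekeland1999convex} --- the very result the paper invokes for this purpose in the proof of Theorem~\ref{thm:Llog_properties}. Lower semicontinuity is free here, since $C_\rho$ is by definition a pointwise supremum of the norm-continuous affine functionals $a\mapsto\omega(a)-F_\rho(\omega)$. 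Alternatively, you can obtain local boundedness directly from Proposition~\ref{prop:Frho_properties}~(ii): for $\omega\in\Me_*^+$ and $a\in\Me^s$ one has $\omega(a)-F_\rho(\omega)\le \|a\|\,\omega(1)-\omega(1)\bigl(\log\tfrac{\omega(1)}{\rho(1)}-1\bigr)$, and maximizing the right-hand side over $t=\omega(1)$ yields $C_\rho(a)\le\rho(1)e^{\|a\|}$, so $\Phi_\rho$ is bounded above on norm balls and hence continuous. (You could also simply have cited the paper's own statement, recalled before Lemma~\ref{lemma:dual_young}, that continuity and strict convexity of $\Phi_\rho$ were proved in \cite{jencova2006aconstruction}.) With this step repaired, your argument is complete and coincides with the intended proof.
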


Let us  now look at the  case when
$\Me$ is commutative. Since  $\rho$ is faithful, $\Me$  can be identified with the space 
$L_\infty(\Omega,\Sigma,\rho)$ where $\rho$ is a  finite measure on $(\Omega,\Sigma)$. Let
$\phi:\mathbb R\to \mathbb R$, $\phi(x)=\cosh(x)-1$ 
and let $\psi$ be its conjugate, then $\psi$ satisfies the $\Delta_2$ condition
$\psi(2u)\le K\psi(u)$ for some $K>0$. The exponential Orlicz space  $L^\phi(\Omega,\Sigma,\rho)$ is  the dual space of $L^\psi(\Omega,\Sigma,\rho)$.
Since the measure $\rho$ is finite, we have $L_\infty(X,\Sigma,\rho)\subseteq
L^\phi(X,\Sigma,\rho)$ and one can see 
 that the norm  obtained from our construction
coincides with  the Luxemburg-Nakano norm in $L^\phi(X,\Sigma,\rho)$. Hence 
$B_{\Phi_\rho}$
coincides with  the closure $E^\phi(X,\Sigma,\rho)$ of $L_\infty(X,\Sigma,\rho)$ in $L^\phi(X,\Sigma,\rho)$. We then have
\[
L^\psi(X,\Sigma,\rho)= E^\phi(X,\Sigma,\rho)^*= B_{\Psi_\rho}
\]
and $L^\phi(X,\Sigma,\rho)$ coincides with the second dual 
$B_{\Phi_\rho}^{**}$, see  \cite{rao1991theory} for details. These
facts were also pointed out in \cite{grasselli2010dualv1}.
It is therefore reasonable to identify the noncommutative counterpart of $L^\psi$ with the
space $B_{\Psi_\rho}$, while the noncommutative exponential Orlicz space should be
identified with $B_{\Phi_\rho}^{**}=B_{\Psi_\rho}^*$. Nevertheless, we will work with the
more tractable space $B_{\Phi_\rho}$, which is a strict subset of $B^{**}_{\Phi_\rho}$ in
general.

Let us denote $\eexp(\Me,\rho):= B_{\Phi_\rho}$, with the norm
$\|\cdot\|_{\exp,\rho}:=\|\cdot\|_{\Phi_\rho}$ and  $L_{\log}(\Me,\rho):=B_{\Psi_\rho}$, with the norm
$\|\cdot\|_{\log,\rho}:=\|\cdot\|_{\Psi_\rho}$. In the rest of this section, we will
identify $\Me_*$ with $L_1(\Me)$, so that $\Me_*^s$ is identified with
the space $L_1(\Me)^s$ of self-adjoint elements and $\Me_*^+$ with the cone
$L_1(\Me)^+$ of positive elements in $L_1(\Me)$. 

\begin{theorem}\label{thm:Llog_properties}
\begin{enumerate}
\item[(i)] $L_{\log}(\Me,\rho)=\Pe_\rho-\Pe_\rho$ is dense in  $L_1(\Me)^s$ and we have 
\[
L_p(\Me,\rho)^s\sqsubseteq L_{\log}(\Me,\rho)\sqsubseteq L_1(\Me)^s, \qquad 1<p\le \infty.
\]

\item[(ii)] $L_{\log}(\Me,\rho)^+:= L_{\log}(\Me,\rho)\cap L_1(\Me)^+=\Pe_\rho$ is  a
closed convex cone in $L_{\log}(\Me,\rho)$.

\item[(iii)] Let $\psi\in L_{\log}(\Me,\rho)$. Then $\|\psi\|_{\log,\rho}\le 1$ if and
only if there are some $\omega_\pm \in \mathcal P_\rho$ such that
$\psi=\frac12(\omega_+-\omega_-)$ and 
\[
F_\rho(\omega_+)+F_\rho(\omega_-)\le 2-2\rho(1).
\]

\end{enumerate}

\end{theorem}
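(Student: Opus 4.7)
For parts (i) and (ii) I would rely directly on the results already at hand. Proposition~\ref{prop:embeddings1} identifies $\llog(\Me,\rho)$ with $\Pe_\rho-\Pe_\rho$ and gives the continuous embedding into $\Me_*^s\simeq L_1(\Me)^s$. The inclusion $L_p(\Me,\rho)^s\subseteq\llog(\Me,\rho)$ is immediate from Proposition~\ref{prop:Frho_properties}(iv), since any self-adjoint $k\in L_p(\Me,\rho)$ splits as $k_+-k_-$ with $k_\pm\in L_p(\Me,\rho)^+\subseteq\Pe_\rho$; continuity of the embedding then follows from the closed graph theorem, using that both spaces embed continuously into $L_1(\Me)^s$. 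Density of $\llog(\Me,\rho)$ in $L_1(\Me)^s$ is inherited from the density of $L_\infty(\Me,\rho)$ in $L_1(\Me)$. For (ii), the nontrivial inclusion $\llog(\Me,\rho)\cap L_1(\Me)^+\subseteq\Pe_\rho$ uses the face property: writing $\psi\in\llog(\Me,\rho)\cap L_1(\Me)^+$ as $\omega_+-\omega_-$ with $\omega_\pm\in\Pe_\rho$ gives $\omega_+=\psi+\omega_-$, and since $\Pe_\rho$ is a face of $\Me_*^+$ and both summands on the right are positive, we must have $\psi\in\Pe_\rho$. Closedness is then automatic from continuity of the embedding $\llog(\Me,\rho)\sqsubseteq L_1(\Me)^s$ and closedness of the positive cone in $L_1(\Me)^s$.

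For (iii) the strategy is first to show $\|\psi\|_{\log,\rho}\le 1\iff \Psi_\rho(\psi)\le 1$, and then to prove that the infimum defining $\Psi_\rho(\psi)$ is attained on $\Pe_\rho\times\Pe_\rho$. The first equivalence is a standard consequence of lower semicontinuity: when $\|\psi\|_{\log,\rho}<1$ one uses convexity of $\Psi_\rho$ together with $\Psi_\rho(0)=0$; at the boundary $\|\psi\|_{\log,\rho}=1$ one picks $\lambda_n\searrow 1$ with $\Psi_\rho(\psi/\lambda_n)\le 1$ and invokes weak*-lower semicontinuity of $\Psi_\rho$ from Lemma~\ref{lemma:dual_young}. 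The backward direction of the stated equivalence is then immediate from the definition of $\Psi_\rho$. For the forward direction, take a minimizing sequence $\omega_\pm^{(n)}\in(\Me^*)^+$ with $2\psi=\omega_+^{(n)}-\omega_-^{(n)}$ and $F_\rho(\omega_+^{(n)})+F_\rho(\omega_-^{(n)})\to c\le 2-2\rho(1)$. Because $F_\rho(\omega_\mp^{(n)})\ge-\rho(1)$ by Proposition~\ref{prop:Frho_properties}(ii), both sequences are bounded above in $F_\rho$ and thus lie in a common sublevel set $\Se_K$, which is weak*-compact by Proposition~\ref{prop:Frho_properties}(iii). Passing to a subnet along which both $\omega_+^{(n)}\to\omega_+$ and $\omega_-^{(n)}\to\omega_-$ weak*, the identity $2\psi=\omega_+-\omega_-$ is preserved in the limit, and weak*-lower semicontinuity of $F_\rho$ yields $F_\rho(\omega_+)+F_\rho(\omega_-)\le c\le 2-2\rho(1)$; in particular $\omega_\pm\in\Pe_\rho$, which delivers the required decomposition.

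The genuinely delicate step is this attainment argument in (iii), where the coercivity bound from Proposition~\ref{prop:Frho_properties}(ii), the weak*-compactness from Proposition~\ref{prop:Frho_properties}(iii), and the lower semicontinuity of $F_\rho$ must be combined, and where care is needed to ensure that a single subnet works for both $\omega_+^{(n)}$ and $\omega_-^{(n)}$ simultaneously. Everything else is a matter of unpacking Propositions~\ref{prop:embeddings1} and \ref{prop:Frho_properties}.
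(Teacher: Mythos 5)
Your proposal is correct, and for parts (ii) and (iii) it is essentially the paper's own argument: the face property of $\Pe_\rho$ from Proposition~\ref{prop:Frho_properties}(iv) handles the cone identification, and the paper's proof of (iii) is exactly your compactness-plus-lower-semicontinuity scheme (it takes decompositions with $F_\rho(\omega_{+,n})+F_\rho(\omega_{-,n})\le 2(1+1/n-\rho(1))$, places them in a common $\Se_C$, extracts a $\sigma(\Me_*,\Me)$-convergent subnet, and passes to the limit). You in fact add two details the paper glosses over: the equivalence $\|\psi\|_{\log,\rho}\le 1\iff\Psi_\rho(\psi)\le 1$, which the paper dismisses with ``equivalently'' but which does require the weak*-lower semicontinuity of $\Psi_\rho$ from Lemma~\ref{lemma:dual_young} exactly as you argue, and the observation that one subnet must serve both sequences simultaneously.

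The one genuinely different step is the continuity of the embedding $L_p(\Me,\rho)^s\sqsubseteq\llog(\Me,\rho)$ in (i). The paper restricts $\Psi_\rho$ to the Banach space $L_p(\Me,\rho)^s$, notes it is finite-valued and lower semicontinuous there, invokes the Ekeland--Temam result that a finite lsc convex function on a Banach space is continuous, and then applies the general construction of Section~\ref{sec:orlica} together with \cite[Prop.~2]{jencova2006aconstruction} to obtain an intermediate Orlicz space $B$ with $L_p(\Me,\rho)^s\sqsubseteq B\sqsubseteq\llog(\Me,\rho)$. You instead use the closed graph theorem: since the set inclusion $L_p(\Me,\rho)^s\subseteq\Pe_\rho-\Pe_\rho$ follows from Proposition~\ref{prop:Frho_properties}(iv) and both spaces embed continuously into $L_1(\Me)^s$ (where limits must agree), the graph of the inclusion is closed, hence the inclusion is bounded. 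This is valid --- $L_p(\Me,\rho)^s$ is a closed subspace of $L_p(\Me,\rho)$ and $\llog(\Me,\rho)=B_{\Psi_\rho}$ is complete by Proposition~\ref{prop:embeddings1} --- and it is shorter and more elementary, avoiding the convexity machinery entirely; what it gives up is the quantitative content of the paper's route, namely the intermediate Orlicz space $B$ and the explicit norm control coming from the restricted Young function, neither of which is needed for the statement as claimed.
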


\begin{proof}  By Proposition
\ref{prop:Frho_properties} (iv),  we see that $\mathrm{Dom}(\Psi_\rho)=\Pe_\rho-\Pe_\rho$ contains the
self-adjoint part $L_p(\Me,\rho)^s=L_p(\Me,\rho)^+-L_p(\Me,\rho)^+$ of $L_p(\Me,\rho)$, for any $p>1$, so $\llog(\Me,\rho)$ is dense in $L_1(\Me)^s$.
This also shows that $\Psi_\rho$ is finite valued on $L_p(\Me,\rho)^s$.
Since $L_p(\Me,\rho)^s\sqsubseteq L_1(\Me)^s$  and
$\Psi_\rho$ is lower semicontinuous on $L_1(\Me)^s$,  the restriction of $\Psi_\rho$ defines a convex and lower semicontinuous 
Young function  $L_p(\Me,\rho)^s\to \mathbb R$, which then must be continuous  by \cite[Cor. 2.5]{ekeland1999convex}. 
Let $B$ be the corresponding Banach
space, then by construction, $B$ is a closed subspace in  $L_{\log}(\Me,\rho)$ and using again \cite[Prop. 2]{jencova2006aconstruction}, we have 
\[
L_p(\Me,\rho)^s\sqsubseteq B\sqsubseteq \llog(\Me,\rho).
\]

Let now
$\omega\in L_{\log}(\Me,\rho)^+$, then  there are some $\omega_\pm
\in \mathcal P_\rho$ such that $2\omega=\omega_+-\omega_-$. It follows that
$2\omega+\omega_-=\omega_+\in \mathcal P_\rho$. By Proposition \ref{prop:Frho_properties}
(iv), this implies that we must have  $\omega\in \mathcal P_\rho$ as well. The fact that
the cone is closed in $L_{\log}(\Me,\rho)$ follows by the continuous embedding in
$L_1(\Me)^s$.

Assume that $\|\psi\|_{\log,\rho}\le 1$, equivalently, $\Psi_\rho(\psi)\le
1$. Then for any $n\in \mathbb N$, there are some $\omega_{\pm,n}\in \mathcal P_\rho$ such
that $\psi=\frac12(\omega_{+,n}-\omega_{-,n})$ and
$F_\rho(\omega_{+,n})+F_{\rho}(\omega_{-,n})\le 2(1+1/n-\rho(1))$. It then follows that
$\omega_{\pm,n}\in \mathcal S_C$ for some $C$ and all $n$. By Proposition
\ref{prop:Frho_properties} (iii), there is some subsequence such that $\omega_{\pm,
n_k}\to \omega_{\pm}$ in the $\sigma(\Me_*,\Me)$-topology. It follows that 
$\psi=\frac12(\omega_+-\omega_-)$ and by lower semicontinuity,
\[
F_{\rho}(\omega_+)+F_\rho(\omega_-)\le \lim\inf
F_{\rho}(\omega_{+,n_k})+F_\rho(\omega_{-,n_k})\le 2-2\rho(1).
\]
The converse is obvious.

\end{proof}

Let us now recall the embedding $i_{\infty,\rho}:\Me^s\to L_1(\Me)^s$, given by
\eqref{eq:infty_embedding}. Note that  $i_{\infty,\rho}(\Me^s)=L_\infty(\Me,\rho)^s\sqsubseteq
L_p(\Me,\rho)^s$, for all $1\le p\le \infty$.

\begin{theorem}\label{thm:embedding_exp} For any $1\le p<\infty$, $i_{\infty,\rho}$ extends to
a continuous embedding
\[
i_{\exp,\rho}: \eexp(\Me,\rho)\to L_p(\Me,\rho)
\]
and  $i_{\exp,\rho}(\eexp(\Me,\rho))$ is dense in $L_p(\Me,\rho)^s$.
\end{theorem}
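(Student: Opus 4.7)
The plan is to establish continuity of $i_{\infty,\rho}$ on the dense subspace $\Me^s \subseteq \eexp(\Me,\rho)$ with respect to the $\|\cdot\|_{\exp,\rho}$-norm, then extend by continuity; density of the image and injectivity of the extension are treated separately.

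For continuity, fix $1 \le p < \infty$ and let $q$ be the conjugate exponent, so $1 < q \le \infty$. Under the isomorphism $i_{p,\rho}$, the element $i_{\infty,\rho}(a) \in L_p(\Me,\rho)$ corresponds to $h_\rho^{1/(2p)} a h_\rho^{1/(2p)} \in L_p(\Me)$, and Kosaki $L_p$--$L_q$ duality gives
\[
\|i_{\infty,\rho}(a)\|_{p,\rho} = \sup_{l \in L_q(\Me),\ \|l\|_q \le 1}\bigl|\Tr[l\, h_\rho^{1/(2p)} a h_\rho^{1/(2p)}]\bigr|.
\]
By cyclicity of the trace, the quantity inside the absolute value equals $(i_{q,\rho}(l))(a)$, where $i_{q,\rho}(l) \in L_q(\Me,\rho)^s$ is regarded as an element of $\Me_*^s$ paired with $a \in \Me^s$. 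Theorem \ref{thm:Llog_properties}(i) supplies $\|i_{q,\rho}(l)\|_{\log,\rho} \le K_q \|l\|_q$, while the duality $B_{\Psi_\rho} = B_{\Phi_\rho}^*$ from Proposition \ref{prop:embeddings1} yields $|\psi(a)| \le C\|\psi\|_{\log,\rho}\|a\|_{\exp,\rho}$ for every $\psi \in \llog(\Me,\rho)$. Combining these bounds gives
\[
\|i_{\infty,\rho}(a)\|_{p,\rho} \le CK_q \|a\|_{\exp,\rho},
\]
and since $\Me^s = V_{\Phi_\rho}$ is dense in $\eexp(\Me,\rho) = B_{\Phi_\rho}$ by construction, $i_{\infty,\rho}$ extends uniquely to a continuous linear map $i_{\exp,\rho}: \eexp(\Me,\rho) \to L_p(\Me,\rho)$.

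Density of the image is immediate: $i_{\exp,\rho}(\eexp(\Me,\rho)) \supseteq L_\infty(\Me,\rho)^s$, and this subspace is $\|\cdot\|_{p,\rho}$-dense in $L_p(\Me,\rho)^s$ for each $1 \le p < \infty$---for $1 < p < \infty$ by the standard density of the intersection in the complex interpolation space, and for $p = 1$ by faithfulness of $\rho$, which makes $h_\rho^{1/2}\Me h_\rho^{1/2}$ norm-dense in $L_1(\Me)$. For injectivity, suppose $i_{\exp,\rho}(a) = 0$ and pick $(a_n) \subseteq \Me^s$ with $a_n \to a$ in $\|\cdot\|_{\exp,\rho}$. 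Since $L_p(\Me,\rho) \sqsubseteq L_1(\Me)$ continuously, $i_{\infty,\rho}(a_n) \to 0$ in $L_1(\Me)$, so for every $b \in \Me^s$ the cyclic identity
\[
(i_{\infty,\rho}(b))(a_n) = \Tr\bigl[b\, i_{\infty,\rho}(a_n)\bigr] \longrightarrow 0,
\]
combined with the continuity of $\psi \mapsto \psi(a)$ for $\psi \in \llog(\Me,\rho) = \eexp(\Me,\rho)^*$, shows that $a$, viewed via the isometric embedding $\eexp(\Me,\rho) \hookrightarrow B_{\Phi_\rho}^{**} = \llog(\Me,\rho)^*$, vanishes on $L_\infty(\Me,\rho)^s$. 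The remaining step, which I expect to be the main technical hurdle, is to promote this to vanishing on all of $\llog(\Me,\rho)$: this amounts to proving that $L_\infty(\Me,\rho)^s$ is $\|\cdot\|_{\log,\rho}$-norm-dense in $\llog(\Me,\rho)$. I expect this to follow from a $\Delta_2$-type growth estimate for $\Psi_\rho$---the scaling identity $F_\rho(t\omega) = tF_\rho(\omega) + t\omega(1)\log t$ together with the coercivity bound in Proposition \ref{prop:Frho_properties}(ii) suggests $\Psi_\rho(2\psi) \le 2\Psi_\rho(\psi) + o(\Psi_\rho(\psi))$ as $\Psi_\rho(\psi) \to \infty$---combined with an explicit approximation of elements $\omega \in \Pe_\rho$ by bounded ones via spectral truncation in Araki's modular theory.
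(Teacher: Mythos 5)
Your proof of the two assertions the paper actually argues --- continuity of the extension and density of the range --- is essentially the paper's proof. The paper bounds
\[
\<i_{\infty,\rho}(a),k\>=\Tr[ak]\le \|a\|_{\exp,\rho}\|k\|_{\log,\rho}\le M\|a\|_{\exp,\rho}\|k\|_{q,\rho},
\qquad k\in L_q(\Me,\rho)^s,
\]
which is exactly your combination of Theorem \ref{thm:Llog_properties}(i) with the duality $\llog(\Me,\rho)=\eexp(\Me,\rho)^*$ from Proposition \ref{prop:embeddings1}, then extends by density of $\Me^s$ in $\eexp(\Me,\rho)$ and obtains density of the image from density of $L_\infty(\Me,\rho)^s$ in $L_p(\Me,\rho)^s$; your interpolation/faithfulness justification of that last density is in fact a bit more explicit than the paper's one-line assertion.

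Your injectivity discussion goes beyond the paper: the paper's proof does not address injectivity of $i_{\exp,\rho}$ at all, so the gap you flag is a gap relative to the word ``embedding'' in the statement, not relative to the paper's own argument (which leaves the point tacit, presumably deferring to \cite{jencova2006aconstruction}). Two remarks on your reduction. First, norm-density of $L_\infty(\Me,\rho)^s$ in $\llog(\Me,\rho)$ is more than you need: an element $h\in\eexp(\Me,\rho)$ acts on $\llog(\Me,\rho)=\eexp(\Me,\rho)^*$ as a $\sigma(\llog(\Me,\rho),\eexp(\Me,\rho))$-continuous functional, so weak*-density of $L_\infty(\Me,\rho)^s$ would suffice; on the other hand, one checks from $\<i_{\exp,\rho}(h),i_{\infty,\rho}(b)\>=(i_{\infty,\rho}(b))(h)$ that the preannihilator of $L_\infty(\Me,\rho)^s$ in $\eexp(\Me,\rho)$ is precisely $\ker i_{\exp,\rho}$, so that weak*-density is \emph{equivalent} to injectivity --- there is no shortcut hiding there. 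Second, your proposed mechanism (a $\Delta_2$-type growth estimate for $\Psi_\rho$ plus spectral truncation of elements of $\Pe_\rho$) is only a heuristic: nothing in the paper supplies a noncommutative truncation of $\omega\in\Pe_\rho$ with controlled $F_\rho$, and this step remains open in your write-up, as you honestly note. In short: complete and identical to the paper on everything the paper proves, with an unfinished extra step that the paper itself does not carry out either.
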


\begin{proof} Let $a\in\Me^s$, $1\le p<\infty$ and let $1/p+1/q=1$.  By Theorem
\ref{thm:Llog_properties}, we have $L_q(\Me,\rho)^s\sqsubseteq L_{\log}(\Me,\rho)$. It
follows that for any $k\in L_q(\Me,\rho)$, we have
\[
\<i_{\infty,\rho}(a),k\>=\Tr[ak]\le \|a\|_{\exp,\rho}\|k\|_{\log,\rho}.
\]
Since  $\|k\|_{\log,\rho}\le M\|k\|_{q,\rho}$ for some
$M>0$, this shows that $i_{\infty,\rho}:\Me^s\to L_p(\Me,\rho)^s$ is continuous with respect to the norm
$\|\cdot\|_{\exp,\rho}$ in $\Me^s$ and therefore has a unique continuous extension
$i_{\exp,\rho}$. The rest follows from the fact that
$i_{\infty,\rho}(\Me^s)=L_\infty(\Me,\rho)^s$ is dense in $L_p(\Me,\rho)^s$ for any $p$.

\end{proof}

To summarize, we have for $1<p<\infty$: 
\begin{equation}\label{eq:embeddings}
L_\infty(\Me,\rho)\sqsubseteq E_{\exp}(\Me,\rho)\sqsubseteq L_p(\Me,\rho)\sqsubseteq L_{\log}(\Me,\rho)\sqsubseteq
L_1(\Me).
\end{equation}
Note that we have analogous properties for the classical exponential Orlicz spaces,
\cite[Prop. 8]{pistone1999theexponential}.

%
%
%
%
%


\begin{prop}\label{prop:monotone} Let $T:L_1(\Me) \to L_1(\Ne)$ be a channel.
The restriction of $T$ to $L_{\log}(\Me,\rho)$ is a contraction $L_{\log}(\Me,\rho)\to L_{\log}(\Ne,T(\rho))$.
The adjoint  $T^*$ extends uniquely to a contraction
$E_{\exp}(\Ne,T(\rho))\to E_{\exp}(\Me,\rho)$. 
\end{prop}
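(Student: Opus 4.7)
The plan is to reduce both contraction statements to pointwise inequalities between the relevant Young functions, exploiting Proposition~\ref{prop:monotone_F} together with trace-preservation $T(\rho)(1)=\rho(1)$.

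For the first part, I would show that $\Psi_{T(\rho)}(T(\psi))\le \Psi_\rho(\psi)$ for every $\psi\in(\Me^*)^s$; the norm contraction then follows because whenever $\Psi_\rho(\psi/\lambda)\le 1$ we also have $\Psi_{T(\rho)}(T(\psi)/\lambda)\le 1$, which via the Luxemburg--Nakano formula gives $\|T(\psi)\|_{\log,T(\rho)}\le \|\psi\|_{\log,\rho}$. Take any decomposition $2\psi=\omega_+-\omega_-$ with $\omega_\pm\in(\Me^*)^+$ and $F_\rho(\omega_\pm)<\infty$; by Proposition~\ref{prop:relentropy_extension} we actually have $\omega_\pm\in\Pe_\rho\subseteq L_1(\Me)^+$, so applying the channel yields the admissible decomposition $2T(\psi)=T(\omega_+)-T(\omega_-)$ with $T(\omega_\pm)\in L_1(\Ne)^+$. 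Invoking Proposition~\ref{prop:monotone_F} and $T(\rho)(1)=\rho(1)$, I obtain
\begin{equation*}
\Psi_{T(\rho)}(T(\psi)) \;\le\; \tfrac{1}{2}\bigl[F_{T(\rho)}(T(\omega_+))+F_{T(\rho)}(T(\omega_-))\bigr]+T(\rho)(1) \;\le\; \tfrac{1}{2}\bigl[F_\rho(\omega_+)+F_\rho(\omega_-)\bigr]+\rho(1).
\end{equation*}
Taking the infimum over decompositions of $\psi$ delivers the pointwise inequality.

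For the adjoint, I would prove directly that $\Phi_\rho(T^*(b))\le \Phi_{T(\rho)}(b)$ for every $b\in\Ne^s$; this gives a contraction $T^*\colon(\Ne^s,\|\cdot\|_{\exp,T(\rho)})\to(\Me^s,\|\cdot\|_{\exp,\rho})$, which then extends uniquely by continuity to a contraction $E_{\exp}(\Ne,T(\rho))\to E_{\exp}(\Me,\rho)$ since $\Ne^s=V_{\Phi_{T(\rho)}}$ is dense in $E_{\exp}(\Ne,T(\rho))$ by construction. Since $C_\rho=F_\rho^*$ and $F_\rho$ is finite only on $\Pe_\rho$, the adjoint identity $\omega(T^*(b))=T(\omega)(b)$ yields
\begin{equation*}
C_\rho(T^*(b)) \;=\; \sup_{\omega\in\Pe_\rho}\bigl[T(\omega)(b)-F_\rho(\omega)\bigr] \;\le\; \sup_{\omega\in\Pe_\rho}\bigl[T(\omega)(b)-F_{T(\rho)}(T(\omega))\bigr] \;\le\; C_{T(\rho)}(b),
\end{equation*}
where the first bound uses Proposition~\ref{prop:monotone_F} and the second uses $T(\Pe_\rho)\subseteq\Pe_{T(\rho)}$. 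Applying the same estimate to $-b$ and invoking $\rho(1)=T(\rho)(1)$ gives $\Phi_\rho\circ T^*\le \Phi_{T(\rho)}$ on $\Ne^s$, as required.

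No substantial obstacle is expected: once the Legendre--Fenchel/Luxemburg reformulation is in place, both inequalities flow mechanically from the monotonicity of $F_\rho$ and trace preservation. The only point needing care is the observation that the supremum defining $C_\rho$ effectively ranges only over normal positive $\omega$ (because $F_\rho$ is infinite off $\Pe_\rho$), so that $T(\omega)$ is a genuine element of $L_1(\Ne)^+$ and Proposition~\ref{prop:monotone_F} applies unambiguously.
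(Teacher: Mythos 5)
Your proof is correct. The first half coincides with the paper's argument: the pointwise inequality $\Psi_{T(\rho)}(T(\psi))\le\Psi_\rho(\psi)$ is obtained by pushing each admissible decomposition $2\psi=\omega_+-\omega_-$ through $T$ and invoking Proposition~\ref{prop:monotone_F} together with trace preservation; your explicit remark that Proposition~\ref{prop:relentropy_extension} forces $\omega_\pm\in\Me_*^+$, so that $T(\omega_\pm)$ is well defined, is a detail the paper leaves implicit. In the second half you genuinely diverge. The paper dismisses the adjoint in one line, ``by definition of $L_{\log}(\Me,\rho)$ and duality'', i.e.\ it deduces the $\eexp$ statement from the $L_{\log}$ contraction via $B_{\Psi_\rho}=B_{\Phi_\rho}^*$; since that identification holds only with \emph{equivalent} norms, making it yield an actual contraction requires conjugating at the level of Young functions (using $\Phi_\rho=\Psi_\rho^{*}$, valid because $\Phi_\rho$ is continuous, hence equal to its biconjugate). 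You instead work one level down: from the variational formula $C_\rho=F_\rho^*$, the adjoint identity $\omega(T^*(b))=T(\omega)(b)$, and Proposition~\ref{prop:monotone_F} you get $C_\rho(T^*(b))\le C_{T(\rho)}(b)$, hence $\Phi_\rho\circ T^*\le\Phi_{T(\rho)}$ on $\Ne^s$, and the Luxemburg-norm contraction follows on the dense subspace $V_{\Phi_{T(\rho)}}=\Ne^s$ (Proposition~\ref{prop:embeddings1}) and extends by completeness. This route is self-contained, delivers the contraction constant $1$ directly without touching the norm-equivalence in the Orlicz duality, and makes the second statement independent of the first; its only cost is redoing by hand a computation that Legendre--Fenchel conjugation of your $\Psi$-inequality would produce formally. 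One cosmetic point: you assert the $\Psi$-inequality for all $\psi\in(\Me^*)^s$, but $T(\psi)$ is defined only for normal $\psi$; this is harmless, since $\Psi_\rho(\psi)<\infty$ already forces $\psi\in\Me_*^s$ and the proposition concerns $\psi\in L_{\log}(\Me,\rho)$ anyway.
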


\begin{proof} By Proposition \ref{prop:monotone_F}, we have  $F_{T(\rho)}(T(\omega))\le F_\rho(\omega)$.
Let $\psi\in L_{\log}(\Me,\rho)$, then   
\begin{align*}
\Psi_{T(\rho)}(T(\psi))&\le \frac12 \inf_{2\psi=\psi_+-\psi_-}
(F_{T(\rho)}(T(\psi_+))+F_{T(\rho)}(T(\psi_-)))+T(\rho)(1)\\
&\le \frac12 \inf_{2\psi=\psi_+-\psi_-}
(F_{\rho}(\psi_+)+F_{\rho}(\psi_-))+\rho(1)=\Psi_\rho(\psi).
\end{align*}
By definition of  $L_{\log}(\Me,\rho)$ and duality, this implies the statement.

\end{proof}

\section{The quantum exponential manifold}

Let $\tilde \Fe$ denote the set of all faithful functionals in the cone $\Me_*^+$. In this
section we will construct a Banach manifold structure on $\tilde \Fe$, using an extension
of Theorem \ref{thm:conjugate} to perturbations in $E_{\exp}(\Me,\rho)$.

\subsection{Extended perturbations}

Since the effective domain of the relative entropy function $F_\rho$ is the positive cone $\mathcal P_\rho=\llog(\Me,\rho)^+$, it can be
regarded as a strictly convex function on $\llog(\Me,\rho)$. In this section, we will
investigate the function $F_\rho$  and  its conjugate with respect to the dual pair
$(\llog(\Me,\rho),\eexp(\Me,\rho))$.

 We first note  that $F_\rho$ as a function on $\llog(\Me,\rho)$ is weak*-lower
 semicontinuous. Indeed, since $\Me^s$ is norm dense in $\eexp(\Me,\rho)$, the
weak*-topology on $L_{\log}(\Me,\rho)$ coincides with the restriction of the $\sigma(\Me_*^s,\Me^s)$-topology on
norm-bounded subsets. By Proposition \ref{prop:Frho_properties} (iii), the claim will follow by
the next   Lemma.


\begin{lemma}\label{lemma:fhro_on_llog} For each $C\in \mathbb R$, 
 $\mathcal S_C$ is norm-bounded in $L_{\log}(\Me,\rho)$.
\end{lemma}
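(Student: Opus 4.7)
The plan is to bound the $\llog$-norm via duality against $\eexp(\Me,\rho)$, rather than attempting to construct explicit decompositions $\omega=\tfrac12(\omega_+-\omega_-)$ in the definition of $\Psi_\rho$. The key observation is that by Proposition \ref{prop:embeddings1}, the norms on $B_{\Psi_\rho}$ and $B_{\Phi_\rho}^*$ are equivalent, so it suffices to bound $|\omega(a)|$ uniformly over $a\in\Me^s$ with $\|a\|_{\exp,\rho}\le 1$ and over $\omega\in\mathcal{S}_C$.

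The main computation will rely on the Fenchel inequality $\omega(a)\le F_\rho(\omega)+C_\rho(a)$ for all $\omega\in(\Me^*)^s$ and $a\in\Me^s$. Since $F_\rho$ is convex and weak*-lower semicontinuous by Proposition \ref{prop:Frho_properties}(i), and $C_\rho=F_\rho^*$ by definition, this inequality is a standard consequence of the biconjugate theorem. Applied separately to $a$ and $-a$, it gives $|\omega(a)|\le F_\rho(\omega)+\max(C_\rho(a),C_\rho(-a))$. Now for $a\in\Me^s$ with $\|a\|_{\exp,\rho}\le 1$, the inequality $\Phi_\rho(a)\le 1$ combined with the definition of $\Phi_\rho$ yields $C_\rho(a)+C_\rho(-a)\le 2+2\rho(1)$; since both $C_\rho(\pm a)=\rho^{\pm a}(1)$ are nonnegative by Theorem \ref{thm:conjugate}, each individual term is bounded by $2+2\rho(1)$. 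Combined with $F_\rho(\omega)\le C$ for $\omega\in\mathcal{S}_C$, this produces a bound $|\omega(a)|\le C+2+2\rho(1)$ independent of the particular $\omega\in\mathcal{S}_C$ and $a\in\Me^s$ with $\|a\|_{\exp,\rho}\le 1$.

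To conclude, $\omega$ defines a bounded linear functional on the normed space $(\Me^s,\|\cdot\|_{\exp,\rho})$ of norm at most $C+2+2\rho(1)$, and hence extends uniquely to a bounded functional on the completion $\eexp(\Me,\rho)=B_{\Phi_\rho}$. The identification $B_{\Psi_\rho}=B_{\Phi_\rho}^*$ with equivalent norms (Proposition \ref{prop:embeddings1}) then produces a constant $K>0$, depending only on $\rho$, such that $\|\omega\|_{\log,\rho}\le K(C+2+2\rho(1))$ for every $\omega\in\mathcal{S}_C$.

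The main obstacle is of a bookkeeping nature: one must check that the bilinear pairing $(\omega,a)\mapsto\omega(a)$ on $\Pe_\rho\times\Me^s$ really corresponds to the canonical duality between $B_{\Psi_\rho}$ and $B_{\Phi_\rho}$, so that the functional norm controls $\|\omega\|_{\log,\rho}$ and not some other quantity. This is guaranteed by the continuous embedding $B_{\Psi_\rho}\sqsubseteq\Me_*^s$ and the norm-density of $\Me^s$ in $B_{\Phi_\rho}$, both recorded in Proposition \ref{prop:embeddings1}, so the dual-norm estimate transfers directly to an $\llog$-norm estimate with no additional analytic input.
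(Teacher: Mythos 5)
Your proof is correct, but it takes a genuinely different route from the paper's. The paper argues directly with the modular: for $\omega\in\Se_C$, the trivial decomposition $2(\tfrac12\omega)=\omega-0$ in the definition of $\Psi_\rho$ gives $\Psi_\rho(\tfrac12\omega)\le\tfrac12 F_\rho(\omega)+\rho(1)\le\tfrac12 C+\rho(1)$ (using $F_\rho(0)=0$), and the standard comparison between the Luxemburg norm and the modular (\cite[Lemma 3.3]{jencova2006aconstruction}) then yields the explicit bound $\|\omega\|_{\log,\rho}\le\max\{2,\,C+2\rho(1)\}$, with no duality at all. You instead bound the dual norm over $(\Me^s,\|\cdot\|_{\exp,\rho})$ via the Fenchel--Young inequality $\pm\omega(a)\le F_\rho(\omega)+C_\rho(\pm a)$, together with $C_\rho(a)+C_\rho(-a)\le 2+2\rho(1)$ and $C_\rho(\pm a)\ge 0$ on the unit ball, and then transfer the estimate through $B_{\Psi_\rho}=B_{\Phi_\rho}^*$. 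This is sound: Proposition \ref{prop:embeddings1} is established before this lemma and independently of it, so there is no circularity, and your ``bookkeeping'' check is exactly right --- the identification $B_{\Psi_\rho}=B_{\Phi_\rho}^*$ is via the canonical pairing extended from $\Me^s$ by density, so the dual-norm bound really controls $\|\cdot\|_{\log,\rho}$. As for what each approach buys: the paper's argument is shorter, self-contained, and produces explicit constants, while yours costs the unspecified equivalence constant $K$ from the duality theorem but never touches the decomposition structure of $\Psi_\rho$, using only the conjugacy $\Psi_\rho=\Phi_\rho^*$; it would thus survive any alternative definition of $\Psi_\rho$ with the same conjugate. Two cosmetic points: the inequality $\omega(a)-F_\rho(\omega)\le C_\rho(a)$ is immediate from the definition \eqref{eq:frho_conjugate} of $C_\rho$ as a supremum, so no biconjugation or lower semicontinuity of $F_\rho$ is needed there; and deducing $\Phi_\rho(a)\le 1$ from $\|a\|_{\exp,\rho}\le 1$ uses the continuity of $\Phi_\rho$ on $\Me^s$ (alternatively, take the supremum over the open unit ball, which suffices for the dual norm).
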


\begin{proof} We may assume that $C\ge -\rho(1)$, otherwise $\mathcal S_C$ is empty.
If $\omega \in \mathcal S_C$, then $\omega\in \Me_*^+$ and we have (using the
decomposition $2(\tfrac12\omega)=\omega-0$)
\[
\Psi_\rho(\frac12 \omega)\le \frac12F_\rho(\omega)+\rho(1)\le \frac12 C+\rho(1).
\]
If $\Psi_\rho(\frac12 \omega)\le 1$, then $\|\omega\|_{\log,\rho}\le 2$, otherwise we have
by \cite[Lemma 3.3]{jencova2006aconstruction} that $\|\frac12\omega\|_{\log,\rho}\le
\Psi_\rho(\frac12\omega)\le \frac12 C+\rho(1)$. Hence $\|\omega\|_{\log,\rho}\le
\max\{2,C+2\rho(1)\}$. 

\end{proof}

Let us now recall the Legendre-Fenchel conjugate function $C_\rho$, defined in  \eqref{eq:frho_conjugate}.  It is easily seen that $C_\rho$ is bounded
over the unit ball with respect to $\|\cdot\|_{\exp,\rho}$ in $\Me^s$. By \cite[Cor.
2.4]{ekeland1999convex}, this implies that $C_\rho$ is continuous (in fact,
locally Lipschitz) with respect to this  norm.  It follows that  $C_\rho$ extends uniquely to a continuous function
$\eexp(\Me,\rho)\to \mathbb R$, which will be again denoted by $C_\rho$. The next result
shows that this extension is the conjugate function of $F_\rho$ with respect to the dual
pair 
$(\llog(\Me,\rho),\eexp(\Me,\rho))$. 

\begin{theorem}\label{thm:extended_conjugate} For $h\in \eexp(\Me,\rho)$, we have
\[
C_\rho(h)=\sup_{\omega\in L_{\log}(\Me,\rho)} \omega(h)-F_\rho(\omega).
\]
The supremum is attained at a unique functional $\rho^h\in \Pe_\rho$. Moreover, $\rho^h$
is faithful, 
$C_\rho(h)=\rho^h(1)$ and the map
$\eexp(\Me,\rho)\ni h\mapsto
\rho^h\in \Me_*$ is norm-to-norm continuous. 

\end{theorem}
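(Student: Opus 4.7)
The plan is to extend Theorem \ref{thm:conjugate} from $\Me^s$ to $\eexp(\Me,\rho)$ by density, exploiting weak*-compactness of the sublevel sets of $F_\rho$ in $\llog(\Me,\rho) = \eexp(\Me,\rho)^*$ (by Proposition \ref{prop:embeddings1}, Lemma \ref{lemma:fhro_on_llog}, and the weak*-lower semicontinuity of $F_\rho$ on $\llog$ noted at the start of this subsection). Fix $h \in \eexp(\Me,\rho)$ and a sequence $a_n \in \Me^s$ with $\|a_n - h\|_{\exp,\rho} \to 0$. Every $\omega \in \Pe_\rho \subseteq \llog(\Me,\rho)$ acts continuously on $\eexp(\Me,\rho)$, and $C_\rho$ is continuous on $\eexp(\Me,\rho)$, so passing to the limit in the bounded-case Fenchel inequality $\omega(a_n) - F_\rho(\omega) \le C_\rho(a_n)$ of Theorem \ref{thm:conjugate} yields $\omega(h) - F_\rho(\omega) \le C_\rho(h)$. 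Since $F_\rho = \infty$ off $\Pe_\rho$, this gives $\sup_{\omega \in \llog(\Me,\rho)} \omega(h) - F_\rho(\omega) \le C_\rho(h)$.

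For the reverse inequality I would construct the maximizer as a weak*-cluster point of $\{\rho^{a_n}\}$ in $\llog(\Me,\rho)$. Applying Lemma \ref{lemma:crho_gateaux} at $a_n$ with test directions $2a_n$ and $0$ yields the two-sided bound
\[
C_\rho(a_n) - \rho(1) \le \rho^{a_n}(a_n) \le C_\rho(2a_n) - C_\rho(a_n),
\]
and continuity of $C_\rho$ on $\eexp(\Me,\rho)$ makes both endpoints bounded, so that $F_\rho(\rho^{a_n}) = \rho^{a_n}(a_n) - C_\rho(a_n)$ is uniformly bounded and $\rho^{a_n} \in \Se_M$ for some $M$. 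Lemma \ref{lemma:fhro_on_llog} together with weak*-lower semicontinuity forces $\Se_M$ to be weak*-compact in $\llog(\Me,\rho)$, so a subnet $\rho^{a_{n_k}}$ converges weak* to some $\omega^* \in \llog(\Me,\rho)$. Testing against $h$ and using $|\rho^{a_{n_k}}(a_{n_k} - h)| \le \|\rho^{a_{n_k}}\|_{\log,\rho}\|a_{n_k} - h\|_{\exp,\rho} \to 0$ gives $\rho^{a_{n_k}}(a_{n_k}) \to \omega^*(h)$; combined with weak*-lsc of $F_\rho$, the identity $C_\rho(a_{n_k}) = \rho^{a_{n_k}}(a_{n_k}) - F_\rho(\rho^{a_{n_k}})$ yields $C_\rho(h) \le \omega^*(h) - F_\rho(\omega^*)$ upon taking $\limsup$. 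With the first paragraph this forces equality, and strict convexity of $F_\rho$ on $\Pe_\rho$ (Proposition \ref{prop:Frho_properties}(i)) makes the maximizer unique; I define $\rho^h := \omega^*$. Testing weak*-convergence against $1 \in \Me^s \subseteq \eexp(\Me,\rho)$ yields $\rho^h(1) = \lim_n C_\rho(a_n) = C_\rho(h)$, and $\rho^h \in \Pe_\rho$ follows from the finiteness of $F_\rho(\rho^h) = \rho^h(h) - C_\rho(h)$.

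For norm-to-norm continuity, let $h_n \to h$ in $\eexp(\Me,\rho)$. Repeating the construction with bounded approximants of each $h_n$ shows that $\{\rho^{h_n}\}$ is weak*-relatively compact in $\llog(\Me,\rho)$, and every weak*-cluster point solves the variational problem for $h$; uniqueness then forces $\rho^{h_n} \to \rho^h$ in the weak*-topology of $\llog(\Me,\rho)$. The continuous embedding $\llog(\Me,\rho) \sqsubseteq L_1(\Me)$ (Theorem \ref{thm:Llog_properties}(i)) transports this to $\sigma(\Me_*, \Me)$-convergence in $\Me_*$, and $\rho^{h_n}(1) = C_\rho(h_n) \to C_\rho(h) = \rho^h(1)$ by continuity of $C_\rho$. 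Since all functionals are positive, the noncommutative Scheffé lemma (weak convergence together with convergence of the norm $\omega \mapsto \omega(1)$ on $\Me_*^+$ implies $L_1$-norm convergence) gives norm convergence in $\Me_*$.

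The main obstacle will be the faithfulness of $\rho^h$: the operator-norm bound $\rho^a(p) \ge e^{-2\|a\|}\rho(p)$ available in the bounded case degenerates along $a_n \to h$ in $\eexp$-norm, so a direct approximation does not suffice. I would resolve this by extending Araki's perturbed vector $\xi(h) \in L_2(\Me)^+$ to $h \in \eexp(\Me,\rho)$, using $\|h\|_{\exp,\rho} < \infty$ to dominate the iterated integrals in the Dyson series and show $\xi(h_n) \to \xi(h)$ in $L_2(\Me)$, then verifying that $\xi(h)$ is separating so that $\rho^h = (\xi(h), \cdot \xi(h))$ is faithful and coincides with the variationally defined $\rho^h$ by the identification $C_\rho(h) = \rho^h(1)$ and uniqueness of the maximizer.
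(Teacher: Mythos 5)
Your construction of the maximizer is correct and is essentially the paper's own argument: the same two-sided bound from Lemma \ref{lemma:crho_gateaux} (the paper takes $a=2a_n$, $b=a_n$ for the upper bound and gets the lower bound from $F_\rho\ge-\rho(1)$; your extra test direction $0$ is an equivalent way to see it), the same boundedness of $F_\rho(\rho^{a_n})$, compactness of the sublevel set $\Se_K$, the passage $\rho^{a_{n_k}}(a_{n_k})\to\omega^*(h)$ via norm-boundedness of $\Se_K$ in $\llog(\Me,\rho)$, and uniqueness by strict convexity, giving $\rho^h(1)=C_\rho(h)$. The two remaining claims, however, are exactly where your proposal breaks down.

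For norm-to-norm continuity, the ``noncommutative Scheff\'e lemma'' you invoke is false in a general von Neumann algebra: weak* convergence in $\Me_*^+$ together with convergence of the total masses $\omega_n(1)\to\omega(1)$ does \emph{not} imply norm convergence. Already for $\Me=L_\infty[0,2\pi]$ the positive densities $f_n=1+\sin(nx)$ converge to $1$ in the $\sigma(\Me_*,\Me)$-topology with constant mass, yet $\|f_n-1\|_1\not\to 0$; Scheff\'e's lemma needs pointwise convergence of densities, and the trace-class version you may have in mind (Dell'Antonio--Gr\"umm) is special to $B(\mathcal H)$. The paper gets strong convergence from the entropy identity instead: since $a_n\in\Me^s$, \eqref{eq:perturbed_entropy} applied at $\omega=\rho^h$ gives $S(\rho^h\|\rho^{a_n})=S(\rho^h\|\rho)-\rho^h(a_n)\to S(\rho^h\|\rho)-\rho^h(h)=0$, and a Pinsker-type bound then yields $\rho^{a_n}\to\rho^h$ in the norm of $\Me_*$, a statement that extends readily to arbitrary sequences $h_n\to h$ in $\eexp(\Me,\rho)$. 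Your weak*-cluster-point argument only recovers the weak* convergence you already had.

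For faithfulness, your plan to extend Araki's perturbed vector $\xi(h)$ to $h\in\eexp(\Me,\rho)$ by dominating the Dyson series with $\|h\|_{\exp,\rho}$ is not viable: an element of $\eexp(\Me,\rho)$ is an abstract norm limit of bounded operators (representable, via $i_{\exp,\rho}$, only as an element of $L_p(\Me,\rho)\subseteq L_1(\Me)$), so the products $\Delta_\rho^{t_n}h\Delta_\rho^{t_{n-1}-t_n}h\cdots h_\rho^{1/2}$ are not even defined, and convergence of the expansion requires bounds of the form $\|h\|^n/n!$ in the operator norm, which the Orlicz norm does not provide — handling such unbounded perturbations directly is precisely the hard problem that motivates the $\varepsilon$-bounded-form approach of \cite{grasselli2000thequantum, streater2004quantum}. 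The paper sidesteps this entirely with a one-line entropy argument you missed: taking limits in \eqref{eq:perturbed_entropy} along $a_n$ gives the inequality \eqref{eq:preturbed_entropy_h_onesided}, and putting $\omega=\rho$ there shows $S(\rho\|\rho^h)\le S(\rho\|\rho)-\rho(h)<\infty$; finiteness of the relative entropy forces $s(\rho)\le s(\rho^h)$, so $s(\rho^h)=1$ by faithfulness of $\rho$. Both of your problematic steps should be replaced by these entropy arguments, which also furnish \eqref{eq:preturbed_entropy_h_onesided} as input for Theorem \ref{thm:extended_conjugate_equalities}.
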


\begin{proof}
Let $a_n\in \Me^s$ be a sequence such that
$\|h-a_n\|_{\rho}\to 0$. By putting $a=2a_n$ and $b=a_n$ in Lemma
\ref{lemma:crho_gateaux}, we obtain the inequality
\[
C_\rho(2a_n)-C_\rho(a_n)\ge \rho^{a_n}(a_n),\qquad \forall n.
\]
By continuity of $C_\rho$, this implies that $\{\rho^{a_n}(a_n)\}_n$ is a  bounded sequence, so that
also 
\[
\{F_{\rho}(\rho^{a_n})=\rho^{a_n}(a_n)-C_\rho(a_n)\}_n
\]
is bounded and therefore  $\rho^{a_n}\in \Se_K$ for some $K$. By Proposition
\ref{prop:Frho_properties} (iii) we may assume (by restricting to a subsequence) that there
is some $\sigma\in \Se_K$ such that $\rho^{a_n}\to \sigma$ in the
$\sigma(\Me_*,\Me)$-topology. Since $\Se_K$ is norm
bounded in $L_{\log}(\Me,\rho)$ (Lemma \ref{lemma:fhro_on_llog}) and the weak*-topology
coincides with the $\sigma(\Me_*,\Me)$-topology on $\Se_K$, it can be seen that
$\rho^{a_n}(a_n)\to \sigma(h)$.
For $\omega\in \Pe_\rho$, we get by definition of $\rho^{a_n}$ and lower semicontinuity
\begin{align*}
\omega(h)-F_\rho(\omega)&=\lim_n (\omega(a_n)-F_\rho(\omega))\le \lim_n C_\rho(a_n)=\lim_n
(\rho^{a_n}(a_n)-F_\rho(\rho^{a_n}))\\&\le \sigma(h)-F_\rho(\sigma).
\end{align*}
It follows that $\sigma$ is a maximizer of $\omega(h)-F_\rho(\omega)$ and by strict convexity
of $F_\rho$ such  maximizer is unique. Let us denote $\rho^h:=\sigma$. Note that
we have $C_\rho(h)=\lim_n C_\rho(a_n)=\lim_n \rho^{a_n}(1)=\rho^h(1)$ and the above
computation also implies that 
\[
C_\rho(h)\le \rho^h(h)-F_\rho(\rho^h)=\sup_{\omega\in \Pe_\rho} \omega(h)-F_\rho(\omega).
\]
  On the other hand, we obtain using
\eqref{eq:perturbed_entropy} and lower semicontinuity of $S$:
\begin{equation}\label{eq:preturbed_entropy_h_onesided}
S(\omega\|\rho)=\lim_n (\omega(a_n)+S(\omega\|\rho^{a_n}))\ge \omega(h)+S(\omega\|\rho^h),
\quad \forall \omega\in \mathcal P_\rho.
\end{equation}
Putting $\omega=\rho$, we see that $S(\rho\|\rho^h)$ is finite, so that $\rho^h$ must be
faithful.
Further, putting  $\omega=\rho^h$ it follows that $\rho^h(h)-F_\rho(\rho^h)=\rho^h(1)=C_\rho(h)$.
We also get $\lim_nS(\rho^h\|\rho^{a_n})=0$, so that $\rho^{a_n}\to \rho^h$ strongly in
$\Me_*$, this is easily extended to all sequences $h_n\to h$ in $\eexp(\Me,\rho)$. 

\end{proof}

We now extend the equalities in Theorem \ref{thm:conjugate} to all elements in
$\eexp(\Me,\rho)$. 

\begin{lemma}\label{lemma:chainrule_onehalf} Let $h\in E_{\exp}(\Me,\rho)$.
\begin{enumerate}
\item[(i)] For $a\in \Me^s$, we have the chain rule
\[
\rho^{h+a}=(\rho^h)^a,\qquad C_\rho(h+a)=C_{\rho^h}(a).
\]
\item[(ii)] The norm $\|\cdot\|_{\exp,\rho^h}$ is continuous with respect to
$\|\cdot\|_{\exp,\rho}$ on $\Me^s$.

\end{enumerate}

\end{lemma}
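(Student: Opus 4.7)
The plan is to derive both statements from Theorem~\ref{thm:conjugate} by approximation. Since $\Me^s$ is dense in $\eexp(\Me,\rho)$ by construction, I would pick $a_n\in\Me^s$ with $\|a_n-h\|_{\exp,\rho}\to 0$; then $a_n+a\to h+a$ in $\eexp(\Me,\rho)$ as well, and the bounded chain rule of Theorem~\ref{thm:conjugate} gives, for every $n$, the identities $\rho^{a_n+a}=(\rho^{a_n})^a$ and $C_\rho(a_n+a)=C_{\rho^{a_n}}(a)$.

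For part~(i), I would pass to $\Me_*$-norm limits. Theorem~\ref{thm:extended_conjugate} delivers $\rho^{a_n+a}\to\rho^{h+a}$ and $\rho^{a_n}\to\rho^h$ in norm, and continuity of $C_\rho$ on $\eexp(\Me,\rho)$ gives $C_\rho(a_n+a)\to C_\rho(h+a)$. The missing ingredient is that $(\rho^{a_n})^a\to(\rho^h)^a$ in $\Me_*$-norm; I would obtain this from the classical fact that the map $\sigma\mapsto\sigma^a$ is $\Me_*$-norm continuous for fixed $a\in\Me^s$. The Powers-St{\o}rmer inequality $\|h_{\sigma_n}^{1/2}-h_\sigma^{1/2}\|_2^2\le\|\sigma_n-\sigma\|_1$ yields convergence of the standard-form vectors, and the absolute convergence of Araki's perturbation series for $\xi(a)$ (recalled in the proof of Theorem~\ref{thm:conjugate}) lifts this to $\xi_{\sigma_n}(a)\to\xi_\sigma(a)$ in $L_2(\Me)$. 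Matching the two limits of $(\rho^{a_n})^a$ then gives $\rho^{h+a}=(\rho^h)^a$; evaluating at $1\in\Me$ and using $C_\rho(h+a)=\rho^{h+a}(1)$ from Theorem~\ref{thm:extended_conjugate} and $C_{\rho^h}(a)=(\rho^h)^a(1)$ from Theorem~\ref{thm:conjugate} yields the scalar identity.

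For part~(ii), the plan is to invoke~(i) to rewrite $\Phi_{\rho^h}$ on $\Me^s$ purely in terms of $C_\rho$: for $a\in\Me^s$,
\[
\Phi_{\rho^h}(a)=\tfrac12\bigl(C_{\rho^h}(a)+C_{\rho^h}(-a)\bigr)-\rho^h(1)=\tfrac12\bigl(C_\rho(h+a)+C_\rho(h-a)\bigr)-C_\rho(h),
\]
where I also use $\rho^h(1)=C_\rho(h)$ from Theorem~\ref{thm:extended_conjugate}. Since $C_\rho$ is convex, finite and continuous on the Banach space $\eexp(\Me,\rho)$, it is locally Lipschitz around $h$: there are $R,L>0$ with $|C_\rho(k)-C_\rho(h)|\le L\|k-h\|_{\exp,\rho}$ whenever $\|k-h\|_{\exp,\rho}\le R$. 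Applied to $k=h\pm a$, this yields $\Phi_{\rho^h}(a)\le L\|a\|_{\exp,\rho}$ for $\|a\|_{\exp,\rho}\le R$, so $\Phi_{\rho^h}(a)\le 1$ as soon as $\|a\|_{\exp,\rho}\le\min(R,1/L)$. Homogeneity of the Minkowski functional then gives $\|a\|_{\exp,\rho^h}\le\max(L,1/R)\,\|a\|_{\exp,\rho}$ for all $a\in\Me^s$, which is exactly the required continuity.

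The main obstacle is the $\Me_*$-norm continuity of the bounded perturbation $\sigma\mapsto\sigma^a$ in the base state; once it is imported from Araki's perturbation theory, part~(i) becomes a routine identification of limits and part~(ii) reduces to a one-line local Lipschitz estimate on $C_\rho$. Without that input, the one-sided bound in~\eqref{eq:preturbed_entropy_h_onesided} supplies only $C_{\rho^h}(a)\ge C_\rho(h+a)$ and I see no way to rule out a strict gap using only the conjugate-duality tools already developed.
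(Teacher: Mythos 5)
Your proposal follows essentially the same route as the paper: approximate $h$ by $a_n\in\Me^s$, apply the bounded chain rule of Theorem~\ref{thm:conjugate}, pass to limits using the norm continuity of $h\mapsto\rho^h$ and of $C_\rho$ from Theorem~\ref{thm:extended_conjugate}, and for (ii) use the identity $\Phi_{\rho^h}(a)=\tfrac12\bigl(C_\rho(h+a)+C_\rho(h-a)\bigr)-C_\rho(h)$ together with local Lipschitz continuity of $C_\rho$ --- this is exactly the paper's proof, with your scaling step $\|a\|_{\exp,\rho^h}\le\max(L,1/R)\|a\|_{\exp,\rho}$ just making explicit what the paper leaves implicit.

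The one point where you diverge is the justification of the key input $(\rho^{a_n})^a\to(\rho^h)^a$: the paper cites Donald \cite[Thm.~1.1]{donald1991continuity} for norm continuity of $\sigma\mapsto\sigma^a$, which is precisely the ``classical fact'' you invoke, so your overall argument stands. However, your two-ingredient sketch of that fact is insufficient as written: the terms of Araki's expansion involve the modular operators $\Delta_{\sigma_n}^{t}$ of the \emph{varying} states, and Powers--St{\o}rmer at the single exponent $1/2$ controls only the vectors $h_{\sigma_n}^{1/2}$, not these operators; absolute convergence of the series cannot bridge that. To make the sketch rigorous one would rewrite each term in Haagerup $L_p$ form as $h_\sigma^{s_1}a\,h_\sigma^{s_2}\cdots a\,h_\sigma^{s_{n+1}}$ with $s_i\ge 0$, $\sum_i s_i=1/2$, and invoke a Birman--Koplienko--Solomyak-type inequality $\|h^s-k^s\|_{1/s}\le\|h-k\|_1^{\,s}$ for \emph{every} exponent $s\in(0,1/2]$, combined with H\"older to get uniform convergence of the series on bounded sets --- or simply cite Donald's theorem, as the paper does. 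Since you flag this step as imported rather than proved, this is a citation repair rather than a logical gap, but as a self-contained argument the sketch would fail.
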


\begin{proof}
 Let $a_n\in\Me^s$ be a sequence such that $\|h-a_n\|_{\exp,\rho}\to 0$. 
For $a\in \Me^s$ we have $C_\rho(a_n+a)\to C_\rho(h+a)$ and 
$\rho^{a_n+a}\to \rho^{h+a}$ strongly, by Theorem \ref{thm:extended_conjugate}. 
Since also $\rho^{a_n}\to \rho^h$ strongly, we have $(\rho^{a_n})^a\to (\rho^h)^a$
strongly, by \cite[Thm. 1.1]{donald1991continuity}. By the chain rule \eqref{eq:chin_rule}, we obtain
\[
\rho^{h+a}=(\rho^h)^a,\qquad C_\rho(h+a)=\rho^{h+a}(1)=(\rho^h)^a(1)=C_{\rho^h}(a). 
\]
To prove (ii),  note that (i)  implies 
\[
\Phi_{\rho^h}(a)=\frac12(C_\rho(h+a)-C_{\rho}(h)+C_\rho(h-a)-C_\rho(h)),\qquad a\in \Me^s.
\]
By continuity of $C_\rho$, this shows that there is 
some $\delta>0$ such that $\Phi_{\rho^h}(a)<1$ whenever $\|a\|_{\exp,\rho}<\delta$, this
proves (ii). 

\end{proof}

\begin{theorem}\label{thm:extended_conjugate_equalities}
Let $h\in \eexp(\Me,\rho)$. Then 
\begin{equation}\label{eq:perturbed_entropy_extended}
\omega(h)+S(\omega\|\rho^h)=S(\omega\|\rho),\qquad \omega \in \Pe_\rho.
\end{equation}
Moreover, $\eexp(\Me,\rho)=\eexp(\Me,\rho^h)$ (equivalent norms) and we have the  chain rule
\begin{equation}\label{eq:chin_rule_extended}
\rho^{h+k}=(\rho^h)^k,\qquad C_\rho(h+k)=C_{\rho^h}(k),\qquad h,k\in \eexp(\Me,\rho).
\end{equation}

\end{theorem}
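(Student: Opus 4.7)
The plan is to first establish part (ii) (the identification of the two exponential Orlicz spaces and the chain rule) and then to deduce \eqref{eq:perturbed_entropy_extended} by a symmetric application of inequality \eqref{eq:preturbed_entropy_h_onesided} from the proof of Theorem \ref{thm:extended_conjugate}. The core trick is a bootstrap that upgrades the one-sided Lemma \ref{lemma:chainrule_onehalf}(ii) into a two-sided norm equivalence. Pick $a_n\in\Me^s$ with $a_n\to -h$ in $\eexp(\Me,\rho)$. By Lemma \ref{lemma:chainrule_onehalf}(ii) the sequence $(a_n)$ is also Cauchy in $\eexp(\Me,\rho^h)$ and converges there to some $\tilde h$. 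Lemma \ref{lemma:chainrule_onehalf}(i) gives $\rho^{h+a_n}=(\rho^h)^{a_n}$, and the norm-to-norm continuity of $k\mapsto \rho^k$ and $k\mapsto (\rho^h)^k$ from Theorem \ref{thm:extended_conjugate} (applied separately in each of the two exponential spaces) passes to the limit to yield $(\rho^h)^{\tilde h}=\rho^0=\rho$.

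Now I invoke Lemma \ref{lemma:chainrule_onehalf}(ii) a second time with base functional $\rho^h$ and perturbation $\tilde h\in\eexp(\Me,\rho^h)$: the norm $\|\cdot\|_{\exp,(\rho^h)^{\tilde h}}=\|\cdot\|_{\exp,\rho}$ is dominated on $\Me^s$ by a multiple of $\|\cdot\|_{\exp,\rho^h}$. Combined with the original direction, the two norms are equivalent on $\Me^s$, so the completions are canonically identified: $\eexp(\Me,\rho)=\eexp(\Me,\rho^h)$ with equivalent norms. Under this identification $\tilde h=-h$, and thus $(\rho^h)^{-h}=\rho$. The general chain rule \eqref{eq:chin_rule_extended} then follows by the same approximation scheme: given $k\in\eexp(\Me,\rho)=\eexp(\Me,\rho^h)$, choose $b_n\in\Me^s$ converging to $k$ in both (equivalent) norms and apply Lemma \ref{lemma:chainrule_onehalf}(i) to get $\rho^{h+b_n}=(\rho^h)^{b_n}$; passing to the strong limit in $\Me_*$ via Theorem \ref{thm:extended_conjugate} yields $\rho^{h+k}=(\rho^h)^k$, and evaluation at $1$ gives $C_\rho(h+k)=C_{\rho^h}(k)$.

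For the perturbed entropy equation, the inequality $S(\omega\|\rho)\ge \omega(h)+S(\omega\|\rho^h)$ for $\omega\in\Pe_\rho$ is exactly \eqref{eq:preturbed_entropy_h_onesided}. The reverse inequality follows by applying \eqref{eq:preturbed_entropy_h_onesided} in $\eexp(\Me,\rho^h)$ to the base $\rho^h$ and perturbation $-h$, invoking $(\rho^h)^{-h}=\rho$: for $\omega\in\Pe_{\rho^h}$ one gets $S(\omega\|\rho^h)\ge -\omega(h)+S(\omega\|\rho)$. Since $\omega(h)$ is finite whenever $\omega\in\llog(\Me,\rho)\supseteq\Pe_\rho$, the two inequalities show that $S(\omega\|\rho)<\infty$ if and only if $S(\omega\|\rho^h)<\infty$, so $\Pe_\rho=\Pe_{\rho^h}$; combining them then gives equality in \eqref{eq:perturbed_entropy_extended}.

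The main obstacle is the apparent circularity between the chain rule \eqref{eq:chin_rule_extended} and the identification $\eexp(\Me,\rho)=\eexp(\Me,\rho^h)$: to even speak of $(\rho^h)^k$ for $k\in\eexp(\Me,\rho)$ one needs $k\in\eexp(\Me,\rho^h)$, but the natural way to see this uses the chain rule. The bootstrap above breaks the circle by first producing a single distinguished perturbation $\tilde h\in\eexp(\Me,\rho^h)$ with $(\rho^h)^{\tilde h}=\rho$, then applying Lemma \ref{lemma:chainrule_onehalf}(ii) in the opposite direction to close the loop.
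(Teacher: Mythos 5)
Your proof is correct and follows essentially the same route as the paper's: approximate by $a_n\in\Me^s$, combine Lemma \ref{lemma:chainrule_onehalf} with the norm-to-norm continuity of $k\mapsto\rho^k$ from Theorem \ref{thm:extended_conjugate} to obtain $(\rho^h)^{-h}=\rho$, and then apply Lemma \ref{lemma:chainrule_onehalf}(ii) and inequality \eqref{eq:preturbed_entropy_h_onesided} symmetrically with $(\rho,h)$ replaced by $(\rho^h,-h)$ to get the norm equivalence and \eqref{eq:perturbed_entropy_extended}. Your only deviations are cosmetic and sound: you keep the limit $\tilde h$ abstract until the norm equivalence licenses the identification $\tilde h=-h$ (a point the paper passes over quickly when asserting $h\in\eexp(\Me,\rho^h)$), and you prove the chain rule \eqref{eq:chin_rule_extended} by density and strong continuity rather than by rerunning the variational uniqueness argument of Theorem \ref{thm:conjugate} from \eqref{eq:perturbed_entropy_extended}.
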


\begin{proof} Let $a_n\in \Me^s$, $\|a_n-h\|_{\exp,\rho}\to 0$. By Lemma
\ref{lemma:chainrule_onehalf} and Theorem \ref{thm:extended_conjugate}, we obtain that
also $h\in \eexp(\Me,\rho^h)$ and
$\|a_n-h\|_{\exp,\rho^h}\to 0$.  Moreover, 
\[
(\rho^h)^{-h}=\lim_n (\rho^h)^{-a_n}=\lim_n \rho^{h-a_n}=\rho^0=\rho.
\]
Replacing $\rho$ by $\rho^h$ and $h$ by $-h$ in \eqref{eq:preturbed_entropy_h_onesided},
we obtain
\[
S(\omega\|\rho^h)\ge -\omega(h)+S(\omega\|\rho).
\]
Together with \eqref{eq:preturbed_entropy_h_onesided}, this implies
\eqref{eq:perturbed_entropy_extended}. Similarly, using  this replacement in Lemma
\ref{lemma:chainrule_onehalf} (ii), we obtain that $\eexp(\Me,\rho)=\eexp(\Me,\rho^h)$ with
equivalent norms. The chain rule \eqref{eq:chin_rule_extended} is now proved from
\eqref{eq:perturbed_entropy_extended} exactly as in the proof of Theorem
\ref{thm:conjugate}.

\end{proof}

\begin{coro}\label{coro:conjugate_explog}
With respect to the dual pair $(L_{\log}(\Me,\rho), \eexp(\Me,\rho))$, we have
$C_\rho=F_\rho^*$ and $F_\rho=C_\rho^*$. Moreover, $C_\rho$ is strictly convex and Gateaux differentiable on
$\eexp(\Me,\rho)$,
with the Gateaux derivative $C'_\rho(h)=\rho^h$, and  $h\mapsto \rho^h$ defines an  injective
and norm-to-weak*-continuous  map $\eexp(\Me,\rho)\to L_{\log}(\Me,\rho)$.

\end{coro}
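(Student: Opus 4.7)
The identity $C_\rho=F_\rho^*$ in the dual pair $(\llog(\Me,\rho),\eexp(\Me,\rho))$ is exactly the content of Theorem \ref{thm:extended_conjugate}. For the converse $F_\rho=C_\rho^*$, I would apply the Fenchel--Moreau biconjugation theorem: $F_\rho$ is convex and proper on $\llog(\Me,\rho)$ by Proposition \ref{prop:Frho_properties}, and weak*-lower semicontinuous on $\llog(\Me,\rho)$ by the remark just before Lemma \ref{lemma:fhro_on_llog}, which combines the $\sigma(\Me_*^s,\Me^s)$-lower semicontinuity of Proposition \ref{prop:Frho_properties}(iii) with the $\llog$-norm boundedness of the sublevel sets $\Se_C$. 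Hence $F_\rho=F_\rho^{**}=C_\rho^*$ in this dual pair.

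Injectivity of $h\mapsto\rho^h$ is immediate from \eqref{eq:perturbed_entropy_extended}: if $\rho^{h_1}=\rho^{h_2}$, subtracting the two identities gives $\omega(h_1-h_2)=0$ for every $\omega\in\Pe_\rho$, and since $\Pe_\rho-\Pe_\rho=\llog(\Me,\rho)$ by Theorem \ref{thm:Llog_properties}(i) and $\llog(\Me,\rho)$ separates points of $\eexp(\Me,\rho)$ by the duality of Proposition \ref{prop:embeddings1}, we conclude $h_1=h_2$. With injectivity in hand, strict convexity follows by a short argument: for distinct $h_1,h_2$ and $h_t:=th_1+(1-t)h_2$ with $t\in(0,1)$, plugging $\omega=\rho^{h_t}$ into Theorem \ref{thm:extended_conjugate} gives
\[
C_\rho(h_t)=t[\omega(h_1)-F_\rho(\omega)]+(1-t)[\omega(h_2)-F_\rho(\omega)]\le tC_\rho(h_1)+(1-t)C_\rho(h_2),
\]
and equality would force $\rho^{h_1}=\rho^{h_t}=\rho^{h_2}$, contradicting injectivity.

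For Gateaux differentiability I would identify the subdifferential via Fenchel--Young duality rather than differentiating by hand. Since $C_\rho$ is convex and continuous on $\eexp(\Me,\rho)$, it is locally Lipschitz and the subdifferential $\partial C_\rho(h)\subseteq\eexp(\Me,\rho)^*=\llog(\Me,\rho)$ is nonempty at every $h$. From $C_\rho=F_\rho^*$, an element $\psi\in\partial C_\rho(h)$ is precisely one that attains the supremum in Theorem \ref{thm:extended_conjugate}, but that maximizer is unique and equal to $\rho^h$. Hence $\partial C_\rho(h)=\{\rho^h\}$, which is equivalent to $C_\rho$ being Gateaux differentiable at $h$ with $C'_\rho(h)=\rho^h$.

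For norm-to-weak* continuity, fix $h_n\to h$ in $\eexp(\Me,\rho)$. Local Lipschitz continuity of $C_\rho$ on a bounded neighborhood of $h$ yields a uniform bound on the subgradients $\rho^{h_n}=C'_\rho(h_n)$ in $\|\cdot\|_{\log,\rho}$; equivalently, substituting $\omega=\rho^{h_n}$ and $a=2h_n$ into the Young inequality $\omega(a)\le C_\rho(a)+F_\rho(\omega)$ and using $F_\rho(\rho^{h_n})=\rho^{h_n}(h_n)-C_\rho(h_n)$ gives $F_\rho(\rho^{h_n})\le C_\rho(2h_n)-C_\rho(h_n)$, which is bounded, so that Lemma \ref{lemma:fhro_on_llog} supplies $\llog$-norm boundedness of $\{\rho^{h_n}\}$. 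Theorem \ref{thm:extended_conjugate} already gives $\rho^{h_n}\to\rho^h$ in $\Me_*$-norm, hence $\rho^{h_n}(a)\to\rho^h(a)$ for every $a\in\Me$, and since $\Me$ is norm-dense in $\eexp(\Me,\rho)$, a standard $\varepsilon/3$-argument combining norm boundedness in $\llog(\Me,\rho)$ with pointwise convergence on a dense subset upgrades this to $\sigma(\llog(\Me,\rho),\eexp(\Me,\rho))$-convergence. I expect this last boundedness step to be the main technical hurdle, as it requires extracting control in the stronger $\llog$-norm from the weaker $\Me_*$-norm convergence already available.
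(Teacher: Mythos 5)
Your proposal is correct and takes essentially the same route as the paper, whose proof is a chain of citations that you have simply spelled out: biconjugation from weak*-lower semicontinuity of $F_\rho$ for $F_\rho=C_\rho^*$, Gateaux differentiability via the unique-subgradient criterion (exactly the content of \cite[Prop.~5.3]{ekeland1999convex} invoked by the paper), injectivity from the entropy identity \eqref{eq:perturbed_entropy_extended}, strict convexity from injectivity as in \cite[Thm.~7.3]{jencova2006aconstruction}, and continuity via $\llog$-bounded subgradients plus density of $\Me^s$. The only blemish is harmless: the Young-inequality computation actually gives $F_\rho(\rho^{h_n})\le C_\rho(2h_n)-2C_\rho(h_n)$, which implies your stated bound since $C_\rho(h)=\rho^h(1)>0$, and boundedness is all you use.
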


\begin{proof} The first part is clear from Theorem \ref{thm:extended_conjugate} and
weak*-lower semicontinuity of $F_\rho$. Differentiability of $C_\rho$ is then obtained from
\cite[Prop. 5.3]{ekeland1999convex}. Injectivity of the map $h\mapsto \rho^h$ follows by Theorem
\ref{thm:extended_conjugate_equalities}, this also implies strict convexity of $C_\rho$
(e.g. as in the proof of \cite[Thm. 7.3]{jencova2006aconstruction}). For continuity,  see e.g.
\cite{zalinescu2002convex}. 

\end{proof}

\subsection{Exponential families in $\Me_*^+$}

 Let $E\subseteq E_{\exp}(\Me,\rho)$ be a
closed subspace. The set
\[
\mathcal E_\rho(E):=\{\rho^h,\ h\in E\}
\]
will be called an exponential family (at $\rho$). The set  $\mathcal E_\rho:=\mathcal
E_\rho(E_{\exp}(\Me,\rho))$ will be called the full exponential family (at $\rho$).
For the following characterization of  elements of $\mathcal E_\rho$, note that by
\eqref{eq:donald} 
\[
\omega\mapsto S(\omega\|\rho)-S(\omega\|\sigma)
\]
defines an affine map $h_{\sigma,\rho}: \mathcal P_\rho \to [-\infty,\infty)$ such that
$h_{\sigma,\rho}(0)=0$.  

\begin{coro}\label{coro:rhoh}
Let $\sigma\in \Me_*^+$. Then $\sigma=\rho^h$ for some $h\in E_{\exp}(\Me,\rho)$ if and
only if there is some $C>-\rho(1)$ such that $h_{\sigma,\rho}$ is bounded and $\sigma(\Me_*,\Me)$-continuous on the set
$\mathcal S_C$. In this case $h$ coincides with $h_{\sigma,\rho}$
on $\mathcal P_\rho$. 

\end{coro}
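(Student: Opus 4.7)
For the forward direction, if $\sigma = \rho^h$ for some $h \in \eexp(\Me,\rho)$, Theorem \ref{thm:extended_conjugate_equalities} immediately gives $h_{\sigma,\rho}(\omega) = \omega(h)$ on $\Pe_\rho$. Proposition \ref{prop:embeddings1} identifies $\eexp(\Me,\rho)$ as the predual of $\llog(\Me,\rho)$, so $\omega \mapsto \omega(h)$ is weak*-continuous on $\llog(\Me,\rho)$; combined with the $\sigma(\Me_*,\Me)$-compactness of $\Se_C$ (Proposition \ref{prop:Frho_properties}(iii)), its norm-boundedness in $\llog(\Me,\rho)$ (Lemma \ref{lemma:fhro_on_llog}), and the coincidence of the weak*-topology with $\sigma(\Me_*,\Me)$ on norm-bounded subsets, this yields the required boundedness and $\sigma(\Me_*,\Me)$-continuity for every $C > -\rho(1)$.

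For the converse, I would first note that $\rho \in \Se_C$ (since $F_\rho(\rho) = -\rho(1) < C$), so boundedness forces $h_{\sigma,\rho}(\rho) = -S(\rho\|\sigma) \in \mathbb{R}$ and hence $\sigma$ is faithful. Next, for each $\omega \in \Pe_\rho$, the convex combination $\omega_\alpha := \alpha\rho + (1-\alpha)\omega$ satisfies $F_\rho(\omega_\alpha) \le -\alpha\rho(1) + (1-\alpha)F_\rho(\omega) \to -\rho(1) < C$ as $\alpha \uparrow 1$, so $\omega_\alpha \in \Se_C$ eventually. The additivity and positive homogeneity of $h_{\sigma,\rho}$ on $\Pe_\rho$ --- consequences of identity \eqref{eq:donald} and the scaling $S(t\omega\|\rho) = tS(\omega\|\rho) + t\omega(1)\log t$ --- then give $h_{\sigma,\rho}(\omega) = (h_{\sigma,\rho}(\omega_\alpha) - \alpha h_{\sigma,\rho}(\rho))/(1-\alpha) \in \mathbb{R}$, showing $h_{\sigma,\rho}$ is real-valued on $\Pe_\rho$ and extending linearly to a functional $\tilde h$ on $\llog(\Me,\rho) = \Pe_\rho - \Pe_\rho$. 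The same convex-combination trick applied to a converging net upgrades the $\sigma(\Me_*,\Me)$-continuity from $\Se_C$ to any $\Se_{C'}$ by choosing $\alpha$ close to $1$ uniformly in $\omega \in \Se_{C'}$.

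The crucial step is verifying that $\tilde h$ is weak*-continuous on $\llog(\Me,\rho)$, placing it in $\eexp(\Me,\rho)$. By Theorem \ref{thm:Llog_properties}(iii), each element of the closed unit ball of $\llog(\Me,\rho)$ is of the form $\frac{1}{2}(\omega_+ - \omega_-)$ with $\omega_\pm \in \Se_{2-\rho(1)}$, a $\sigma(\Me_*,\Me)$-compact set; extracting subnets with convergent decompositions from a weak*-convergent net in the unit ball and invoking the upgraded continuity of $h_{\sigma,\rho}$ on $\Se_{2-\rho(1)}$ shows $\tilde h$ is weak*-continuous on the unit ball, and the Krein--\v{S}mulian theorem extends this to all of $\llog(\Me,\rho)$, producing $h \in \eexp(\Me,\rho)$ representing $\tilde h$.

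Finally, setting $\tau := \rho^h$, Theorem \ref{thm:extended_conjugate_equalities} combined with $\omega(h) = h_{\sigma,\rho}(\omega)$ yields $S(\omega\|\sigma) = S(\omega\|\tau)$ on $\Pe_\rho$; taking $\omega = \tau$ gives $S(\tau\|\sigma) = 0$, so $\tau(1) \le \sigma(1)$ by Proposition \ref{prop:Frho_properties}(ii). A density/lower-semicontinuity argument approximating $\sigma$ by elements of $\Pe_\rho$ then extends the entropy identity to $\omega = \sigma$, forcing $\sigma(1) = \tau(1)$ and hence $\sigma = \tau = \rho^h$ by the uniqueness clause of Proposition \ref{prop:Frho_properties}(ii); the coincidence $h = h_{\sigma,\rho}$ on $\Pe_\rho$ is built into the construction. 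The hardest ingredient is the Krein--\v{S}mulian step, where the ball decomposition of Theorem \ref{thm:Llog_properties}(iii) together with the entropy-ball compactness are essential for promoting $\sigma(\Me_*,\Me)$-continuity on compact sublevel sets to genuine weak*-continuity on the full Banach space $\llog(\Me,\rho)$.
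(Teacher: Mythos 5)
Your forward direction and your construction of $h$ in the converse track the paper's proof almost step for step: the same affine scaling trick $t\omega+(1-t)\rho$ to upgrade boundedness and continuity from one sublevel set $\Se_C$ to every $\Se_{C'}$, the same use of Theorem \ref{thm:Llog_properties}(iii) to place the unit ball of $\llog(\Me,\rho)$ inside $\frac12(\Se_{C'}-\Se_{C'})$ (your constant $C'=2-\rho(1)$ is in fact the correct one), and the same passage from weak*-continuity on bounded sets to membership in the predual $\eexp(\Me,\rho)$. The paper compresses this last step into one sentence; your explicit subnet extraction from the compact decompositions plus the Krein--\v{S}mulian corollary is precisely the suppressed argument, and it is correct. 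Up to the production of $h\in\eexp(\Me,\rho)$ with $\omega(h)=h_{\sigma,\rho}(\omega)$ on $\Pe_\rho$, the proposal is sound and faithful to the paper.

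The gap is in your final identification $\sigma=\rho^h$. Lower semicontinuity of $S(\cdot\|\tau)$ only gives $\liminf_n S(\omega_n\|\tau)\ge S(\sigma\|\tau)$ for $\omega_n\to\sigma$, which is the wrong direction; to ``extend the entropy identity to $\omega=\sigma$'' you would need approximants $\omega_n\in\Pe_\rho$ with $S(\omega_n\|\sigma)\to 0$, and you construct none --- producing such approximants is essentially equivalent to knowing $\sigma\in\Pe_\rho$, i.e., to what is being proved, so the step is circular as stated. Note also that your intermediate fact $S(\tau\|\sigma)=0$ is genuinely too weak on its own for unnormalized functionals: on a two-point commutative algebra, $\tau=(1,1)$ and $\sigma=(2,\frac12)$ are faithful, distinct, and satisfy $S(\tau\|\sigma)=\log\frac12+\log 2=0$, so no amount of massaging that single equation closes the argument. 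The paper finishes much more cheaply, via the equality case of Proposition \ref{prop:Frho_properties}(ii): since $\omega(h)=h_{\sigma,\rho}(\omega)$ gives $\omega(h)-F_\rho(\omega)=-F_\sigma(\omega)$ on $\Pe_\rho$, and $-F_\sigma(\omega)\le\sigma(1)$ with equality if and only if $\omega=\sigma$, evaluating at $\omega=\sigma$ shows that the supremum of $\omega(h)-F_\rho(\omega)$ --- which by Theorem \ref{thm:extended_conjugate} equals $C_\rho(h)$ and is attained at the unique maximizer $\rho^h$ --- is attained at $\sigma$, whence $\sigma=\rho^h$ and $h=h_{\sigma,\rho}$ on $\Pe_\rho$ by construction. (That evaluation tacitly uses $F_\rho(\sigma)<\infty$, i.e., reads the hypothesis as placing $\sigma$ in $\Pe_\rho$; this is exactly the point where your alternative route stalls, and it is resolved by a one-line pinch rather than an approximation.) Replacing your last paragraph with this equality-case argument repairs the proof.
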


\begin{proof} Assume that $\sigma=\rho^h$ for $h\in E_{\exp}(\Me,\rho)$, then by Theorem
\ref{thm:extended_conjugate_equalities}, we see that $h(\omega)=h_{\sigma,\rho}(\omega)$ for $\omega\in
\mathcal P_\rho$. Since   the $\sigma(\Me_*,\Me)$-topology coincides with the weak*-topology on $\mathcal S_C$,
 the assertion follows from $L_{\log}(\Me,\rho)=E_{\exp}(\Me,\rho)^*$.

 Assume conversely that $h_{\sigma,\rho}$ has the stated properties on $\mathcal S_C$ for
 some $C>-\rho(1)$. Then the same  is true for any $C'\in \mathbb R$, since for $C'>C$, there is some $t\in
 [0,1]$ such that $F_\rho(t\omega+(1-t)\rho)\le tC'-(1-t)\rho(1)\le C$ for any $\omega\in
 \mathcal S_{C'}$.

Now note that by Theorem \ref{thm:Llog_properties} (iii) and Proposition
\ref{prop:Frho_properties} (ii), the unit ball in $L_{\log}(\Me,\rho)$ is
a subset of $\mathcal S_C-\mathcal S_C$ for $C=2\rho(1)-1$, so that $h_{\sigma,\rho}$
extends to a bounded linear map on $L_{\log}(\Me,\rho)$, moreover, since the
weak*-topology coincides with the $\sigma(\Me_*,\Me)$-topology on bounded subsets in
$L_{\log}(\Me,\rho)$, this extension is weak*-continuous and hence defines an element
$h\in E_{\exp}(\Me,\rho)$. For $\omega\in \Pe_\rho$,  we get 
\[
\omega(h)-F_\rho(\omega)=-F_\sigma(\omega)\le
\sigma(1)=\sigma(h)-F_\rho(\sigma),
\]
so that $\sigma=\rho^h$.

\end{proof}

We are now ready to introduce a Banach manifold structure on 
$\tilde{\mathcal F}$
 using the parametrization $h\mapsto \rho^h$, similarly as in \cite{jencova2006aconstruction} for the
set  of  faithful states. For $\rho\in \tilde \Fe$, let
$V_\rho$ be the open unit ball in $E_{\exp}(\Me,\rho)$ and $s_\rho: V_\rho\to\tilde
\Fe$ the map $h\mapsto \rho^h$.  We construct a $C^\infty$-atlas on $\tilde\Fe$ as 
\[
\{(U_\rho, e_\rho), \ \rho\in \tilde\Fe\}
\]
where $U_\rho=s_\rho(V_\rho)$ and $e_\rho=s_\rho^{-1}\vert_{U_\rho}$. 
To show that this is indeed a $C^\infty$-atlas, it is enough to notice that 
if $U_{\rho_1}\cap U_{\rho_2}\ne \emptyset$, then we must have $\rho_1=\rho_2^k$ for some 
 $k=E_{\exp}(\Me,\rho_2)$, and 
\[
 e_{\phi_1}(U_{\rho_1}\cap U_{\rho_2})=\{h_1\in E_{\exp}(\Me,\rho_1),\
 \|h_1\|_{\exp,\rho_1}<1,\ \|h_1+k\|_{\exp,\rho_2}<1\}.
 \]
The proof is finished  similarly as in \cite{jencova2006aconstruction}, using the
equivalence of the two norms $\|\cdot\|_{\exp,\rho_1}$ and $\|\cdot\|_{\exp,\rho_2}$.
It is also clear that the connected components of $\tilde\Fe$ are exactly the full
exponential families $\mathcal E_\rho$, $\rho\in \tilde\Fe$.

\subsection{The canonical divergence}

Using  Corollary \ref{coro:conjugate_explog}, we can
introduce a canonical divergence in the connected component $\mathcal E_\rho$, $\rho\in
\tilde\Fe$, as the
Bregman divergence associated with $C_\rho$:
\[
D_\rho(h\|k):= C_\rho(h)-C_\rho(k)-\<C'_\rho(k),h-k\>,\qquad h,k\in \eexp(\Me,\rho).
\]
\begin{theorem}\label{thm:canonical} Let $\rho\in \tilde \Fe$, $h,k\in \eexp(\Me,\rho)$.
Then
\begin{enumerate}
\item[(i)] $D_\rho(h\|k)=S(\rho^k\|\rho^h)-(\rho^k-\rho^h)(1)$.

\item[(ii)]  $D_\rho(h\|k)\ge 0$, with equality if and only if $h=k$.
\item[(iii)] The function $D_\rho: E_{\exp}(\Me,\rho)\times E_{\exp}(\Me,\rho)\to \mathbb R$ is jointly 
continuous, and it is  strictly convex and Gateaux differentiable in the first variable.
\item[(iv)] $D_\rho$ satisfies the generalized Pythagorean relation
\[
D(h\|k)+D(k\|l)=D(h\|l)+(\rho^k-\rho^l)(k-h),\quad h,k,l\in\eexp(\Me,\rho).
\]
\end{enumerate}

\end{theorem}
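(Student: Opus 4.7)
My plan is to unfold the Bregman definition using the identifications $C_\rho(h)=\rho^h(1)$ and $C'_\rho(h)=\rho^h$ from Corollary \ref{coro:conjugate_explog}, giving the working formula
\[
D_\rho(h\|k) = \rho^h(1) - \rho^k(1) - \rho^k(h-k),
\]
and then use Theorem \ref{thm:extended_conjugate_equalities} to pass to the relative-entropy form.

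For (i), the key identity is $\rho^k(k)=S(\rho^k\|\rho)$. This follows from the fact that $\rho^k$ attains the supremum in the Legendre transform, so $C_\rho(k)=\rho^k(k)-F_\rho(\rho^k)$; combining with $C_\rho(k)=\rho^k(1)$ and the definition of $F_\rho$ gives the claim. Applying \eqref{eq:perturbed_entropy_extended} with $\omega=\rho^k$ then gives $\rho^k(h)+S(\rho^k\|\rho^h)=S(\rho^k\|\rho)=\rho^k(k)$, i.e.\ $\rho^k(k-h)=S(\rho^k\|\rho^h)$, and substitution into the working formula yields (i). Step (ii) is then immediate: non-negativity of $D_\rho$ is a general property of Bregman divergences of convex functions, and $D_\rho(h\|k)=0$ forces $h=k$ by strict convexity of $C_\rho$ together with injectivity of $h\mapsto\rho^h$ (both granted by Corollary \ref{coro:conjugate_explog}). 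Alternatively, one can start from (i) and apply the chain $S(\rho^k\|\rho^h)\ge\rho^k(1)\log(\rho^k(1)/\rho^h(1))\ge\rho^k(1)-\rho^h(1)$, where the first bound is Proposition \ref{prop:Frho_properties}(ii) applied in the pair $(\rho^k,\rho^h)$ and the second is $x\log(x/y)\ge x-y$; equality forces $\rho^k(1)=\rho^h(1)$ and then $\rho^k=\rho^h$, and injectivity of the parametrization gives $h=k$.

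For (iii), I read off continuity of $D_\rho$ jointly in $(h,k)$ from the working formula: $C_\rho$ is continuous on $E_{\exp}(\Me,\rho)$ (Theorem \ref{thm:extended_conjugate}), $k\mapsto\rho^k$ is norm-to-norm continuous into $L_{\log}(\Me,\rho)=E_{\exp}(\Me,\rho)^*$, and hence $(h,k)\mapsto\rho^k(h-k)$ is jointly continuous by boundedness of the duality pairing. Fixing $k$, $D_\rho(\cdot\|k)=C_\rho(\cdot)+\text{affine term}$, so strict convexity and Gateaux differentiability in the first variable follow at once from the corresponding properties of $C_\rho$ in Corollary \ref{coro:conjugate_explog}; in fact $(\partial_1 D_\rho)(h\|k)=\rho^h-\rho^k$.

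Finally (iv) is a purely algebraic identity: writing all three divergences via the working formula, the $C_\rho$-terms telescope, and what remains to verify is $-\rho^k(h-k)-\rho^l(k-l) = -\rho^l(h-l)+(\rho^k-\rho^l)(k-h)$, which expands on both sides to $\rho^k(k-h)-\rho^l(k-l)+\rho^l(h-k)$. The only obstacle in the entire theorem is establishing the identity $\rho^k(k-h)=S(\rho^k\|\rho^h)$ in (i); once this bridge between the Bregman form and the relative-entropy form is in place, (ii)–(iv) are bookkeeping based on Corollary \ref{coro:conjugate_explog} and standard properties of Bregman divergences.
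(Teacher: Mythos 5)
Your route is essentially the paper's: unfold the Bregman definition via $C_\rho(h)=\rho^h(1)$ and $C'_\rho(k)=\rho^k$, obtain (i) from Theorem \ref{thm:extended_conjugate_equalities} --- your bridge identity $\rho^k(k-h)=S(\rho^k\|\rho^h)$, derived from $\rho^k(k)=S(\rho^k\|\rho)$ (attainment of the supremum plus $C_\rho(k)=\rho^k(1)$) and \eqref{eq:perturbed_entropy_extended} with $\omega=\rho^k$, is exactly the intended computation --- get (ii) from strict convexity of $C_\rho$ and the Bregman inequality, and verify (iv) by telescoping. Your alternative entropy-theoretic proof of (ii), chaining Proposition \ref{prop:Frho_properties}(ii) for the pair $(\rho^k,\rho^h)$ with $x\log(x/y)\ge x-y$ and tracking both equality cases, is a correct self-contained byproduct of (i) that the paper does not spell out. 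In (iv) note a typo: the common value of the two sides is $\rho^k(k-h)-\rho^l(k-l)$, equivalently $\rho^k(k-h)-\rho^l(h-l)+\rho^l(h-k)$; your three-term expression carries an extra $\rho^l(h-k)$. The identity and the telescoping argument are nonetheless correct.

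The genuine flaw is in (iii). You assert that $k\mapsto\rho^k$ is norm-to-norm continuous into $\llog(\Me,\rho)=\eexp(\Me,\rho)^*$ and deduce joint continuity of $(h,k)\mapsto\rho^k(h-k)$ from this. Nothing in the paper supplies that claim: Theorem \ref{thm:extended_conjugate} gives norm-to-norm continuity only into $\Me_*$, whose norm is strictly weaker than $\|\cdot\|_{\log,\rho}$, and Corollary \ref{coro:conjugate_explog} gives only norm-to-weak*-continuity into $\llog(\Me,\rho)$. The stronger statement is not an innocuous shortcut: a convex Gateaux differentiable function whose derivative is norm-to-norm continuous is automatically Fr\'echet differentiable with continuous derivative, whereas the paper explicitly records (Conclusions) that $C_\rho$ is only known to be Gateaux differentiable --- so your claim, if it held, would strengthen the paper's main results and certainly cannot be cited as known. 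The step is repairable along the paper's own lines: if $h_n\to h$ and $k_n\to k$ in $\eexp(\Me,\rho)$, then $\rho^{k_n}\to\rho^k$ in the weak*-topology, hence $(\rho^{k_n})$ is norm-bounded in $\llog(\Me,\rho)$ by the uniform boundedness principle, and
\[
\rho^{k_n}(h_n-k_n)-\rho^k(h-k)=\rho^{k_n}\bigl((h_n-k_n)-(h-k)\bigr)+(\rho^{k_n}-\rho^k)(h-k)\longrightarrow 0,
\]
the first term by boundedness of the sequence and the second by weak* convergence; joint continuity of $D_\rho$ then follows from continuity of $C_\rho$, as in the paper. Your remaining observations in (iii) --- that $D_\rho(\cdot\|k)$ is $C_\rho$ plus an affine term, with $(\partial_1 D_\rho)(h\|k)=\rho^h-\rho^k$ --- are correct as stated.
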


\begin{proof} The statement (i) is obtained  from Theorem \ref{thm:extended_conjugate_equalities}. 
The rest follows by the properties of the Bregman divergence. More explicitly, (ii) can be
seen from  \cite[Prop. 5.4]{ekeland1999convex} and strict convexity of $C_\rho$. To prove joint continuity, let $h_n$ and $k_n$ be two sequences such that $h_n\to h$, $k_n\to k$ in $E_{\exp}(\Me,\rho)$. By Corollary \ref{coro:conjugate_explog}, we have $C'_\rho(k)=\rho^k$. 
 Since $k\mapsto \rho^k$ is norm to weak*-continuous, it follows that
$\rho^{k_n}$ is a norm-bounded sequence in $L_{\log}(\Me,\rho)$, this implies that
$\rho^{k_n}(h_n-k_n)\to \rho^k(h-k)$, we then have $D_\rho(\rho^{k_n}\|\rho^{h_n})\to
D_\rho(\rho^h\|\rho^k)$ by continuity
of $C_\rho$. The rest of (iii) is straightforward from properties of $C_\rho$.  The
Pythagorean relation (iv) is clear  from the definition.

\end{proof}

\subsection{Sufficient channels and invariance}

Let $\mathcal E$ be a subset of nonzero elements in $\Me_*^+$ and let $T: L_1(\Me)\to
L_1(\Ne)$ be a channel.  We say that $T$ is sufficient for
$\mathcal E$ if there is a channel $S:L_1(\Ne)\to L_1(\Me)$ such that 
\[
S\circ T(\sigma)=\sigma,\qquad \forall \sigma\in \mathcal E.
\]
In this situation, $S$ will be called a recovery channel for $T$ on $\mathcal E$.

The notion of a sufficient channel was introduced by Petz \cite{petz1986sufficient,petz1988sufficiency} in the situation
when $\mathcal E$ is a set of states. Since the channels are trace preserving, the
extension to positive functionals is straightforward.

\begin{theorem}[\cite{petz1986sufficient,petz1988sufficiency}]\label{thm:suff} Let $T$ be a 2-positive channel and
assume that there is some faithful element $\rho\in \mathcal E$ such that $\mathcal
E\subseteq \Pe_\rho$. The
following are equivalent.
\begin{enumerate}
\item[(i)] $T$ is sufficient for $\mathcal E$;
\item[(ii)] $S(T(\sigma)\|T(\rho))=S(\sigma\|\rho)$;
\item[(iii)] The Petz dual $T_\rho$ of $T$ with respect to $\rho$ is a recovery channel
for $\mathcal E$.

\end{enumerate}
\end{theorem}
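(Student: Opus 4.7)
The plan is to handle the easy directions (iii) $\Rightarrow$ (i) $\Rightarrow$ (ii) directly and then reduce (ii) $\Rightarrow$ (iii) to Petz's cocycle characterization of sufficiency.

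For (iii) $\Rightarrow$ (i): the Petz dual $T_\rho$ automatically satisfies $T_\rho \circ T(\rho) = \rho$, so assumption (iii) exhibits $T_\rho$ itself as a recovery channel on $\mathcal E$. For (i) $\Rightarrow$ (ii): apply Proposition~\ref{prop:monotone_F} to $T$ (together with trace preservation) to get $S(T(\sigma)\|T(\rho)) \leq S(\sigma\|\rho)$, then apply it again to any recovery channel $R$ to obtain $S(\sigma\|\rho) = S(R T(\sigma)\|R T(\rho)) \leq S(T(\sigma)\|T(\rho))$; this forces equality.

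For (ii) $\Rightarrow$ (iii), the strategy is to derive from equality in monotonicity the Connes cocycle identity
\[
T^*\bigl([DT(\sigma) : DT(\rho)]_t\bigr) = [D\sigma : D\rho]_t, \qquad t \in \mathbb R,
\]
and then read off $T_\rho(T(\sigma)) = \sigma$ from the defining intertwining property of the Petz dual. Indeed, $T^*_\rho$ is designed precisely so that cocycles transport back through it: the displayed identity is equivalent to $[D(T_\rho T(\sigma)) : D\rho]_t = [D\sigma : D\rho]_t$ for all real $t$, and injectivity of $\omega \mapsto [D\omega : D\rho]_\cdot$ on normal positive functionals (using faithfulness of $\rho$) then gives $T_\rho(T(\sigma)) = \sigma$. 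The cocycle identity itself is extracted from the Araki formula for $S$: the saturation $S(T(\sigma)\|T(\rho)) = S(\sigma\|\rho)$ forces, via the Kadison--Schwarz inequality $T^*(y^*y) \geq T^*(y)^* T^*(y)$ (which uses 2-positivity of $T^*$, equivalent to 2-positivity of $T$), an operator equality whose analytic continuation in $t$ is the desired identity. Alternatively, one can work directly with near-optimizers in the Kosaki variational formula \eqref{eq:kosaki}, pulling them back through $T^*$ and using saturation to propagate equality to the limit.

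The main obstacle is producing the cocycle identity from equality in monotonicity; 2-positivity enters essentially via Kadison--Schwarz, and without it both the argument and the positivity of the candidate recovery map $T_\rho$ break down. The subsequent passage from cocycles to functionals relies only on standard modular theory of $\sigma$-finite von Neumann algebras, as collected in the references of Section 2.
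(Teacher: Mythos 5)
The paper offers no internal proof of this theorem: it is quoted verbatim from Petz \cite{petz1986sufficient,petz1988sufficiency}, so the only fair comparison is with the cited literature, and your outline does follow that standard route. Your easy directions are fine: (iii)$\Rightarrow$(i) is immediate since $T_\rho$ is a channel (preadjoint of the unital positive normal $T^*_\rho$) with $T_\rho\circ T(\rho)=\rho$; and (i)$\Rightarrow$(ii) by applying Proposition~\ref{prop:monotone_F} twice is correct, provided you note (as you implicitly do) that the hypothesis $\rho\in\mathcal E$ guarantees the recovery channel also fixes $T(\rho)\mapsto\rho$, and that $\mathcal E\subseteq\Pe_\rho$ guarantees $S(\sigma\|\rho)<\infty$, without which the equality in (ii) would be vacuously satisfiable and could not imply (i).

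The genuine gap is in (ii)$\Rightarrow$(iii), where the two pivotal steps are asserted rather than proved. First, the claim that saturation of monotonicity ``forces, via Kadison--Schwarz, an operator equality whose analytic continuation in $t$'' is the cocycle identity $T^*([DT(\sigma):DT(\rho)]_t)=[D\sigma:D\rho]_t$ compresses the entire substance of Petz's theorem into one sentence: the Kadison--Schwarz inequality alone yields nothing pointwise in $t$; one needs either a differentiation argument at $t=0$ combined with the variational expression for $S$ \cite{petz1988variational,kosaki1986relative,ohya1993quantum}, or the interpolation/integral-representation machinery as in \cite{hiai2021quantum}, and this is where 2-positivity is actually consumed. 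Your fallback suggestion (``pulling near-optimizers in \eqref{eq:kosaki} back through $T^*$'') does not obviously work either, since near-optimizers for the image pair need not pull back to near-optimizers, and equality of suprema does not by itself propagate to an operator identity. Second, the passage from the cocycle identity to $T_\rho(T(\sigma))=\sigma$ is not a formal consequence of ``the defining intertwining property of the Petz dual'': the identity holds a priori only for real $t$, and one must analytically continue to $t=-i/2$ (justified by uniform boundedness of the cocycles and the definition of $T^*_\rho$ via $h_\rho^{1/2}\cdot h_\rho^{1/2}$) to reach the $L_2$-level identity from which recovery follows; this is a separate lemma in Petz's papers and in \cite{hiai2021quantum}. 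Your final step (injectivity of $\omega\mapsto[D\omega:D\rho]_t$ for faithful $\rho$) is standard and correct. In short: the architecture is the right one and matches the cited source, but as written the proposal records the statement of the hard implication rather than a proof of it.
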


We will study the case when  $\mathcal E=\mathcal E_\rho(E)$ is
an exponential family at some $\rho\in \tilde \Fe$. Then the conditions of the above theorem
are fulfilled.

\begin{theorem}\label{thm:sufficient} Let $\rho \in \tilde \Fe$, $h\in E_{\exp}(\Me,\rho)$.
Let $T:L_1(\Me)\to L_1(\Ne)$ be a 2-positive channel and let $T_\rho$ be the Petz dual of $T$ with
respect to $\rho$. The following are equivalent.
\begin{enumerate}
\item[(i)] $T$ is sufficient with respect to $\{\rho,\rho^h\}$.
\item[(ii)] $T(\rho^h)=T(\rho)^{h_0}$ for some $h_0\in E_{\exp}(\Ne,T(\rho))$ and
$h=T^*(h_0)$
\item[(iii)] $T^*\circ T_\rho^*(h)=h$.

\end{enumerate}

\end{theorem}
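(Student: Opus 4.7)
My plan is to prove the three-way equivalence in the cycle (iii) $\Rightarrow$ (ii) $\Rightarrow$ (i) $\Rightarrow$ (iii). The workhorse throughout is the composite channel $\Lambda := T_\rho\circ T:L_1(\Me)\to L_1(\Me)$, a (2-positive) channel with $\Lambda(\rho)=\rho$; by Proposition~\ref{prop:monotone}, its adjoint $\Lambda^*=T^*\circ T_\rho^*$ extends to a contraction $\eexp(\Me,\rho)\to\eexp(\Me,\rho)$. I will repeatedly use two monotonicity facts: $F_{T(\rho)}(T(\omega))\le F_\rho(\omega)$ (Proposition~\ref{prop:monotone_F}), and the companion bound $C_{T(\rho)}(T_\rho^*(h))\le C_\rho(h)$, which I obtain by applying Proposition~\ref{prop:monotone_F} to the channel $T_\rho$ (noting $T_\rho(T(\rho))=\rho$) inside the variational formula of Corollary~\ref{coro:conjugate_explog}.

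For (iii) $\Rightarrow$ (ii), set $h_0:=T_\rho^*(h)\in\eexp(\Ne,T(\rho))$, so (iii) reads $T^*(h_0)=h$. Evaluating the functional $\omega'\mapsto\omega'(h_0)-F_{T(\rho)}(\omega')$ at $\omega'=T(\rho^h)\in\Pe_{T(\rho)}$ gives
\[
T(\rho^h)(h_0)-F_{T(\rho)}(T(\rho^h))=\rho^h(T^*(h_0))-F_{T(\rho)}(T(\rho^h))\ge \rho^h(h)-F_\rho(\rho^h)=C_\rho(h),
\]
using $T^*(h_0)=h$ and Proposition~\ref{prop:monotone_F}. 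On the other hand, the companion bound gives $C_{T(\rho)}(h_0)\le C_\rho(h)$, so the chain pinches to equality and $T(\rho^h)$ attains the supremum defining $C_{T(\rho)}(h_0)$; by the uniqueness clause of Theorem~\ref{thm:extended_conjugate}, $T(\rho^h)=T(\rho)^{h_0}$.

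For (ii) $\Rightarrow$ (i), specializing $\omega'=T(\rho)^{h_0}$ in Theorem~\ref{thm:extended_conjugate_equalities} applied in $\Ne$ yields $S(T(\rho)^{h_0}\|T(\rho))=T(\rho)^{h_0}(h_0)$; using (ii) this collapses to
\[
S(T(\rho^h)\|T(\rho))=T(\rho^h)(h_0)=\rho^h(T^*(h_0))=\rho^h(h)=S(\rho^h\|\rho),
\]
and Theorem~\ref{thm:suff} (which needs 2-positivity) delivers sufficiency. Finally, for (i) $\Rightarrow$ (iii), Theorem~\ref{thm:suff}(iii) gives $\Lambda(\rho^h)=\rho^h$; then for any $\omega\in\Pe_\rho$,
\[
\omega(\Lambda^*(h))-F_\rho(\omega)=\Lambda(\omega)(h)-F_\rho(\omega)\le \Lambda(\omega)(h)-F_\rho(\Lambda(\omega))\le C_\rho(h),
\]
while at $\omega=\rho^h$ the value is exactly $\rho^h(h)-F_\rho(\rho^h)=C_\rho(h)$, so $\rho^h$ maximises $\omega\mapsto \omega(\Lambda^*(h))-F_\rho(\omega)$; hence $\rho^h=\rho^{\Lambda^*(h)}$, and injectivity of $h\mapsto\rho^h$ (Corollary~\ref{coro:conjugate_explog}) forces $\Lambda^*(h)=h$, which is (iii). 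The main obstacle is less conceptual than notational: keeping straight that $T^*$ and $T_\rho^*$ act between different exponential Orlicz spaces, and verifying that each pinch argument collapses to an equality of \emph{functionals} (via uniqueness of the maximiser) rather than merely of supremum values.
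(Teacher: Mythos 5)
Your proof is correct, and every ingredient you invoke is indeed available: the extension of $T_\rho^*$ to a contraction $\eexp(\Me,\rho)\to\eexp(\Ne,T(\rho))$ (Proposition \ref{prop:monotone} applied to the channel $T_\rho$, using $T_\rho\circ T(\rho)=\rho$), the monotonicity of $F$ (Proposition \ref{prop:monotone_F}), the variational formula with its unique maximizer (Theorem \ref{thm:extended_conjugate}), injectivity of $h\mapsto\rho^h$ (Corollary \ref{coro:conjugate_explog}), and Petz's theorem (Theorem \ref{thm:suff}). The computational core coincides with the paper's: your ``companion bound'' $C_{T(\rho)}(T_\rho^*(h))\le C_\rho(h)$ is exactly the content of the paper's inequality \eqref{eq:suf2} (the paper derives it by writing $\omega_0(T_\rho^*(h))=F_\rho(T_\rho(\omega_0))-F_{\rho^h}(T_\rho(\omega_0))$ via \eqref{eq:perturbed_entropy_extended}, you by two applications of monotonicity; both are sound), and all three of your legs are pinching arguments against this bound, with the uniqueness-of-maximizer clause doing the work of upgrading equality of suprema to equality of functionals, just as in the paper. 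What genuinely differs is the logical arrangement: the paper uses a hub, proving (i) $\Rightarrow$ (ii) and (iii) jointly --- using both the entropy-preservation and recovery-map clauses of Theorem \ref{thm:suff} to show that $T(\rho^h)$ attains the supremum in \eqref{eq:suf2} --- and then returning (ii) $\Rightarrow$ (i) and (iii) $\Rightarrow$ (i) separately, whereas you close the cycle (iii) $\Rightarrow$ (ii) $\Rightarrow$ (i) $\Rightarrow$ (iii). Your arrangement has a modest structural payoff: the leg (iii) $\Rightarrow$ (ii) is purely variational and never touches sufficiency (in the paper, (ii) is reached only through (i), hence through Theorem \ref{thm:suff}), and it cleanly isolates where each clause of Petz's theorem enters --- entropy preservation in (ii) $\Rightarrow$ (i), the recovery identity $T_\rho\circ T(\rho^h)=\rho^h$ in (i) $\Rightarrow$ (iii). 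The paper's hub, in exchange, extracts the explicit covariance $T(\rho^h)=T(\rho)^{T_\rho^*(h)}$ directly from sufficiency, which is what feeds Corollary \ref{coro:suf}; in your cycle this identification is still available, but only by composing (i) $\Rightarrow$ (iii) $\Rightarrow$ (ii).
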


\begin{proof} Since $T_\rho^*$ defines a map $E_{\exp}(\Me,\rho)\to E_{\exp}(\Ne,T(\rho))$,
we have for $\omega_0\in \mathcal P_{T(\rho)}$,
\begin{align}
\omega_0(T_\rho^*(h))-F_{T(\rho)}(\omega_0)&=
F_\rho(T_\rho(\omega_0))-F_{\rho^h}(T_\rho(\omega_0))-F_{T(\rho)}(\omega_0)\notag \\
&\le -F_{\rho^h}(T_\rho(\omega_0))\le \rho^h(1)=C_\rho(h).\label{eq:suf2}
\end{align}
Here we have used Theorem \ref{thm:extended_conjugate_equalities},  monotonicity of relative entropy together with the fact that
$\rho=T_\rho\circ T(\rho)$, and Proposition \ref{prop:Frho_properties} (ii).
Assume (i), then by Theorem
\ref{thm:sufficient} (ii) and (iii) we get
\[
T(\rho^h)(T_\rho^*(h))-F_{T(\rho)}(T(\rho^h))= \rho^h(h)-F_{\rho}(\rho^h)=C_\rho(h).
\]
It follows that the maximum in  \eqref{eq:suf2} is attained at $\omega_0=T(\rho^h)$, so
that  $T(\rho^h)=T(\rho)^{h_0}$, with
$h_0=T_\rho^*(h)$. By a similar computation, we obtain
\[
\omega(T^*(h_0))-F_\rho(\omega)\le C_{T(\rho)}(h_0),\qquad \forall \omega \in \mathcal P_\rho
\]
and equality is attained for $\omega=\rho^h$. Hence  $\rho^h=\rho^{T^*(h_0)}$, so that
$h=T^*(h_0)=T^*(T_\rho^*(h))$ by injectivity of the map $h\mapsto \rho^h$. This proves
that (i) implies both (ii) and (iii).
Assume (ii), then we have 
\begin{align*}
C_\rho(h)&=\rho^h(1)=T(\rho^h)(1)=T(\rho)^{h_0}(1)=C_{T(\rho)}(h_0)
=T(\rho^h)(h_0)-F_{T(\rho)}(T(\rho^h))\\ &=\rho^h(h)-F_{T(\rho)}(T(\rho^h))
\ge \rho^h(h)-F_\rho(\rho^h)=C_\rho(h).
\end{align*}
This implies (i) by Theorem \ref{thm:sufficient} (ii).  Finally, from (iii) and  $T_\rho\circ T(\rho)=\rho$, we have
\begin{align*}
C_\rho(h)&\ge \sup_{\omega_0\in \mathcal P_{T(\rho)}} 
T_\rho(\omega_0)(h)-F_\rho(T_\rho(\omega_0))\ge \sup_{\omega_0}
\omega_0(T_\rho^*(h))-F_{T(\rho)}(\omega_0)\\
&\ge T(\rho^h)(T_\rho^*(h))-F_{T(\rho)}(T(\rho^h))\ge \rho^h(T^*\circ T_\rho^*(h))
-F_\rho(\rho^h)=C_\rho(h).
\end{align*}
This shows that $F_\rho(\rho^h)=F_{T(\rho)}(T(\rho^h))$, which implies (i) by Theorem
\ref{thm:sufficient}.

\end{proof}

\begin{coro}\label{coro:suf}
Let $\rho\in \tilde\Fe$ and let $\mathcal E=\{\rho^h,\ h\in E_0\}$ for some subset
$E_0\subseteq E_{\exp}(\Me,\rho)$. Let $E\subseteq E_{\exp}(\Me,\rho)$ be the closed linear span of $E_0$. 
Let $T:L_1(\Me)\to L_1(\Ne)$ be a 2-positive channel sufficient with respect
to $\mathcal E$. Then 
\begin{enumerate}
\item[(i)] $T$ is sufficient for the exponential family $\mathcal E_\rho(E)$.
\item[(ii)] $T_\rho^*\vert_E$ is an isometric isomorphism of $E$
onto $T_\rho^*(E)$ and we have
$T(\mathcal E_\rho(E))=\mathcal E_{T(\rho)}(T_\rho^*(E))$, and
$T(\rho^h)=T(\rho)^{T^*_\rho(h)}$, for $h\in E$.
\end{enumerate}

\end{coro}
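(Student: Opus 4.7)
\textbf{Proof plan for Corollary \ref{coro:suf}.} The strategy is to reduce everything to Theorem \ref{thm:sufficient} via a continuity/linearity argument. By Proposition \ref{prop:monotone} (applied both to $T$ with state $\rho$ and to $T_\rho$ with state $T(\rho)$), the maps $T_\rho^*:\eexp(\Me,\rho)\to \eexp(\Ne,T(\rho))$ and $T^*:\eexp(\Ne,T(\rho))\to \eexp(\Me,\rho)$ are well-defined bounded linear contractions, so the composition $T^*\circ T_\rho^*$ is a bounded linear endomorphism of $\eexp(\Me,\rho)$ with operator norm at most $1$.

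The first key step is to upgrade sufficiency from $E_0$ to all of $E$. By hypothesis $T$ is sufficient for every pair $\{\rho,\rho^h\}$ with $h\in E_0$, so Theorem \ref{thm:sufficient}(iii) gives $T^*\circ T_\rho^*(h)=h$ for every $h\in E_0$. Since $T^*\circ T_\rho^*$ is linear and continuous and $E$ is by definition the closed linear span of $E_0$, the equality $T^*\circ T_\rho^*(h)=h$ persists for every $h\in E$. Reapplying Theorem \ref{thm:sufficient} in the direction (iii)$\Rightarrow$(i)+(ii), we conclude that for each $h\in E$ the channel $T$ is sufficient for $\{\rho,\rho^h\}$ and $T(\rho^h)=T(\rho)^{T_\rho^*(h)}$ with $T_\rho^*(h)\in \eexp(\Ne,T(\rho))$. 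This gives part (i) and the last displayed identity in part (ii).

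The second step is the isometry statement in (ii). For any $h\in E$, using the contraction property of $T^*$ and the identity $T^*\circ T_\rho^*(h)=h$ established above, we get
\[
\|h\|_{\exp,\rho}=\|T^*(T_\rho^*(h))\|_{\exp,\rho}\le \|T_\rho^*(h)\|_{\exp,T(\rho)}\le \|h\|_{\exp,\rho},
\]
so $T_\rho^*|_E$ is an isometry onto its image; in particular $T_\rho^*(E)$ is closed in $\eexp(\Ne,T(\rho))$, so $\mathcal E_{T(\rho)}(T_\rho^*(E))$ is a well-defined exponential family. The inclusion $T(\mathcal E_\rho(E))\subseteq \mathcal E_{T(\rho)}(T_\rho^*(E))$ is immediate from $T(\rho^h)=T(\rho)^{T_\rho^*(h)}$, and the reverse inclusion follows by writing an arbitrary element of $T_\rho^*(E)$ as $T_\rho^*(h)$ with $h\in E$ and invoking the same identity.

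There is no serious obstacle here: the proof is essentially bookkeeping on top of Theorems \ref{thm:sufficient} and Proposition \ref{prop:monotone}. The one point that requires care, and is the closest thing to a subtle step, is the upgrade from $E_0$ to $E=\overline{\mathrm{span}}(E_0)$, which relies crucially on the fact that the algebraic condition $T^*\circ T_\rho^*(h)=h$ in Theorem \ref{thm:sufficient}(iii) is preserved under linear combinations and norm limits in $\eexp(\Me,\rho)$; this is what makes passing to the closed linear span legitimate and is the reason the characterization (iii) (rather than, say, the entropy-equality form (ii)) is the useful one for this argument.
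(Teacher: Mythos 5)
Your proof is correct and is essentially the derivation the paper intends: Corollary \ref{coro:suf} is stated without a separate proof, as an immediate consequence of Theorem \ref{thm:sufficient}, and your route is exactly the natural one --- condition (iii) there, $T^*\circ T_\rho^*(h)=h$, is the fixed-point equation of a bounded operator on $\eexp(\Me,\rho)$ (a contraction, by Proposition \ref{prop:monotone} applied to $T$ at $\rho$ and to $T_\rho$ at $T(\rho)$, using $T_\rho\circ T(\rho)=\rho$), hence defines a closed subspace, so it passes from $E_0$ to the closed span $E$; the sandwich $\|h\|_{\exp,\rho}=\|T^*(T_\rho^*(h))\|_{\exp,\rho}\le\|T_\rho^*(h)\|_{\exp,T(\rho)}\le\|h\|_{\exp,\rho}$ then gives the isometry, and $T(\rho^h)=T(\rho)^{T_\rho^*(h)}$ (which indeed comes out of the proof of Theorem \ref{thm:sufficient}, where $h_0=T_\rho^*(h)$) gives both inclusions for $T(\mathcal E_\rho(E))=\mathcal E_{T(\rho)}(T_\rho^*(E))$.

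Two points you gloss over, both easily repaired. First, in part (i) you pass from ``$T$ is sufficient for each pair $\{\rho,\rho^h\}$, $h\in E$'' to sufficiency for the family $\mathcal E_\rho(E)$; by the paper's definition this requires a \emph{single} recovery channel, and you should invoke Theorem \ref{thm:suff} (i)$\Leftrightarrow$(iii): each pair-sufficiency is witnessed by the same fixed channel, the Petz dual $T_\rho$, so $T_\rho$ recovers all of $\mathcal E_\rho(E)$ simultaneously. Second, your opening reduction ``by hypothesis $T$ is sufficient for every pair $\{\rho,\rho^h\}$ with $h\in E_0$'' tacitly assumes the recovery channel for $\mathcal E$ also recovers $\rho$, i.e.\ that $\rho\in\mathcal E$ (equivalently $0\in E_0$). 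This convention is implicit in the corollary's statement (and in the paragraph preceding Theorem \ref{thm:sufficient}), but without it the claim can genuinely fail: for a singleton $E_0=\{h\}$ with $h\ne 0$, \emph{every} 2-positive channel is sufficient for $\{\rho^h\}$ (take $S(\nu)=\Tr[\nu]\,\rho^h/\rho^h(1)$), yet it need not be sufficient for $\{\rho,\rho^h\}$. If one does not assume $\rho\in\mathcal E$, the correct fix is to anchor the argument at some $\rho^{h_0}$, $h_0\in E_0$, using $\rho^h=(\rho^{h_0})^{h-h_0}$ and $\eexp(\Me,\rho)=\eexp(\Me,\rho^{h_0})$, which yields sufficiency over the closed \emph{affine} hull of $E_0$ rather than the linear span. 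With the convention $0\in E_0$ made explicit, your proof is complete.
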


\section{Conclusions}

We have constructed an exponential manifold structure over the set $\tilde \Fe$ of
faithful positive functionals on a von Neumann algebra, which in the commutative case
coincides with a restriction of the Pistone-Sempi construction. The manifold is based on
the Araki relative entropy  and its conjugate $C_\rho$, playing the role of the moment generating function
from the classical theory. We showed the relation of the obtained structures to Kosaki
$L_p$ spaces and proved an invariance property of the exponential manifold.  Note that the
function $C_\rho$ was only proved to be Gateaux
differentiable, so we do not get the full power
of the Pistone-Sempi construction. Nevertheless, the manifold admits a canonical
divergence satisfying a generalized Pythagorean relation.

%
%
%
\subsubsection*{Acknowledgements}

 The research was supported by the grants VEGA 1/0142/20 and  the Slovak Research and
Development Agency grant APVV-20-0069.


\end{document}